\documentclass[12pt]{article}

\usepackage{amsmath,amsfonts,amssymb,latexsym,graphicx,hyphenat}

\usepackage{caption}
\usepackage{subcaption}

\setlength{\textwidth}{450pt} \setlength{\textheight}{8.0in}
\setlength{\topmargin}{-1pt} \setlength{\oddsidemargin}{0pt}
\setlength{\evensidemargin}{0pt} \addtolength{\voffset}{0pt}
\addtolength{\footskip}{10pt}

\newtheorem{theorem}{Theorem}[section]
\newtheorem{corollary}{Corollary}[section]
\newenvironment{proof}[1][Proof]{\textsc{#1.} }{\ \rule{0.5em}{0.5em}}

\numberwithin{equation}{section}
\def\be{\begin{equation}}
\def\ee{\end{equation}}
\def\bq{\begin{eqnarray}}
\def\eq{\end{eqnarray}}
\def\beq{\begin{eqnarray}}
\def\eeq{\end{eqnarray}}

\begin{document}
\title{\textsc{Friedmann-Lema\^itre universes and their metamorphoses}}
\author{\Large{\textsc{Spiros Cotsakis}}$^{1,2}$\thanks{\texttt{skot@aegean.gr}}\\
$^{1}$Clare Hall, University of Cambridge, \\
Herschel Road, Cambridge CB3 9AL, United Kingdom\\  \\
$^{2}$Institute of Gravitation and Cosmology,  RUDN University\\
ul. Miklukho-Maklaya 6, Moscow 117198, Russia}
\date{May 2025}
\maketitle
\newpage
\begin{abstract}
\noindent We analyze the dynamics of the Friedmann-Lema\^itre universes taking into account the different roles played by the fluid parameter and the cosmological constant, as well as the degenerate character of the equations. We find that the Friedmann-Lema\^itre system reduces to four qualitatively inequivalent  normal forms and write down the sets of all stable perturbations that may result (the `versal unfoldings'). These sets are of small codimension up to three. We then describe all possible parameter-dependent solutions and their transfigurations to other forms during evolution through the  bifurcation sets, these are also fully described. This analysis leads to a picture of cosmological evolution determined by new parameters related to codimension which are zero in  standard cosmology. The emerging versal solutions are all free of singularities, while other properties of them  are also discussed.
\end{abstract}
\newpage
\tableofcontents
\newpage
\section{Introduction}
The standard model of physical cosmology is based on the \emph{Friedmann-Lema\^{i}tre} (hereafter `FL') equations which describe different evolutionary aspects of the universe and enable to host a great variety of further physical effects, cf. e.g., \cite{weinberg1}-\cite{bau} and refs. therein. There are many cosmological models that  can be built as solutions of the FL equations, all homogeneous and isotropic, and all have a duality associated with them, either they are `evolving', or they are in a statistically steady state. Many of the FL models currently favored by observations  started from a big bang,  a thermal state of infinite curvature and temperature, and have a phase of acceleration either `initially' or currently (or both). There are also models that may be said to have `started' near the Einstein static solution, or with quasi-de Sitter space as in many inflationary theories.  Using the FL equations one may also proceed to perturb a given exact solution and study the evolution of inhomogeneities, in turn leading to a theory of small fluctuations for the observed structures either in the microwave sky, or in baryonic matter, dark matter, etc, also having acquired  importance for and from the early stages in the evolution of the universe.

The mathematical notion of \emph{stability} of solutions of the FL equations in its many forms plays a central role in deciding about the plausibility and direct physical relevance of many of the cosmological ideas above throughout the cosmological development over a period of more than 100 years (cf. \cite{cy22}-\cite{cmm23} and refs. therein). Stability studies inside the standard model include not only those of de Sitter space, inflationary solutions,  the Einstein static universe, or the study of perturbations, but also various model solutions outside it (cf. the references). If a cosmological model is based on an unstable solution of the FL equations, then its plausibility as a candidate physical cosmology is greatly lost, and one is usually driven to seeking other models with better stability properties to base their studies of further physical effects. The  definition and application  of stability as is used in cosmology involves notions that belong to the  so-called `hyperbolic' theory of dynamical systems, cf. \cite{hs}-\cite{du0}. In this sense, stability in its various usages in cosmology usually  refers to linearized stability, a particular form of stability which states that an equilibrium solution of the FL equations (or other cosmological dynamical systems for that purpose) is linearization stable if all eigenvalues of the linear part of the system  have negative real parts. Otherwise the solution is unstable (cf. e.g., \cite{hs}, chap. 9).

Studied this way, the FL equations lead to cosmologies with usually complicated hyperbolic equilibria, and impose two important conditions on all universe models they include: firstly, cosmological models become qualitatively equivalent to their linear parts, and secondly, nearby systems behave like their unperturbed counterparts. In other words, FL cosmologies are usually  treated as being \emph{structurally stable}. This view leads to only three possible behaviours associated with any cosmological solution studied this way, stable, unstable, or saddle-like (a particular kind of unstable situation that also includes some stable orbits). That is, either there is nothing new and the universe returns to its unperturbed initial state, or the system diverges and is ripped apart, or there is an unstable mixture of the two. In such a situation the universe is essentially always `trapped in itself' being unable to pass to another, qualitatively different, state in the course of its evolution.

There is a great difference between the above treatment and the main message of this paper, which is that the FL equations are structurally \emph{unstable}. We propose that the FL equations describe phenomena which are essentially nonlinear and `dispersive' and  must be treated in a way that this feature does not disappear altogether, as in a structurally stable approach, but persists. In particular, the observed structures in the universe could never have arisen  if gravity was a structurally stable phenomenon, and this has important implications for cosmology.

In this paper, we highlight the different roles played by the two parameters, the fluid parameter  $\gamma$, and the cosmological constant $\Lambda$, and study the FL equations as a system that is essentially structurally unstable. We show that these equations belong to the most degenerate kind of dynamical systems in the plane having linear part that has two zero eigenvalues, and subsequently  classify the different kinds of degeneracy into four main types allowed by the combined effect of $\gamma$ and the symmetries of the FL equations, in particular, the presence or absence of a time-symmetry. We then embed the resulting systems in four inequivalent topological normal forms, the so-called  `versal families' - sets of equations which contain complete information about the set of all stable perturbations of the original equations - and discuss the corresponding bifurcation diagrams. These diagrams represent the solution sets of all possible perturbations of the original FL equations, and clearly depict all possible transfigurations - or metamorphoses - of the solutions. A sort discussion of the main physical effects associated with the constructions of the final `versal cosmologies' is also provided.

The plan of this paper is as follows. In Section \ref{rev}, we review some of the standard solutions of the FL equations and present a critical discussion of their linearized stability properties. In Section \ref{fl-bifn1}, we motivate the need for a bifurcation approach to the FL equations and also discuss the different nature of the parameters involved in these equations. In Sections \ref{cusp}-\ref{gam}, we discuss in more detail the  FL normal forms and present their versal unfoldings. The FL bifurcation diagrams are discussed in Section \ref{bifn-dia}, and we further discuss certain aspects of these results in the last Section of this paper.

We direct the reader to the introductions in the beginnings of the  sections of this paper for a more detailed description of their contents. Some of the most important points of this work are: 1) the two theorems in Sections \ref{flnf}, \ref{flva}, 2)  the role of the fluid parameter $\gamma$ as discussed in Section \ref{roleg} and the meaning of Fig. \ref{t-gamma}, and 3) the discussion of the final bifurcation diagrams in Section \ref{b-dia} including the elucidation about the rough character of the versal solutions in Section \ref{rough}. In addition, the reader may find the discussion in Section \ref{dis} useful.

\section{Brief review of the FL equations and their basic solutions}\label{rev}
In this Section, we first review some important solutions of the FL equations for different ranges of values of the constants $\gamma,\Lambda$, that are typically obtained when seeking analytic expressions and closed-form solutions of the equations. We then motivate the use of dynamical system formulations for the FL equations and  discuss the important notion of linearized stability, that is the stability  of the solutions by looking at the linearized equations around some equilibrium solution. We finally discuss the importance of the $\gamma,\Lambda$ `constants' and the limitations of such `hyperbolic-equilibrium' formulations for the FL equations. This discussion naturally leads to  the necessity  of a bifurcation treatment of the FL cosmology.

\subsection{Some known FL solutions}\label{fl-solns}
The  FL equations can be directly obtained from the Einstein equations for a Friedmann metric with scale factor $a(t)$ as a function of the proper time $t$,  a perfect fluid source with density $\rho$ and pressure $p$, and a cosmological constant $\Lambda$. In their `primitive' form they read,
\beq
&&\frac{\ddot{a}}{a}+\frac{1}{6}(\rho+3p)-\frac{1}{3}\Lambda=0\label{fl1}\\
&&\left(\frac{\dot{a}}{a}\right)^2+\frac{k}{a^2}-
\frac{1}{3}\rho-\frac{1}{3}\Lambda=0\label{fl2}\\
&&\dot{\rho}+3\frac{\dot{a}}{a}(\rho+p)=0\label{fl3}.
\eeq
Here we have set $8\pi G=c=1$, and $k=0,\pm 1$ denotes the normalized constant curvature of the spatial 3-slices.  These are the standard equations of isotropic and homogeneous cosmology, while the case $\Lambda=0$ corresponds to the \emph{Friedmann} equations.

As discussed in the Introduction, the standard model of cosmology  is based on the FL equations (\ref{fl1})-(\ref{fl3}), a fact which shows the importance attached to these equations as clearly demonstrated by the abundance of related sources devoted to their study and physical effects based on them. In fact, the FL equations  and their solutions also play an important role in any attempt to go \emph{beyond} the standard cosmological model, cf. e.g., the  review \cite{cy22} where the development of FL-cosmology is placed in a more general scheme of things cosmological.

There are two main methods to study the FL equations  (cf. e.g., \cite{weinberg2}, pp. 8-55, \cite{pu09}, pp. 138-50, \cite{we97}, pp. 51-67). The first  aims to determine analytic expressions of the solutions. It usually starts by  noting that  these equations are \emph{not} independent, for instance, Eqns. (\ref{fl2}), (\ref{fl3}) imply (\ref{fl1}). Then the usual strategy  consists of the following two steps: (1) study the continuity equation (\ref{fl3}) for the equation of state $p=(\gamma -1)\rho$ ($\gamma$ is  the fluid or state parameter, related to the square of the sound speed) by writing it in the `integrated form' $\rho=C a^{-3\gamma}$ ($C$ is an integration constant), valid for any constant $\gamma$ (i.e., the fluid is barotropic), and (2) solve the Friedmann equation (\ref{fl2}) to determine $a(t)$. The following important solutions - among many others - of the FL equations have been obtained using this method:
\begin{enumerate}
  \item The $k=+1$ Einstein static universe with nonzero $\Lambda$.
  \item The $k=0$, empty de Sitter space with nonzero $\Lambda$.
  \item The $k=0$ Friedmann universe with zero $\Lambda$.
  \item The $k=-1$ Milne universe  with zero $\Lambda$, and $p=\rho=0$.
\end{enumerate}
From these solutions, a number of other  universes can be obtained when the constants $\gamma, \Lambda$ take special values or are constrained in certain intervals. For future reference and comparison, we note the following important classes of solutions to the FL equations associated with specific ranges of the fluid parameter $\gamma$:
\begin{itemize}
 \item  $\gamma\geq 1$: this is the standard cosmology case (dust and radiation models belong here).
\item $\gamma<1$: this is the standard negative pressure case. An expanding universes releases energy with expansion.
\item $\gamma\in(0,2/3)$: Closed oscillating universes where oscillations slowly decay leading to closed static stable universes.
\item $\gamma=2/3$: such a universe contains a dark component that behaves as a curvature-like fluid.
\item $\gamma=0$: this is an interesting situation of an expanding universe of constant density. This includes a flat, steady state model, and also the inflationary universe.
\item $\gamma<0$: tension (as negative pressure) exceeds the energy density leading to models where the density increases/decreases with expansion/contraction.
\end{itemize}
In addition, the \emph{Einstein-de Sitter} solution is found when $\gamma=1$, the case of `dust', or \emph{cold matter}. The radiation-filled, also known as \emph{hot matter}, universes having $\gamma=4/3$ exist for all values of $k$, while further special solutions exist for special values of $k, \gamma,$ and/or $\Lambda$, for instance, the \emph{Eddington-Lema\^itre} models, and the \emph{Lema\^itre} universes which have transient stages near the Einstein static or de Sitter solutions (cf. the references). Solutions to the FL equations are called \emph{ordinary (resp. dark)} matter-energy solutions when certain extra inequalities are satisfied, namely, $\rho+3p>0$ (resp. $<0$). Specifically, one sometimes refers to solutions of the FL equations satisfying the inequality constraint $\gamma<2/3$ as \emph{dark energy}. For example, one refers to a fluid with $\gamma=-2/3$ as a particular case of \emph{phantom energy} (cf. e.g., \cite{weinberg2}, p. 55).

We  note that from the Raychaudhuri equation (\ref{fl1}) we may obtain a useful criterion for the existence of \emph{accelerating} expanding solutions to the FL equations. Introducing the deceleration parameter $q=-\ddot{a}/(aH^2)$, and dividing Eq. (\ref{fl1}) by $H^2$,  one has accelerating (i.e., bending-up $\ddot{a}>0$) solutions with $q<0$, satisfying Eq. (\ref{fl1}) in the form,
\be
q=\frac{3\gamma-2}{2}\,\Omega_M -\Omega_\Lambda,
\ee
where the standard definitions for the \emph{energy parameters} $\Omega_M, \Omega_\Lambda$ for matter and cosmological constant respectively are used. So when $\Lambda=0$,  a dominating cosmic fluid inducing acceleration requires $3\gamma-2<0$, in compliance with the dark energy constraint. On the other hand, there are obvious accelerating solutions having only   `vacuum' $\Lambda>0$, the simplest form of dark energy. Inflationary dynamics, now part of the standard model of cosmology, can also be described using such accelerating solutions, cf. \cite{pu09}, chap. 8.

\subsection{Dynamical systems formulations}
Another  approach to the study of the FL equations (\ref{fl1})-(\ref{fl3}) is based on the important idea that certain of the known solutions, for instance the de Sitter, Einstein static universe (ESU), Milne, flat solutions, may be regarded as \emph{equilibrium solutions} in a dynamical system formulation, and so each one of those become a kind of `organizing center' for the evolution of all others.

A little reflection shows that the equations (\ref{fl1})-(\ref{fl3}) can be written as an autonomous evolution  system of two first-order ordinary differential  equations with an extra algebraic constraint. For the dynamical variables $(H,\rho)$, where $H=\dot{a}/{a}$ is the Hubble parameter, one obtains the following  dynamical system in terms of the $(H,\rho)$-variables,
\beq
\dot{H}&=&-\frac{3\gamma-2}{6}\rho+\frac{\Lambda}{3}-H^2 \label{ds1}\\
\dot{\rho}&=&-3\gamma H\rho, \label{ds2}
\eeq
and the algebraic constraint
\be
a^2=\frac{3k}{\rho+\Lambda-3H^2}. \label{ds3}
\ee
This form of the FL equations will be discussed in more detail later. Of course, this reduction of Eqns. (\ref{fl1})-(\ref{fl3}) is highly  non-unique, other choices of variables are possible and lead to dynamical systems of different forms, cf. e.g., \cite{we97}, or, \cite{pu09}.

However, the form obtained in Eqns. (\ref{ds1})-(\ref{ds2}) will prove most useful for our purposes, a  reason for this being that this form of the FL equations separates the unknowns (i.e., the phase space coordinates $H,\rho$) from the parameters $\gamma, \Lambda$. The important role that the fundamental Friedmann equation (\ref{ds3})  plays in understanding  the existence of rough (i.e., non-smooth, or even discontinuous) solutions to the `versal' versions of the  system (\ref{ds1})-(\ref{ds3}) will be discussed in Section \ref{rough}.

A common aspect of different dynamical systems formulations of the FL equations usually considered in the literature (either in standard relativistic cosmology, or in modified gravity theories, string cosmology, etc), cf. e.g., \cite{bogo,we97,cop18}, is that they are all based on the use of the \emph{hyperbolic theory} of dynamical systems cf. eg., \cite{hs,arny78,ap92}. These tools are of a qualitative, topological nature, and are useful for two reasons:
\begin{itemize}
\item They implement the existence of organizing centers associated with the dynamics of the FL equations referred to above as \emph{equilibrium solutions} of these equations.
\item They help decide about the \emph{linearized stability} of these equilibria, and hence offer a very valuable approach to one of the most mysterious and central problems in theoretical cosmology, namely,  the long-term behaviour of solutions of the FL equations near their equilibria.
\end{itemize}
For example, these `newer' methods have been applied to various cosmological contexts in this sense of proving linearized  stability and related results:  to complement and extend older works about the (in-)stability of the Einstein static universe, or the stability of de Sitter space, or deduce novel results about the behaviour of entropy, the possibility of singularity-free solutions, as well as other important properties of the FL equations, but also in the more general  Bianchi cosmologies generalizing the FL solutions (cf. e.g., \cite{harrison67}-\cite{cmm23}, the reviews \cite{we97,cop18,bogo}, and Part B, Sect. i, of \cite{cy22}, and refs. therein).

\subsection{Linearized stability}\label{lin-stab}
It is interesting that the Einstein static universe and de Sitter space appear as \emph{equilibrium solutions} of the FL equations (\ref{fl1})-(\ref{fl3}) in their dynamical system formulation (\ref{ds1})-(\ref{ds2}). The system (\ref{ds1})-(\ref{ds2}) has two equilibria:
\begin{enumerate}
\item \textbf{EQ-1a,b: Flat and empty de Sitter space.}
    \begin{equation}\label{ds-eq}
    \begin{split}
    H&=+\sqrt{\frac{\Lambda}{3}},\\
    H&=-\sqrt{\frac{\Lambda}{3}},
    \end{split}
    \end{equation}
with the $+$ (resp. $-$) sign for EQ-1a (resp. b), $\Lambda>0$, any $\gamma$, $\rho=0$, and $k=0$ .
\item \textbf{EQ-2: Einstein static universe.}
       \be\label{esu-eq}
    H=0,\quad\rho=\frac{\Lambda}{3\gamma-2},
    \ee
for any $\Lambda$, and $\gamma\neq 2/3$.
\end{enumerate}
These equilibria are of course very well-known and exist only when $\Lambda\neq 0$. In the case of the ESU, to keep $\rho\geq 0$ we require either Case 1: $\Lambda\geq 0,\gamma>2/3$, or Case 2: $\Lambda< 0,\gamma<2/3$. These equilibria exist together only in case 1, whereas in Case 2 only the ESU equilibrium exists. When $\gamma=2/3,$ we only have the  de Sitter space as an equilibrium.

To simplify the subsequent analysis, we shall move the equilibria EQ1, EQ2 to the origin and write  the equations in a form that is easier to work with. This is simply done by a linear change of the dynamical variables. For EQ1, we set $h=H\mp \sqrt{\Lambda/3}$ and   the system (\ref{ds1})-(\ref{ds2}) `splits off' the linear from the nonlinear part and gives,
\be \label{ds-sys}
\left(
  \begin{array}{c}
    \dot{h} \\
    \dot{\rho} \\
  \end{array}
\right)
=
\left(
  \begin{array}{cc}
    \mp\sqrt{\frac{\Lambda}{3}} & -\frac{(3\gamma-2)}{6} \\
    0 & \mp 3\gamma\sqrt{\frac{\Lambda}{3}} \\
  \end{array}
\right)
\left(
  \begin{array}{c}
    h \\
    \rho \\
  \end{array}
\right)
+
\left(
  \begin{array}{c}
    -h^2\\
-3\gamma h\rho \\
  \end{array}
\right).
\ee
The first term on the right of this equation is the linear part of the `FL vector field' defined by Eqns. (\ref{ds1})-(\ref{ds2}) at the origin (the de Sitter space equilibrium).  Similarly, for EQ2 and for $\gamma\neq 2/3$, we set $w=\rho-2\Lambda /(3\gamma-2)$, and obtain the following system for the new variables $(H,w)$ with the linear part evaluated at the origin (i.e., the ESU),
\be \label{esu-sys}
\left(
  \begin{array}{c}
    \dot{H} \\
    \dot{w} \\
  \end{array}
\right)
=
\left(
  \begin{array}{cc}
    0 & -\frac{(3\gamma-2)}{6} \\
    -\frac{6\gamma\Lambda}{3\gamma-2} & 0 \\
  \end{array}
\right)
\left(
  \begin{array}{c}
    H \\
    w\\
  \end{array}
\right)
+
\left(
  \begin{array}{c}
    -H^2\\
-3\gamma Hw \\
  \end{array}                                                                                                                                       \right).
\ee
Both systems (\ref{ds-sys}), (\ref{esu-sys}) have linear parts (given by the first terms on their right-hand-sides) equal to the linear parts of the FL vector field at the corresponding  origins, and so they are obviously equivalent to the FL equations (\ref{ds1})-(\ref{ds2}) near their respective origins. (For the equilibrium EQ1, we shall only treat the case with the minus sign in front of the diagonal elements of the Jacobian in Eq. (\ref{ds-sys}), the other case is very similar.)

To decide on the linearized stability of the de Sitter space and of the Einstein static universe equilibria, one merely needs to calculate the eigenvalues of the Jacobians evaluated at these two equilibria of the systems (\ref{ds-sys}), (\ref{esu-sys}) respectively. In this way, one thinks of the linear parts of the vector fields in (\ref{ds-sys}), (\ref{esu-sys})  as approximating the FL vector field defined by Eqns. (\ref{ds1})-(\ref{ds2}) near the origin in each case.

Under this assumption one immediately concludes that for $\Lambda, \gamma >0$,  both eigenvalues of the Jacobian in (\ref{ds-sys}) are indeed negative, thus de Sitter space is a (nonlinear) sink attracting all nearby solutions, whereas the eigenvalues of the Jacobian in (\ref{esu-sys}) are $\pm\sqrt{\Lambda\gamma}$, thus the ESU is a saddle. Since nonlinear sinks behave like linear sinks, nearby solutions approach the origin exponentially and therefore   de Sitter space is a stable solution in the linear approximation. Similarly, since saddles are unstable, it follows that the ESU is an unstable solution of the FL equations in the linear approximation (this conclusion was already known by examining the system (\ref{ds1})-(\ref{ds2}) directly,  and clarified the usual meaning of the phrase `the ESU is unstable', cf. e.g., \cite{gibbons87} and refs. therein). There is also another case, not usually considered in linearized studies about the issue of the  `stability of the ESU', that is when $\Lambda<0$. Following our discussion after Eq. (\ref{esu-eq}), when $\Lambda<0$ we may distinguish two further subcases of Case 2: In the Case 2a: $\Lambda< 0,\gamma<0$, the ESU is again a saddle like before. However, in the Case 2b: $\Lambda< 0,0<\gamma<2/3$, we  have a pair of nonzero purely imaginary eigenvalues and so this subcase corresponds to a situation that has to remain undecided in the linear approximation. For as it follows from Eq. (\ref{esu-sys}) the origin is `linearly a centre' and this leads to the question of the \emph{structural} stability of the Eq. (\ref{esu-sys}), namely, whether this equation is stable with respect to perturbations which will generally alter the structure of the phase space of the problem in other nontrivial ways (for instance if it is a centre, the phase curves are ellipses in this case, and the presence of higher-than-linear `perturbing' terms in (\ref{esu-sys}) may be expected to completely destroy the ellipses).

The presence of the original second-order terms in the Eqns. (\ref{ds-sys}), (\ref{esu-sys}) is in a sense just `symbolic' here, we have simplified (in the sense of reaching the normal form)  only the linear parts by moving the equilibria to the origin (it so happens that the linear part so obtained is already in Jordan form). The most important problem to be addressed  in this paper is  to  fully understand the nature of the second-order terms and the possible influence of  higher-order terms in the systems (\ref{ds-sys}), (\ref{esu-sys}). In fact, this problem is not really relevant for linearized stability studies of the FL equations (and vice versa!) since for such studies one is only interested exclusively in the \emph{linear} part of the vector field.

\subsection{Some preliminary comments}
There are two somewhat concealed aspects which naturally arise in connection with the FL equations  (\ref{ds1})-(\ref{ds2}) (or (\ref{ds-sys}), (\ref{esu-sys})) that deserve special attention, we briefly describe them here. Firstly, as we discussed in Section \ref{fl-solns}, interesting and physically relevant solutions of the FL equations may occur for a wide variety of different values of $\gamma, \Lambda$. Therefore,  since  $\gamma, \Lambda$ are not really known exactly and also acquire different values during cosmological history satisfying relevant physical constraints, it appears more reliable not to restrict their ranges \emph{ad hoc} as in searches for linearized stability, but instead treat them as \emph{parameters}, that is continuous variables independent of the time, i.e., $\dot{\Lambda}=\dot{\gamma}=0$. This will allow us to carefully monitor the influence  they might have on cosmological phenomena as these parameters change. It will also lead us to discover the true relevance of \emph{nonlinear} stability for cosmology in the context of such a complicated set of dynamical equations as the FL system.

Secondly, to apply the theory of hyperbolic systems (leading to linearized (in-) stability) to the FL equations presupposes that for the stated ranges of the parameters $\gamma,\Lambda$ near the equilibria, these equations behave like their linear parts (cf. the linear parts in the Eqns. (\ref{ds-sys}), (\ref{esu-sys})). This is of course an extra assumption whose validity entirely depends on whether or not the assumed hyperbolicity of a given equilibrium is violated for some nearby value of the parameters due to the possible presence of \emph{degeneracies}. If it does it dramatically affects the conclusion, and  the whole approach of linearized stability and  general application of hyperbolic theory may not be an appropriate tool  to decide on the stability of the given equilibrium (an equilibrium is \emph{non-degenerate} of 0 is not an eigenvalue).

We re-emphasize the subtle fact that having to examine some special system that arose in some area of application (such as the FL equations presently) is quite different from  developing the general hyperbolic theory of dynamical systems - these are \emph{already assumed} to be hyperbolic for this purpose, cf. \cite{hs,arny78,ap92}. Dynamical equations that arise in gravitational studies very rarely contain hyperbolic equilibria without extra assumptions (like above), and when they do they represent an exception that proves the rule. To deal with  nonhyperbolic cases a whole branch of mathematics is available, namely, bifurcation theory with all of its methods and subfields.

We shall employ some background about the dynamics and geometry of degenerate systems  and bifurcations as we go along with the analysis of the FL equations at different stages in this work. For sources on bifurcation theory at various levels, we refer the reader to \cite{du0}-\cite{stro}, while for an introduction to bifurcation theory notions and results in a gravitational setting, see \cite{cot23}-\cite{cot25b}.

\section{The FL equations as a bifurcation problem}\label{fl-bifn1}
In this Section, after a discussion of the different nature of the two distinguished parameters of the problem, $\gamma,\Lambda$, we write down the four inequivalent forms of the FL equations explicitly exhibiting the different possible types of degenerate behaviour associated with these equations. This leads to a statement of the normal forms of the FL equations as well as their versal unfoldings, the topologically inequivalent differential systems which contain all possible perturbations of the original FL equations.

\subsection{Nature of $\gamma$ and $\Lambda$}
To begin appreciating the different roles played by the two distinguished parameters $\gamma,\Lambda$ of the FL equations (\ref{ds-sys}), (\ref{esu-sys}), we may consider one of them as fixed, vary the other, and check for possible degeneracies and/or multiplicity of steady state solutions. Instead of restricting the parameters to obtain complicated hyperbolic equilibria, we ask: is there a parameter value such that the Jacobians in Eqns. (\ref{ds-sys}), (\ref{esu-sys})) have some zero eigenvalue? (We note that if the answer to this question was negative, then hyperbolic analysis would be \emph{the} tool to use.) If the answer is positive, the next step is to look for possible variations in the number of equilibria as the parameters change. We prove the following
\begin{theorem}\label{lm}
The parameters $\gamma,\Lambda$ which appear in  the FL equations (\ref{fl1})-(\ref{fl3}) (or equivalently (\ref{ds1})-(\ref{ds3})) are as follows\footnote{We note that there is another kind of bifurcation parameter, a \emph{modular coefficient} (or simply  `moduli'). This is  one that is related to inequivalent bifurcation diagrams. In general, in a bifurcation problem the presence of moduli complicates matters because for each value of a moduli corresponds a different, inequivalent diagram. We shall meet such  moduli parameters because they will be present in all versal unfoldings below. The fluid parameter $\gamma$ is not a modular coefficient, as it does not appear in the versal families. However, it is not an unfolding parameter either, it is just a \emph{distinguished} parameter of the problem, i.e.,  a parameter that appears in the original equations (cf. \cite{golu2} for this terminology).}:
\begin{enumerate}
\item $\gamma$ is not a bifurcation parameter, and
\item $\Lambda$ is an unfolding parameter.
\end{enumerate}
\end{theorem}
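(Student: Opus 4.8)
The plan is to read off both claims directly from the equilibrium structure and the Jacobian spectra of the system (\ref{ds1})--(\ref{ds2}) regarded as \emph{functions of the pair} $(\gamma,\Lambda)$, and then to decide, parameter by parameter, whether its variation deforms the maximally degenerate linear part off its critical configuration (the hallmark of an unfolding direction) or leaves that configuration intact (so that it can organize no bifurcation of a fixed normal form). Most of the spectral data are already in hand: at the de Sitter equilibrium the linear part in (\ref{ds-sys}) is triangular with eigenvalues $-\sqrt{\Lambda/3}$ and $-3\gamma\sqrt{\Lambda/3}$, while at the Einstein static equilibrium the linear part in (\ref{esu-sys}) has eigenvalues $\pm\sqrt{\gamma\Lambda}$. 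First I would tabulate, as $(\gamma,\Lambda)$ range, where a zero eigenvalue, and in particular a \emph{double} zero, appears, since the latter locates the organizing center of the whole problem.

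To prove claim (2) I would observe that the two-fold zero eigenvalue at the origin occurs precisely when $\Lambda=0$: there (\ref{ds-sys}) reduces to the nilpotent linear part $\bigl(\begin{smallmatrix} 0 & -(3\gamma-2)/6\\ 0 & 0\end{smallmatrix}\bigr)$, \emph{independently} of $\gamma$ (for $\gamma\neq 2/3$), and moreover the de Sitter pair EQ-1a,b and the Einstein static equilibrium EQ-2 all collide there at $(H,\rho)=(0,0)$. For $\Lambda\neq 0$ (and away from $\gamma=0$) all four eigenvalues above leave zero and the equilibria separate and become hyperbolic. Hence $\Lambda$ is exactly the parameter whose variation carries the maximally degenerate FL field to the nearby nondegenerate ones, i.e.\ it supplies a transversal deformation direction; together with the explicit versal families constructed in Sections \ref{flnf} and \ref{flva}, in which $\Lambda$ enters as an unfolding direction, this identifies $\Lambda$ as an unfolding parameter.

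For claim (1) the key point is that at the organizing center $\Lambda=0$ the spectrum stays a double zero for \emph{every} $\gamma\neq 2/3$: varying $\gamma$ only rescales the single surviving entry $-(3\gamma-2)/6$ and the quadratic coefficients in (\ref{ds-sys}), and a linear rescaling of $(h,\rho)$ together with a rescaling of time absorbs this dependence up to topological equivalence (exactly as in the standardization of a Bogdanov--Takens-type quadratic part, where the admissible nonzero coefficients are normalized to $\pm 1$). Consequently $\gamma$ cannot resolve the degeneracy and cannot appear as a direction in the versal family, so it organizes no bifurcation of any fixed normal form. What the distinguished values $\gamma=2/3$ (the off-diagonal entry, hence the whole linear part, vanishes) and $\gamma=0$ (where $\dot\rho\equiv 0$ and the isolated equilibria dissolve into a curve) do instead is to switch between the four inequivalent normal-form \emph{types}; thus $\gamma$ is a distinguished parameter that labels the type rather than a bifurcation parameter.

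The main obstacle is this last step: one must separate cleanly ``changing the type of the normal form'' (what the special values of $\gamma$ achieve) from ``unfolding a fixed normal form'' (what $\Lambda$ achieves), and then verify rigorously that \emph{within} each fixed type the residual $\gamma$-dependence is genuinely removable by an admissible equivalence, so that $\gamma$ does not enter the versal unfolding at all. That is precisely what the explicit coordinate and time rescalings of the normal-form reduction in Sections \ref{flnf}--\ref{flva} must certify; the spectral bookkeeping above only sets the stage for that reduction and pins down the locus $\Lambda=0$ as the singularity being unfolded.
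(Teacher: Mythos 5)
Your argument for claim (2) is essentially the paper's: you locate the double-zero degeneracy exactly at $\Lambda=0$ (independently of $\gamma$), note the collision of the equilibria EQ-1a,b and EQ-2 at the origin there, and read off $\Lambda$-variation as the transversal deformation; the paper phrases the same thing as an explicit count of equilibria as $\Lambda$ passes from negative through zero to positive (your side remark that for $\Lambda\neq 0$ the equilibria "become hyperbolic" is not quite right -- for $\Lambda<0$, $0<\gamma<2/3$ the ESU has purely imaginary eigenvalues -- but this does not affect the conclusion). The genuine gap is in claim (1). In the paper's usage, a bifurcation parameter must do two things: produce a singularity (zero eigenvalue, failure of the implicit function theorem) at some value, \emph{and} change the number of equilibria there. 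Your dichotomy -- "deforms the degenerate linear part off its critical configuration" versus "leaves it intact" -- only tests the first half, and only on the slice $\Lambda=0$. But $\gamma$ does produce singularities away from that slice: at $\gamma=0$ with $\Lambda\neq 0$ fixed, the de Sitter Jacobian has eigenvalues $\mp\sqrt{\Lambda/3},\,0$ and the ESU Jacobian is nilpotent. This is precisely the paper's "singular case", and it is why the whole burden of claim (1) falls on the second test: at $\gamma=0$ the equilibrium equations degenerate to the curve $3H^2=\rho+\Lambda$ (your "isolated equilibria dissolve into a curve"), and the paper disposes of this continuum by invoking the Friedmann constraint (\ref{ds3}), which forbids $3H^2=\rho+\Lambda$, so that the admissible equilibria remain de Sitter and the ESU on both sides of $\gamma=0$ and their number never changes. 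You noticed the dissolution but filed it under "switching normal-form types" and moved on; without the constraint argument (absent from your proposal) you have not excluded that this change in the equilibrium set is itself a bifurcation, so claim (1) is left unproved.

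Moreover, the step you defer to the normal-form sections -- that "within each fixed type the residual $\gamma$-dependence is genuinely removable by an admissible equivalence, so that $\gamma$ does not enter the versal unfolding at all" -- is false, and the paper never asserts it. After the rescalings of Sections \ref{cusp}--\ref{gam}, $\gamma$ survives in discrete signs and in a genuine modulus: the sign of $C=12/(3\gamma-2)$ fixes the $\pm$ of the cusp family (\ref{va-tot}), the sign of $A=-108\gamma/(3\gamma-2)$ flips at $\gamma=0$ and separates the topological saddle from the focus/elliptic cases (Corollary \ref{cor-2}), and the coefficient $B=6(3\gamma+2)/(3\gamma-2)$ persists and distinguishes the focus ($B<2\sqrt{2}$) from the elliptic domain ($B>2\sqrt{2}$). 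Indeed Section \ref{roleg} emphasizes that bifurcation diagrams for different $\gamma$ values are topologically \emph{inequivalent}. The theorem's footnote makes the correct, subtler point: $\gamma$ fails to be a bifurcation parameter not because its dependence can be transformed away, but because its variation never alters the number of constraint-admissible equilibria; it remains a distinguished parameter whose value selects among inequivalent diagrams.
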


\begin{proof}
There are two cases to consider as one of the parameters $\gamma$ or $\Lambda$ varies and the second remains fixed.

\textbf{A. The singular case: $\gamma$ varies, $\Lambda$ is fixed.}

We examine the two Eqns. (\ref{ds-sys}), (\ref{esu-sys}) separately.  Using the FL equations in the form (\ref{ds-sys})  it is straightforward to see that when $\gamma=0$ and $\Lambda$ is fixed and nonzero, the eigenvalues are $\mp\sqrt{\Lambda/3}, 0$, so there is always a zero eigenvalue. This means that the assumptions of the  implicit function theorem are violated and so the critical point in question is a singularity (i.e., not a regular value according to that theorem). It remains to look at the possible multiplicity of solutions at $\gamma=0$ (if $\gamma$ is to quantify as a bifurcation point). Since by the constraint (\ref{ds3}) we have that  $3H^2\neq \rho+\Lambda$, using the Eqns. (\ref{ds1}), (\ref{ds2}) we either find de Sitter space or the ESU in the forms (\ref{ds-eq}), (\ref{esu-eq}). Then when $\gamma\neq 0$, from Eq. (\ref{ds2}) we have that either $H=0$, in which case we get the equilibrium (\ref{esu-eq}), or $\rho=0$ and we get the equilibrium (\ref{ds-eq}). So $\gamma$ cannot be a bifurcation parameter for the Eq. (\ref{ds-sys}).

For the case of the system (\ref{esu-sys}), the proof proceeds as before with the only difference that when $\gamma=0$ and $\Lambda$ is fixed, the linear part is nilpotent (a double zero eigenvalue).

Therefore in both cases, no change in the number of equilibria is found. Hence, although this is a \emph{singular} case in the sense of violating the regularity assumptions of the implicit function theorem (Jacobian degenerate), since the number of equilibria remains fixed and equal to two as the parameter $\gamma$ changes and passes through zero (for fixed $\Lambda$), there is no bifurcation.

\textbf{B. The bifurcation case: $\Lambda$ varies, $\gamma$ is fixed.}

We first note the  fact that when $\Lambda=0$ and $\gamma$ fixed, the Eqns. (\ref{ds-sys}), (\ref{esu-sys}) acquire an identical form, namely,
\be \label{deg-sys}
\left(
  \begin{array}{c}
    \dot{H} \\
    \dot{\rho} \\
  \end{array}
\right)
=-\frac{(3\gamma-2)}{6}
\left(
  \begin{array}{cc}
    0 & 1\\
    0 & 0 \\
  \end{array}
\right)
\left(
  \begin{array}{c}
    H \\
    \rho \\
  \end{array}
\right)
+
\left(
  \begin{array}{c}
    -H^2\\
-3\gamma H\rho \\
  \end{array}
\right).
\ee
In other terminology \cite{thom}, Eq. (\ref{deg-sys})  means that the FL equations have a \emph{unique organizing centre}. This fact allows for a unified bifurcation treatment of the full FL equations (\ref{ds1})-(\ref{ds3}).

To examine for bifurcating behaviour, as a first step one counts and identifies  the multiplicity of equilibrium solutions as $\Lambda$ changes.
When  $\gamma=2/3$, as $\Lambda$ passes through zero the number of solutions of the form (\ref{ds-eq}) jumps from zero when $\Lambda<0$ to one at $\Lambda=0$ (the trivial solution $H=\rho=0$),  to two for $\Lambda>0$ (the $\pm$ solutions for de Sitter). (The behaviour of the Eq. (\ref{deg-sys}) \emph{itself} is a different story, see below.) When $\gamma\neq 2/3$, the ESU equilibrium from Eq. (\ref{esu-eq}) will have to be  added to the previous counting for the $\Lambda\neq 0$ cases. Also the FL equations in the form (\ref{deg-sys}) represent  a fully degenerate two-dimensional system with linear part that has two zero eigenvalues. This is the nilpotent 2-matrix when $\Lambda\neq 0$, and when $\gamma= 2/3$ the linear part is the zero matrix (this is the only other  case with a double zero eigenvalue for planar systems). This implies that for different $\gamma$-values bifurcation diagrams will occur \emph{with modified codimension}: different numbers of  parameters will be needed to unfold the singularities associated with the different linear parts of the FL equations.

Therefore in all cases of fixed $\gamma$, we find that $\Lambda$ is a bifurcation parameter for the FL equations because it satisfies both conditions for this purpose, namely, giving both a singularity (degenerate Jacobian) at the origin,  and a bifurcation point at $\Lambda=0$. This completes the proof of the Theorem \ref{lm}.
\end{proof}

The role of the fluid parameter $\gamma$ in this problem will also  be treated later  where we shall show (as we have already indicated) that the bifurcation diagrams corresponding to different values of $\gamma$ are topologically inequivalent: for example, the case $\gamma=2/3$ compared to $\gamma \neq 2/3$, cf. Section \ref{roleg}. Despite the different roles played by $\gamma$ and $\Lambda$ as per the  Theorem \ref{lm} and the bifurcation results of the following Sections of this paper, there is a physical ambiguity still remaining because of the well-known fact that the case $\gamma=0$ corresponds to just a cosmological constant. Although it is a trivial algebraic fact to associate one to the other (cf. e.g., \cite{weinberg2}, pp. 43-4), and so it would seem  somewhat pedantic to discuss it any further,  this standard identification generally corresponds to a non-dynamical, or `static',  equivalence of the vacuum ($\gamma=0$)-fluid state and the cosmological constant. However, since these two states usually appear in a cosmological context to be related to equilibrium states of the system, the question naturally arises whether the two states, the $\gamma=0$-fluid state and any nonzero-fixed-value $\Lambda$-state, are also equivalent when considered as two different transient states  approached by the system dynamically?  This is not a trivial question, and
due to its various intricacies, we shall leave a more detailed discussion of this problem for the future.

\subsection{Inequivalent degenerate forms of the FL equations}

The system (\ref{deg-sys}) is the core degenerate system associated with the FL equations, as it contains all degenerate subcases of these equations that need to be considered in order to arrive at a full picture of the behaviour of all possible perturbations of these equations.

To enumerate the various cases that can occur, it is necessary to simplify it as much as possible, eventually leading to all possible topological normal forms that this system can take. This will be completed in the next Section. In the rest of this Section, we shall present a preliminary discussion which serves as an introduction to this problem.

\subsubsection{Time-symmetric Case: $\gamma\neq\pm 2/3$ (includes de Sitter and the ESU)}\label{t-symm case}
For the case $\gamma\neq 2/3$, if we   redefine the time,
\be\label{red-t}
t\to -\frac{(3\gamma-2)}{6}t,
\ee
(we shall use a $'$ in the place of the $\cdot$ for differentiation), and introduce the  notation,
\be
a=\frac{6}{3\gamma-2},\quad b=\frac{18\gamma}{3\gamma-2},
\ee
we arrive at the following equivalent form of the system (\ref{deg-sys}),
\begin{equation}\label{deg-sys1}
  \left(
  \begin{array}{c}
    H' \\
  \rho' \\
  \end{array}
\right)
=
\left(
  \begin{array}{cc}
    0 & 1\\
    0 & 0 \\
  \end{array}
\right)
\left(
  \begin{array}{c}
    H \\
    \rho \\
  \end{array}
\right)
+
\left(
  \begin{array}{c}
    aH^2\\
b H\rho \\
  \end{array}
\right).
\end{equation}
The system (\ref{deg-sys1}) is (like (\ref{deg-sys})) $(\mathbb{Z}_2 +t)$-\emph{equivariant}, meaning that it is invariant under the symmetry,
\be\label{t-symm}
H\to -H,\quad \rho\to \rho,\quad t\to -t.
\ee
We shall show that exploiting this symmetry, the system (\ref{deg-sys1}) leads to a final versal unfolding which  is also manifestly $(\mathbb{Z}_2 +t)$-equivariant. We shall call the appearance of  this symmetry  in the final topological normal form a \emph{`time-symmetric case'}.

\subsubsection{Time-symmetric Case: $\gamma=-2/3$ (phantom energy)}\label{-2/3}
In the case when $\gamma=-2/3$ in Eq. (\ref{deg-sys}),  the system can also be readily written in a time-symmetric form (redefining $t\to (2/3)t$), that is,
\be \label{deg-sys2}
\left(
  \begin{array}{c}
    \dot{H} \\
    \dot{\rho} \\
  \end{array}
\right)
=
\left(
  \begin{array}{cc}
    0 & 1\\
    0 & 0 \\
  \end{array}
\right)
\left(
  \begin{array}{c}
    H \\
    \rho \\
  \end{array}
\right)
-\frac{3}{2}
\left(
  \begin{array}{c}
    H^2\\
-2H\rho \\
  \end{array}
\right).
\ee
We shall say more above this system below.

\subsubsection{Time-asymmetric Case: $\gamma=2/3$ (curvature-like fluid)}\label{2/3}
On the other hand, when $\gamma=2/3$ (this excludes the ESU Eq. (\ref{esu-sys}) and retains only  Eq.  (\ref{ds-sys})), Eq. (\ref{deg-sys}) becomes,
\be \label{deg-sys3}
\left(
  \begin{array}{c}
    \dot{H} \\
    \dot{\rho} \\
  \end{array}
\right)
=-
\left(
  \begin{array}{c}
    H^2\\
2 H\rho \\
  \end{array}
\right).
\ee
The main difference between the systems (\ref{deg-sys}) (or (\ref{deg-sys1})) and (\ref{deg-sys3}) is that in (\ref{deg-sys3}) the linear part is zero making it a totally degenerate case. The other distinguished property of the  system (\ref{deg-sys3}) is that it is  invariant under the  $\mathbb{Z}_2$-\emph{symmetry},
\be\label{z2-symm}
H\to H,\quad \rho\to -\rho
\ee
but not the time symmetry $t\to -t$ .
We shall call any system in which the time reflection symmetry $t\to -t$ is absent, a \emph{time-asymmetric case}.

\subsubsection{Time-asymmetric Case: $\gamma\neq\pm 2/3$ (includes de Sitter and the ESU)}\label{t-symm case2}
We note here the final way  in which the degeneracies of the system (\ref{deg-sys}) become manifest. We shall show that when  $\gamma\neq\pm 2/3$, there is a second way to simplify the system (\ref{deg-sys1}), qualitatively \emph{in}equivalent to that noted in subsection \ref{t-symm case}, a form which as we shall prove is manifestly \emph{time-asymmetric}. This happens when the eventual versal unfolding does \emph{not} have the full $(\mathbb{Z}_2 +t)$ symmetry present in the original system (\ref{deg-sys}), but it is only $\mathbb{Z}_2$-equivariant.

\subsection{The FL  normal forms}\label{flnf}
All four inequivalent forms to which the system (\ref{deg-sys}) reduces to depend on $\gamma$ and lead to the following normal forms\footnote{A normal form is the simplest possible form that a dynamical system containing nonhyperbolic points can acquire. It is found by a series of smooth transformations starting from the original system. (Nonhyperbolic points are equilibria where the Jacobian of the linearized system evaluated there has some eigenvalues on the imaginary axis.) Bifurcation theory starts once a system is reduced to its normal form.}.
\begin{theorem}\label{deg-thm}
Each of the four primary inequivalent degenerate cases discussed above for the FL equation (\ref{deg-sys}) leads to the following normal forms classified according to the ranges of the modular parameter $\gamma$:
\begin{itemize}
\item[A.] Time-Asymmetric Case-1: $\gamma\neq\pm 2/3$. In this case (cf. subsection \ref{t-symm case2}), after a series of smooth changes  the system (\ref{deg-sys1}) reduces to the normal form,
\begin{equation}\label{deg-sys1a}
  \left(
  \begin{array}{c}
    x' \\
  y' \\
  \end{array}
\right)
=
\left(
  \begin{array}{cc}
    0 & 1\\
    0 & 0 \\
  \end{array}
\right)
\left(
  \begin{array}{c}
    x \\
    y \\
  \end{array}
\right)
+
\left(
  \begin{array}{c}
    0\\
Bxy +Cy^2 \\
  \end{array}
\right),
\end{equation}
where
\be \label{deg-sys1a1}
B=\frac{6(3\gamma+2)}{3\gamma-2},\quad C=\frac{12}{3\gamma-2}.
\ee
\item[B.] Time-Symmetric Case-1: $\gamma\neq\pm 2/3$.
In this case (cf. subsection \ref{t-symm case}), after a series of smooth changes  the system (\ref{deg-sys1}) reduces to the normal form,
\begin{equation}\label{deg-sys1b}
  \left(
  \begin{array}{c}
    x' \\
  y' \\
  \end{array}
\right)
=
\left(
  \begin{array}{cc}
    0 & 1\\
    0 & 0 \\
  \end{array}
\right)
\left(
  \begin{array}{c}
    x \\
    y \\
  \end{array}
\right)
+
\left(
  \begin{array}{c}
    0\\
Ax^3+Bxy \\
  \end{array}
\right),
\end{equation}
where
\be\label{deg-sys1a2}
A=-\frac{108\gamma}{3\gamma-2},
\ee
and $B$ as in Eq. (\ref{deg-sys1a1}).
\item[C.] Time-Symmetric Case-2: $\gamma=-2/3$. The system (\ref{deg-sys2}) reduces to the form given by,
\be \label{deg-sys2a}
\left(
  \begin{array}{c}
    \dot{x} \\
    \dot{y} \\
  \end{array}
\right)
=
\left(
  \begin{array}{cc}
    0 & 1\\
    0 & 0 \\
  \end{array}
\right)
\left(
  \begin{array}{c}
   x \\
   y \\
  \end{array}
\right)
+O(3).
\ee
This system can be brought to a normal form that is manifestly time symmetric.
\item[D.] Time-Asymmetric Case-2: $\gamma=2/3$. The system  (\ref{deg-sys})  reduces to the form given by the system (\ref{deg-sys3}).
\end{itemize}
The relations of the variables $(x,y)^T$ to the original ones will be given below.
\end{theorem}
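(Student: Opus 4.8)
The plan is to treat Theorem~\ref{deg-thm} as a sequence of normal-form reductions for a planar germ whose linear part is either nilpotent (all cases with $\gamma\neq 2/3$) or identically zero ($\gamma=2/3$), using the Takens--Bogdanov theory of the nilpotent singularity. In every case the reduction is effected by near-identity smooth coordinate changes obtained by solving the homological equation for $[L,\cdot]$ order by order, where $L=y\,\partial_x$ is the vector field attached to the Jordan block in \eqref{deg-sys1}. The four cases are then separated according to whether the coefficients $a,b$ (equivalently the value of $\gamma$) make a given normal-form coefficient vanish, and according to whether the reduction is organised so as to preserve the $(\mathbb{Z}_2+t)$-symmetry \eqref{t-symm}.

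Cases A and B (the generic range $\gamma\neq\pm 2/3$) I would obtain directly from \eqref{deg-sys1}. For Case B I would introduce the symmetry-compatible variables $x=H$ and $y=\rho+aH^2$; since $\dot H=\rho+aH^2$ this change is \emph{exact} and sends \eqref{deg-sys1} to $x'=y,\ y'=-ab\,x^3+(b+2a)\,xy$, so one reads off $B=b+2a=\frac{6(3\gamma+2)}{3\gamma-2}$, matching \eqref{deg-sys1a1}, together with a cubic term whose coefficient $A=-ab$ is that of \eqref{deg-sys1a2}. Both surviving monomials $x^3$ and $xy$ are odd in $x$, so this form is manifestly invariant under \eqref{t-symm}. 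For Case A I would instead select the complement to $\mathrm{im}[L,\cdot]$ that is not forced to respect the time reflection, retaining the second-order monomial $y^2$; carrying out the same quadratic reduction with this complement yields $y'=Bxy+Cy^2$ with the same $B$ and with $C=2a=\frac{12}{3\gamma-2}$, as in \eqref{deg-sys1a1}. The even-in-$x$ term $y^2$ breaks \eqref{t-symm}, which is precisely what marks this reduction as time-asymmetric.

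Cases C and D are then read off as degenerations of these coefficient functions. For Case C I would note that $\gamma=-2/3$ forces $3\gamma+2=0$, hence $B=b+2a=0$; the Case~B reduction above collapses to $x'=y,\ y'=-ab\,x^3=\tfrac{9}{2}x^3$, a Hamiltonian and manifestly time-symmetric germ, giving \eqref{deg-sys2a} with leading nonlinearity at the cubic ($O(3)$) order. For Case D I would observe that $\gamma=2/3$ forces $3\gamma-2=0$, so the linear part of \eqref{deg-sys} vanishes identically and the system is already \eqref{deg-sys3}; no reduction is needed, and this is the maximally degenerate representative (zero linear part), carrying the $\mathbb{Z}_2$-symmetry \eqref{z2-symm} but not the time reflection.

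The main obstacle I anticipate is not the algebra of the individual reductions---which is bounded bookkeeping of the homological equation---but rather justifying the \emph{inequivalence} of Cases A and B as two normal forms of the \emph{same} germ \eqref{deg-sys1}, i.e.\ showing that the symmetric and non-symmetric complements genuinely produce topologically distinct unfoldings rather than two presentations of one class. This needs care because the monomial $y^2\,\partial_y$ is formally removable from the minimal nilpotent normal form, so the inequivalence must be argued at the level of the symmetry class---and ultimately the versal unfoldings of the subsequent sections---rather than by a naive comparison of truncations. A secondary point to handle cleanly is the passage from formal to smooth equivalence, which for these low-codimension nilpotent germs follows from the finite-determinacy results of the cited normal-form theory.
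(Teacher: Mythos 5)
Your proposal is correct and, in Cases A, B and D, is essentially the paper's own proof: your substitution $x=H$, $y=\rho+aH^2$ is literally the paper's transformation (\ref{smoo-cha}); your Case A reduction, which keeps the $xy$ and $y^2$ monomials, is the elimination performed in Eqs.\ (\ref{Gamma})--(\ref{deg-sys1ab}) (the paper first applies the generalized-eigenvector shear $H=x+y$, $\rho=y$, but the output $B=b+2a$, $C=2a$ is identical); and Case D is the same immediate observation. The one place you take a genuinely different (and more elementary) route is Case C: instead of re-running the homological machinery at third order as the paper does in Section \ref{gam}, you specialize the exact Case B reduction at $3\gamma+2=0$ and obtain the Hamiltonian germ $\dot x=y$, $\dot y=\tfrac{9}{2}x^3$ outright --- an exact statement rather than a formal one, which establishes (\ref{deg-sys2a}) and its manifest time symmetry at once, and which is in fact sharper than the paper's subsequent Eq.\ (\ref{nor-form-3rd}), whose $-x^2y$ term cannot be produced from your exact form by any rescaling. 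Two further remarks: your coefficient $A=-ab=-108\gamma/(3\gamma-2)^2$ is the correct one (the paper's printed Eq.\ (\ref{deg-sys1a2}) drops the square in the denominator; that this is a typo is confirmed by the sign claim $A\gtrless 0$ iff $\gamma\lessgtr 0$ invoked after Corollary \ref{cor-2}, which holds for your $A$ but fails for the printed formula), and the caveat you flag at the end --- that $y^2\partial_y$ lies in the image of the homological operator, so the claimed inequivalence of Cases A and B cannot be read off from truncations but must be argued at the level of the symmetry class and the versal unfoldings of Sections \ref{cusp}--\ref{3-cub} --- identifies precisely the step that the paper itself leaves implicit.
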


\subsection{The FL versal unfoldings}\label{flva}
The fact that the FL equations contain the degenerate system (\ref{deg-sys}) means that these equations acquire inequivalent normal forms according to their degeneracy as discussed in the previous Subsection, Theorem \ref{deg-thm}. In the next four Sections, we shall employ an exhaustive list of all possible perturbations for the FL equations, the so-called  `versal unfoldings' of each one of the four basic FL normal forms\footnote{The notion of versal unfolding was  introduced by H. Poincar\'e around 1890. Bifurcation theory was also founded by Poincar\'e around the same time, and subsequently developed by A. A. Andronov, R. Thom, V. I. Arnold, and later by F. Takens, R. I. Bogdanov,  F. Dumortier, H. Zoladek, and many other brilliant mathematicians, cf. the refs.}.

The most complicated situation to deal with corresponds to the $\gamma\neq\pm 2/3$ time-symmetric case, however, all four cases taken together represent a vast structure of possible behaviours associated with the FL equations. In  all these  possibilities  the versality of the constructed families depends on second-order or third-order terms, as summarized in  cases A-D  of the  Theorem  below.
\begin{theorem}\label{deg-thm1}
Through different series of smooth changes and rescalings, the systems given in Theorem  (\ref{deg-thm}), lead to the following versal unfoldings in Cases A-D respectively.
\begin{enumerate}
\item[A.] \emph{The FL cusp.}

\item[B.] \emph{The saddle-focus-elliptic codimension-3 FL cubic.}

\item[C.] \emph{The saddle-focus codimension-2 FL cubic.}

\item[D.] \emph{The $\mathbb{Z}_2$-equivariant codimension-2 unfolding}
\end{enumerate}
\end{theorem}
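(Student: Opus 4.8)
The plan is to treat each of the four normal forms of Theorem \ref{deg-thm} as a germ at the origin of a planar vector field with a fully degenerate (nilpotent or zero) linear part, and to construct its versal unfolding through Arnold's deformation theory: a parametrized family is versal exactly when, at the singular germ, it is transversal to the tangent space of the orbit of that germ under the relevant equivalence group, and the minimal number of unfolding parameters then equals the codimension, i.e. the dimension of a complement to that tangent space inside the (equivariant) space of deformations. Throughout I would respect the symmetry isolated in Section \ref{t-symm case}: in the time-symmetric Cases B and C only deformations commuting with the $(\mathbb{Z}_2+t)$-action are admissible, and in Case D only those commuting with $\mathbb{Z}_2:\rho\to-\rho$. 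Restricting the deformation space to equivariant germs is precisely what fixes both the codimension and the structural form of the final family, and it is the step that must be handled with care at every stage.

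For Case A I would first carry the smooth normal form beyond quadratic order. Using the homological (Takens) operator for the nilpotent matrix $\bigl(\begin{smallmatrix}0&1\\0&0\end{smallmatrix}\bigr)$, the term $Cy^2$ lies in the range of that operator and is removed, while $Bxy$ survives; since no pure $x^2$ term is generated, the standard Bogdanov--Takens nondegeneracy fails and the germ is the \emph{cusp-type} degeneration of the nilpotent singularity, whose leading obstruction sits at cubic order. I would then exhibit the corresponding (codimension-two) cusp versal family and verify, from the explicit $B,C$ of Eq. (\ref{deg-sys1a1}), that $B,C\neq0$ for every $\gamma\neq\pm2/3$, which is what guarantees the reduced germ is genuinely a cusp and not something still more degenerate.

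Cases B and C are the nilpotent-with-cubic germs $x'=y,\ y'=Ax^3+Bxy$ carrying the reversing $(\mathbb{Z}_2+t)$ symmetry. Here I would invoke the classification of the codimension-three nilpotent (cusp-type Bogdanov--Takens) singularity of Dumortier--Roussarie--Sotomayor, whose three strata---\emph{saddle}, \emph{focus}, \emph{elliptic}---are separated by the sign of the cubic coefficient relative to the quadratic one. Computing these signs from the explicit $A,B$ of Eqs. (\ref{deg-sys1a1})--(\ref{deg-sys1a2}) as functions of $\gamma$ pins down which stratum occurs and confirms codimension three for Case B. The special value $\gamma=-2/3$ of Case C is distinguished precisely because $3\gamma+2=0$ forces $B=0$, so the germ collapses to the conservative form $y'=Ax^3$ with $A=-18\neq0$; this extra (Hamiltonian) structure removes one unfolding direction, leaving the saddle/focus dichotomy of reduced codimension two. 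For Case D the linear part vanishes identically and the germ $\dot x=-x^2,\ \dot y=-2xy$ carries only the $\mathbb{Z}_2$ symmetry $y\to-y$; I would compute the tangent space to its orbit inside the space of $\mathbb{Z}_2$-equivariant planar germs and display a two-parameter transversal, yielding the claimed $\mathbb{Z}_2$-equivariant codimension-two unfolding.

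The hard part will be the codimension-three Case B: establishing versality there requires computing the full tangent space to the equivalence orbit of the cubic germ within the equivariant deformation space and checking transversality of the proposed three-parameter family, a calculation complicated by the simultaneous presence of the cubic $Ax^3$ and the quadratic $Bxy$ and by the need to push the reduction to high enough order to resolve the saddle/focus/elliptic distinction. A subtlety shared by all four cases is the symmetry bookkeeping: one must ensure that every normalizing transformation and every unfolding direction is equivariant, since relaxing this constraint would artificially lower the apparent codimension and destroy the time-reflection structure demanded by the physical origin of the equations.
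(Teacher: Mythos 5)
Your Case A is where the proposal genuinely breaks, and it breaks in a way that collapses the very dichotomy the theorem asserts. You correctly note that the resonant projection of the FL quadratic part for the nilpotent linear part contains no $(0,x^2)^T$ component (only $Bxy$ survives, with $Cy^2$ removable), but you then claim simultaneously that the leading obstruction sits at cubic order \emph{and} that the germ admits the codimension-two cusp versal family. These cannot both hold: the Andronov/Bogdanov--Takens cusp of codimension two is the germ $a_2x^2+b_1xy$ with $a_2\neq 0$, so a nonzero $x^2$ coefficient is exactly what is required, and nonvanishing of $B$ and $C$ (which multiply $xy$ and $y^2$) certifies nothing of the sort. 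A nilpotent germ whose quadratic normal form is $Bxy$ alone is precisely the codimension-three singularity that the theorem assigns to Case B; on your route Cases A and B become the same germ with the same unfolding, and the time-asymmetric (codimension 2) versus time-symmetric (codimension 3) distinction disappears. The paper proceeds differently: after the eigenbasis change (\ref{t-1}) and elimination of nonresonant terms it keeps the form (\ref{deg-sys1ab}), $y'=Bxy+Cy^2$, and then \emph{embeds} it in the two-parameter family transverse to the versal deformation of the nilpotent matrix (citing \cite{wig}), namely (\ref{va-tot1}) with quadratic part $Bx^2+Cxy$; the shifts and rescalings of its Steps 2--4 then produce the quadratic Bogdanov--Takens form (\ref{va-tot}), with the cusp realized at $\mu_1=\mu_2=0$ via Andronov's classification.

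Case C contains a second concrete error. At $\gamma=-2/3$ the germ does not collapse to the conservative cubic $y'=Ax^3$: removing the (now entirely nonresonant) quadratic part generates third-order terms whose resonant projection onto $G_3=\mathrm{span}\{(0,x^3)^T,(0,x^2y)^T\}$ contains the damping term $x^2y$, and the paper's normal form is $\dot y=\pm x^3-x^2y$, Eq. (\ref{nor-form-3rd}), whose two-parameter unfolding is (\ref{va-3rd}). Missing that term directly undercuts your codimension count: were the germ really $y'=Ax^3$, the $x^2y$ direction would have to be supplied as an extra unfolding parameter rather than sitting inside the organizing germ, and your argument that a Hamiltonian structure ``removes one unfolding direction'' is backwards --- extra degeneracy of that kind raises, not lowers, codimension in the dissipative category. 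Finally, your global strategy of certifying versality by transversality of the family to the orbit tangent space is not adequate in any of the four cases: the unfoldings in the theorem are \emph{topological} normal forms, and their versality is the content of the hard theorems the paper invokes (Bogdanov--Takens for Case A, Dumortier et al. \cite{du1,du2} for Case B, the third-order Bogdanov--Takens results for Case C, and Zholondek \cite{zol84} for Case D), whose proofs require control of Hopf bifurcations, limit-cycle uniqueness, and homoclinic loops --- none of which follows from an infinitesimal transversality computation. Only your Case B, which delegates to the Dumortier et al. classification and reads off the saddle/focus/elliptic strata from the sign of $A$ and the size of $B$, matches what the paper actually does.
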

This result together with Theorem  (\ref{deg-thm}) will be deployed in the next several Sections of this paper. Together they  contain the main information about the possible/allowed perturbations of the FL equations.
The  meaning of the symbols and technical terms that appear in these Theorems (\ref{deg-thm}), (\ref{deg-thm1}) will also be clarified in these Sections.

\section{The FL cusp}\label{cusp}
The purpose of this Section is to prove Theorem \ref{deg-thm}, Case A, as well as to develop the following result which settles Case A of Theorem \ref{deg-thm1}.
\begin{theorem}\label{deg-thm-cusp}
When  $\gamma\neq\pm 2/3$, the  FL equations in the form (\ref{deg-sys1a}) (cf. subsection \ref{t-symm case2}) versally unfold to
the topological normal form  described by the second-order Bogdanov-Takens versal unfolding, namely,
\begin{equation}\label{va-tot}
    \begin{split}
    \dot{x}&=y,\\
    \dot{y}&=\mu_1+\mu_2 y+x^2\pm xy,
    \end{split}
\end{equation}
where $\mu_1, \mu_2$ are the two unfolding parameters. This form is also manifestly \emph{time-asymmetric}.
\end{theorem}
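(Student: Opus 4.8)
The plan is to recognise the system (\ref{deg-sys1a}) as a nilpotent singularity and to run the standard normal-form-plus-versal-unfolding machine for the Bogdanov--Takens (BT) point, the only genuinely delicate point being the nondegeneracy of the quadratic data for $\gamma\neq\pm 2/3$. First I would record that the linear part of (\ref{deg-sys1a}) is the single $2\times 2$ nilpotent Jordan block, i.e.\ a double-zero eigenvalue with one-dimensional eigenspace; by definition this is the organizing centre of the codimension-two BT bifurcation, so the whole statement reduces to checking the two BT nondegeneracy conditions and then quoting the BT versal-unfolding theorem (Bogdanov, Takens; see the bifurcation references cited in the paper).

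Next I would reduce the quadratic nonlinearity to the Takens preliminary normal form. Writing the linear part as the homological operator $\mathrm{ad}_L=[\,y\,\partial_x,\,\cdot\,]$ acting on quadratic planar vector fields, a direct computation of its image shows that a complement (the normal-form space) is spanned by $x^2\partial_y$ and $xy\,\partial_y$. Hence a near-identity change of coordinates brings (\ref{deg-sys1a}) to $\dot x=y$, $\dot y=a_2x^2+b_2xy+O(3)$, and projecting the given nonlinearity $(Bxy+Cy^2)\partial_y$ onto this complement yields $b_2=B=6(3\gamma+2)/(3\gamma-2)$, which is nonzero precisely because $\gamma\neq-2/3$ (and $\gamma\neq 2/3$ for the form to be defined). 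After a rescaling of $x,y,t$ one normalizes the $x^2$-coefficient to $+1$ and reads off the sign $\pm=\mathrm{sgn}(a_2b_2)$ appearing in (\ref{va-tot}).

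With both nondegeneracy conditions in hand I would invoke the BT theorem: appending the universal unfolding terms $\mu_1+\mu_2 y$ to the $\dot y$-equation produces a versal unfolding, topologically equivalent to (\ref{va-tot}); the unfolding direction supplied physically by $\Lambda$, which is the unfolding parameter by Theorem \ref{lm}, is one of the two. Time-asymmetry is then immediate from the normal form: the term $\pm xy$ is odd under the combined reflection $H\to-H,\ t\to-t$, whereas $x^2$ and the $\mu$-terms are even, so the family cannot be invariant under $(\mathbb{Z}_2+t)$ and retains at most the $\mathbb{Z}_2$ action, exactly the time-asymmetric alternative flagged in subsection \ref{t-symm case2}.

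The main obstacle I anticipate is precisely the computation of $a_2$. The quadratic part $Bxy+Cy^2=y(Bx+Cy)$ carries no explicit $x^2$ term, and the homological projection in fact returns $a_2=0$: the truncated field has an entire line of equilibria $\{y=0\}$, which no near-identity diffeomorphism can destroy, so the purely quadratic model is more degenerate than a generic BT point. The substance of the proof therefore lies in showing that the genuine codimension-two BT structure (with $a_2\neq 0$) is restored once the singularity is unfolded, i.e.\ that the $\Lambda$-deformation of (\ref{esu-sys}) breaks the line of equilibria transversally and supplies the missing quadratic coefficient, rather than collapsing onto the higher-codimension cubic of Case B. Reconciling the time-asymmetric quadratic reduction with the nonvanishing $x^2$ demanded by (\ref{va-tot}) is the step I expect to require the most care.
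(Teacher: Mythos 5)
Your overall strategy---Jordan-adapted coordinates, homological reduction of the quadratic part, then the Bogdanov--Takens (BT) unfolding theorem plus a rescaling---is the same as the paper's in Section \ref{cusp}, and your key computation is correct: with the complement of $L_J^{(2)}(H_2)$ spanned by $x^2\partial_y$ and $xy\,\partial_y$, the projection of $(Bxy+Cy^2)\partial_y$ is $Bxy\,\partial_y$, since $(0,y^2)^T$ lies in the image (\ref{L-space}); hence $b_2=B\neq 0$ for $\gamma\neq -2/3$ but $a_2=0$. The obstacle you flag at the end is therefore genuine, and your proposal does not overcome it. The repair you sketch---that the $\Lambda$-deformation ``supplies the missing quadratic coefficient''---cannot work in the standard framework: unfolding terms vanish at $\mu=0$ and are of constant and linear order, so they cannot alter the organizing centre. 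A family whose $\mu=0$ member is the cusp $\dot y=x^2\pm xy$ (isolated equilibrium) cannot be a versal unfolding, in the usual sense, of a system whose quadratic normal form is $\dot y=Bxy+O(3)$ and whose quadratic truncation has the whole line of equilibria $\{y=0\}$. With $a_2=0$ and a nonvanishing cubic coefficient (for $\gamma\neq 0$), the intrinsic singularity is the cubic one ($k=3$ in Andronov's classification), i.e., exactly the paper's Case B, not the cusp.

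You should also know that the paper's own proof does not fill this gap, so you will not find the missing justification there. After deriving (\ref{deg-sys1ab}), i.e., $\dot y = Bxy+Cy^2$, the paper's Step-1 writes the embedded family (\ref{va-tot1}) with quadratic part $Bx^2+Cxy$---the coefficients $B,C$ have been silently transferred onto the monomials $x^2$, $xy$---citing a ``well-known process'' (Wiggins), which however starts from a genuine cusp $\dot y=ax^2+bxy$ with $a\neq 0$; Steps 2--4 are then routine parameter-dependent shifts and rescalings that do not touch this issue. Indeed, the paper itself concedes in Section \ref{main-dif} that the reduced form lacking the $x^2$ term is ``qualitatively inequivalent'' to the BT form (\ref{va-tot3}) and more degenerate, which is precisely why Case B ends up with codimension 3. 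So your diagnosis is exactly right: the step you predicted ``would require the most care'' is the step the paper performs by fiat, and as they stand neither your proposal nor the paper's argument derives (\ref{va-tot}) from (\ref{deg-sys1a}) through a legitimate chain of smooth equivalences plus a versality theorem.
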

We shall  split the proof of this theorem  in several steps in the following subsections.

\subsection{Preliminaries}
We shall construct a new basis for the vector space $\mathbb{R}^2$ to transform (\ref{deg-sys}) into a more convenient form. We introduce  two linearly independent vectors,   a $0$-eigenvector $q_0$ and a $0$-\emph{generalized} eigenvector $q_1$ of the matrix,
\be\label{A}
A=
\left(\begin{array}{cc}
  0 & 1 \\
  0 & 0
\end{array}
\right),
\ee
defining the linear part of the vector field  (\ref{deg-sys}), namely,
\be
Aq_0=0,\quad Aq_1=q_0,
\ee
as well as  two adjoint eigenvectors $p_0,p_1$  which belong to the kernel and range of the transposed matrix $A^T$  i.e.,
\be
Ap_1=0,\quad Ap_0=p_1.
\ee
Using the Fredholm altenative, we have the following orthogonality relations\footnote{This is with respect to the standard scalar product of $\mathbb{R}^2$.},
\be
q_0\cdot p_0=1,\quad q_1\cdot p_1=1,
\ee
and
\be
q_1\cdot p_0=0,\quad q_0\cdot p_1=0.
\ee
Solving for the vector components, we can choose four such vectors, for example as follows,
\be
q_0=(1,0),\quad q_1=(1,1),\quad p_0=(1,-1),\quad p_1=(0,1).
\ee
\subsection{Transformation of the system (\ref{deg-sys})}
Any solution of the system (\ref{deg-sys}) (in fact, any plane vector) can now be written as,
\be
(H,\rho)=x q_0 +y q_1,
\ee
where,
\be
x=p_0\cdot(H,\rho),\quad y=p_1\cdot(H,\rho),
\ee
that is,
\begin{equation}\label{t-1}
    \begin{split}
    H&=x+y,\\
    \rho&=y,
    \end{split}
\end{equation}
with inverse, $x=H-\rho,y=\rho$.

Then the FL-system (\ref{deg-sys}) in the new coordinates $x,y$ becomes (after redefining the time as in Eq. (\ref{red-t})),
\begin{equation}\label{deg-sys1a-1}
  \left(
  \begin{array}{c}
    x' \\
  y' \\
  \end{array}
\right)
=
\left(
  \begin{array}{cc}
    0 & 1\\
    0 & 0 \\
  \end{array}
\right)
\left(
  \begin{array}{c}
    x \\
    y \\
  \end{array}
\right)
+
\left(
  \begin{array}{c}
    \Gamma_1 x^2+\Gamma_2 xy+\Gamma_3y^2\\
D_1xy +D_1y^2 \\
  \end{array}
\right),
\end{equation}
where the coefficients are as follows,
\be
\Gamma_1=\frac{6}{3\gamma-2},\quad\Gamma_2=-6,\quad \Gamma_3=-6\frac{3\gamma-1}{3\gamma-2},\quad D_1=\frac{18\gamma}{3\gamma-2}.
\ee

\subsection{Elimination of the nonresonant terms}\label{elim}
We shall use some standard terminology from normal form theory. We denote by $H_2$ the space of all vector-valued homogeneous polynomials of degree two in  $\mathbb{R}^2$. This is a vector space which in the standard basis of $\mathbb{R}^2$ is spanned by the vectors,
\be
H_2=\textrm{span}\left\{\left(
  \begin{array}{c}
    x^2 \\
    0 \\
  \end{array}
\right),\left(
          \begin{array}{c}
            xy \\
            0 \\
          \end{array}
        \right),\left(
          \begin{array}{c}
             y^2 \\
            0 \\
          \end{array}
        \right),\left(
          \begin{array}{c}
            0 \\
            x^2 \\
          \end{array}
        \right),\left(
          \begin{array}{c}
            0 \\
            xy \\
          \end{array}
        \right),\left(
          \begin{array}{c}
            0 \\
            y^2 \\
          \end{array}
        \right)
   \right\},
\ee
obtained by multiplying each of the standard basis vectors of $\mathbb{R}^2$ by the possible monomials of degree two. The space $H_2$ is the direct sum,
\be\label{dir-sum}
H_2=L_J^{(2)}(H_2)\oplus G_2.
\ee
We explain the two terms on the right-hand-side as follows.
For the linear part $A$ given by  Eq. (\ref{A}), it is a simple calculation to show that the first summand is given by,
\be\label{L-space}
L_J^{(2)}(H_2)=\textrm{span}\left\{
\left(
  \begin{array}{c}
    -2xy \\
    0 \\
  \end{array}
\right),\left(
          \begin{array}{c}
           y^2 \\
            0 \\
          \end{array}
        \right),\left(
          \begin{array}{c}
            x^2 \\
            -2xy \\
          \end{array}
        \right),\left(
          \begin{array}{c}
          xy \\
            -y^2 \\
          \end{array}
        \right)
   \right\}.
\ee
This term in the decomposition (\ref{dir-sum}) contains all the `nonresonant terms',  those second-order terms  linear combinations of the basis vectors in Eq. (\ref{L-space})  which can be completely eliminated from the equation (\ref{deg-sys1a-1}).

The last term in (\ref{dir-sum}) is the space complementary to $L_J^{(2)}(H_2)$ that contains all the other terms $F_2^r(Y)$ (`$r$ stands for `resonant') that cannot be in  the range of $L_J^{(2)}(H_2)$ and hence cannot be removed from the equation (\ref{deg-sys1a-1}). That is all other terms can be eliminated except the resonant terms  $F_2^r(Y)$.

It is straightforward to check that the second-order terms in the Eq. (\ref{deg-sys1a-1}) can be further simplified. The $\Gamma$-terms can be rewritten using the $L_J^{(2)}(H_2)$-basis vectors as follows:
\beq
    \Gamma_1\left(
  \begin{array}{c}
x^2\\
0\\
  \end{array}
\right)
+
\Gamma_2 \left(
  \begin{array}{c}
    xy\\
0 \\
  \end{array}
\right)
+
  \Gamma_3 \left(
  \begin{array}{c}
  y^2\\
0 \\
  \end{array}
\right)
&&=
\Gamma_1\left(
  \begin{array}{c}
 x^2 \\
            0 \\
  \end{array}
\right)
+
\Gamma_2 \left(
  \begin{array}{c}
     xy \\
            0 \\
  \end{array}
\right)
+
  \Gamma_3 \left(
  \begin{array}{c}
  y^2\\
0 \\
  \end{array}
\right) \nonumber\\
&&+ \Gamma_1\left(
  \begin{array}{c}
0 \\
            -2xy \\
  \end{array}
\right)
+
\Gamma_2 \left(
  \begin{array}{c}
     0 \\
            -y^2 \\
  \end{array}
\right)\nonumber\\
&&+\Gamma_1\left(
  \begin{array}{c}
0 \\
            2xy \\
  \end{array}
\right)
+
\Gamma_2 \left(
  \begin{array}{c}
     0 \\
     y^2 \\
  \end{array}
\right).\label{Gamma}
\eeq
The first five terms on the right-hand-side are linear combinations of three of the basis vectors in $L_J^{(2)}(H_2)$ and can be eliminated. Therefore substituting the remaining terms present in right-hand-side of Eq. (\ref{Gamma}) in Eq. (\ref{deg-sys1a-1}) in the place of the $\Gamma$-terms, (\ref{deg-sys1a-1}) becomes,
\begin{equation}\label{deg-sys1ab}
  \left(
  \begin{array}{c}
    x' \\
  y' \\
  \end{array}
\right)
=
\left(
  \begin{array}{cc}
    0 & 1\\
    0 & 0 \\
  \end{array}
\right)
\left(
  \begin{array}{c}
    x \\
    y \\
  \end{array}
\right)
+
\left(
  \begin{array}{c}
    0\\
Bxy +Cy^2 \\
  \end{array}
\right),
\end{equation}
where $B=D_1+2\Gamma_1$, and $C=D_1+\Gamma_2$, which is the result announced in Theorem \ref{deg-thm}, Case A.

\subsection{Embedding in a transverse family its final form}
\subsubsection{Step-1}
We now need to embed (\ref{deg-sys1ab}) in a codimension-2 family which is transverse to the versal extension of the linear part given the nilpotent matrix $A$. This is a well-known process and the result is (cf. e.g.,  \cite{wig}, pp. 413-4),
\begin{equation}\label{va-tot1}
    \begin{split}
    \dot{x}&=y,\\
    \dot{y}&=\mu_1 x+\mu_2 y+Bx^2+C xy.
    \end{split}
\end{equation}
Here the $\mu_1,\mu_2$ are the unfolding parameters, and together they constitute the versal unfolding of the linear (nilpotent) part of (\ref{deg-sys1ab}).

\subsubsection{Step-2}
We may further reparametrize to obtain the versal unfolding in its final form, having the origin moving upon parameter variation (this does not occur in (\ref{va-tot1})\footnote{This leads to the notion of fixed \emph{branches} instead of fixed points, cf. \cite{cot24a}.}).
Shifting the $x$-part of the vectorfield, $x\to\bar{x}=x-x_0,y\to y$, the versal family (\ref{va-tot1}) acquires a new constant term,
\begin{equation}\label{va-tot2}
    \begin{split}
    \dot{\bar{x}}&=y,\\
    \dot{y}&=Bx_0^2-\mu_1 x_0 +(\mu_1-2Bx_0)\bar{x}+(\mu_2 -Cx_0)y+B\bar{x}^2+C \bar{x}y.
    \end{split}
\end{equation}

\subsubsection{Step-3}
A further parameter-dependent  shift of the $y$-part,
\be
\bar{y}=y+\frac{\mu_1-Bx_0}{C},
\ee
removes the linear $\bar{x}$-term, and we arrive at the result
\begin{equation}\label{va-tot3}
    \begin{split}
    \dot{\bar{x}}&=\bar{y},\\
    \dot{\bar{y}}&=\bar{\mu}_1 +\bar{\mu}_2 \bar{y}+B\bar{x}^2+C \bar{x}\bar{y},
    \end{split}
\end{equation}
where,
\beq
\bar{\mu}_1&=&Bx_0^2-\mu_1 x_0 +(\mu_1-2Cx_0)(\mu_1-2Bx_0)\\
\bar{\mu}_2&=& \mu_1-2Cx_0.
\eeq

\subsubsection{Step-4}\label{rescale}
Finally, by a standard rescaling in $\bar{x},\bar{y},T$ (cf. \cite{wig}, p. 437, \cite{gh83}, Sect. 7.3), we may reduce the possible cases  to two, namely, in the rescaled variables $B=1$ and $C=\pm1$, which transforms (\ref{va-tot3}) into the result \ref{deg-thm-cusp}, and we have reverted back to unbarred symbols for simplicity without a danger of confusion.

The meaning of the word `cusp' in the title of this Section will be elucidated in Section \ref{ant-sect} below.

\section{The saddle-focus-elliptic codimension-3 FL cubic}\label{3-cub}
In this Section, we are concerned with the system (\ref{deg-sys1}) which contains all the main difficulties associated with, and stemming from, the degeneracies present in  the FL equations.
These difficulties arise and become very transparent and prominent already in the original form of the FL equations (\ref{ds1})-(\ref{ds3}), and are closely related to the presence of the terms $H^2$ and $3\gamma H\rho$ in those equations.

We shall proceed to prove Theorem \ref{deg-thm}, Case B, as well as to develop the following result which settles Case B of Theorem \ref{deg-thm1}.
\begin{theorem}\label{deg-thm-sfe}
When  $\gamma\neq\pm 2/3$, the  FL equations in the form (\ref{deg-sys1b}) (cf. subsection \ref{t-symm case}) versally unfold to
the topological normal form  described by the following versal unfolding,
\begin{equation}\label{va-tot4}
    \begin{split}
    \dot{x}&=y,\\
    \dot{y}&=\mu_1+\mu_2 x+\mu_3y +A x^3 +B xy,
    \end{split}
\end{equation}
where $\mu_1, \mu_2, \mu_3$ are the three unfolding parameters, and
\be\label{a-b}
A=-\frac{108\gamma}{3\gamma-2},\quad B=\frac{6(3\gamma+2)}{3\gamma-2},
\ee
as before.  This form is also manifestly \emph{time-symmetric}.
\end{theorem}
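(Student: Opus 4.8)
The goal is to show that the FL normal form (\ref{deg-sys1b}), namely $x'=y$, $y'=Ax^3+Bxy$ with $A,B$ as in (\ref{a-b}), versally unfolds to the three-parameter family (\ref{va-tot4}). The strategy parallels the one used for the cusp in Section \ref{cusp}, but now the leading nonlinearity is cubic rather than quadratic, so the codimension rises from two to three. First I would identify (\ref{deg-sys1b}) as the standard $\mathbb{Z}_2$-equivariant (indeed $(\mathbb{Z}_2+t)$-equivariant) Bogdanov--Takens singularity of cubic type: the linear part is the nilpotent Jordan block $A$ of (\ref{A}), and the only resonant cubic terms compatible with the $(\mathbb{Z}_2+t)$-symmetry (\ref{t-symm}) surviving the normal-form reduction are precisely $Ax^3$ and $Bxy$. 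This is exactly the setting of the known codimension-three unfolding of the nilpotent singularity with a cubic restoring term (cf. the references to the bifurcation literature cited in the paper, e.g.\ \cite{du0}-\cite{stro} and the treatments in \cite{wig,gh83}).

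The key steps, in order, are as follows. (i) Starting from (\ref{deg-sys1}) I would perform the same basis change $q_0,q_1,p_0,p_1$ used in Section \ref{cusp} to put the linear part in the nilpotent form and collect all second- and third-order contributions. (ii) Using the splitting $H_2=L_J^{(2)}(H_2)\oplus G_2$ of (\ref{dir-sum}), and the analogous splitting at third order $H_3=L_J^{(3)}(H_3)\oplus G_3$, I would eliminate every nonresonant monomial. Here the crucial point is that the time-symmetric reduction of subsection \ref{t-symm case} forces the quadratic resonant terms to be removable (or absorbable), leaving $Ax^3$ as the genuine cubic obstruction together with the resonant $Bxy$; the values of $A,B$ in (\ref{a-b}) then follow by tracking the coefficients $\Gamma_i,D_1$ through these reductions, exactly as $B,C$ were computed in the cusp case. (iii) I would then embed (\ref{deg-sys1b}) in a family transverse to the versal deformation of the nilpotent linear part. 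Because the codimension of the cubic nilpotent singularity is three, three unfolding parameters $\mu_1,\mu_2,\mu_3$ are required, attached respectively to the constant, the linear-$x$, and the linear-$y$ directions, giving
\begin{equation}
\dot{x}=y,\qquad \dot{y}=\mu_1+\mu_2 x+\mu_3 y+Ax^3+Bxy,\nonumber
\end{equation}
which is (\ref{va-tot4}). (iv) Finally I would verify that no further parameter-dependent shift or rescaling collapses the three parameters, confirming genuine codimension three, and check that the resulting family retains the $(\mathbb{Z}_2+t)$-equivariance (\ref{t-symm}) in the appropriate sense, so that the form is manifestly time-symmetric as claimed.

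\textbf{The main obstacle.}
The hard part will be step (ii)-(iii): establishing that the cubic term $Ax^3$ is nonremovable and that the correct codimension is three rather than two. For the cusp this was comparatively easy because the quadratic $x^2$ already supplied the leading nonlinearity; here one must show that in the time-symmetric reduction the quadratic resonant terms either vanish by symmetry or can be transformed away, so that the singularity is genuinely degenerate at quadratic order and the cubic $x^3$ governs the local structure. This requires a careful second-order normal-form computation (analogous to (\ref{Gamma})) to confirm that the $(\mathbb{Z}_2+t)$-symmetry suppresses the terms that would otherwise lower the codimension, followed by the third-order reduction identifying $Ax^3$ as the surviving generic cubic. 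Once the codimension is pinned to three, the embedding into (\ref{va-tot4}) is the standard transverse-family construction (cf.\ \cite{wig}, \cite{gh83}), and the rescaling to normalize $A,B$ proceeds as in subsection \ref{rescale}.
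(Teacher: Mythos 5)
Your overall architecture matches the paper's (reduce (\ref{deg-sys1}) to the symmetric cubic normal form $\dot x=y$, $\dot y=Ax^3+Bxy$, then attach three unfolding parameters), but there is a genuine gap at the decisive step, the versality claim in your steps (iii)--(iv). Verifying that ``no further parameter-dependent shift or rescaling collapses the three parameters'' shows at best that (\ref{va-tot4}) is a three-parameter transverse deformation; it does not prove versality, i.e.\ that \emph{every} small perturbation of the cubic singularity is topologically equivalent to one induced from this family, including the perturbations that create limit cycles, Hopf curves and homoclinic connections. For the codimension-2 cusp that stronger statement is the classical Bogdanov--Takens theorem, which is why the transverse-embedding construction plus \cite{wig,gh83} suffices in Section \ref{cusp}; for the cubic nilpotent singularity the analogous statement is a deep result that the paper does not attempt to prove: it quotes Dumortier, Roussarie, Sotomayor and Zoladek \cite{du1,du2}, remarking that ``this problem was only solved somewhat recently''. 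The sources you cite (\cite{du0}--\cite{stro}, \cite{wig}, \cite{gh83}) are general bifurcation texts that do not contain this theorem, so your proposed elementary verification would fail, and the essential external input is never identified.

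There is also a flaw in your reduction, step (i)--(ii). You propose to begin with the eigenvector basis change $q_0,q_1,p_0,p_1$ of Section \ref{cusp}, i.e.\ $x=H-\rho$, $y=\rho$. That change mixes $H$ and $\rho$, which transform differently under (\ref{t-symm}), so it destroys the manifest $(\mathbb{Z}_2+t)$-equivariance on which your own step (ii) relies; following it by the elimination of Section \ref{elim} lands you exactly in Case A (the cusp form $Bxy+Cy^2$, hence $x^2\pm xy$ after unfolding), not Case B. The paper instead stays in the original $(H,\rho)$ variables, where the linear part is already nilpotent and the symmetry is manifest, and uses the decomposition (\ref{deg-term}) so that only the resonant term $BH\rho$ survives at second order; it is the symmetry that forbids an $(0,x^2)$ resonant component, and this absence is precisely what raises the codimension from two to three. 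Finally, the cubic coefficient $A$ cannot be ``collected'' from the original system, since (\ref{deg-sys1}) contains no cubic terms at all: it is \emph{generated} by the quadratic near-identity transformation (\ref{smoo-cha}) that removes the nonresonant quadratic part, coming out proportional to the product $ab$ of the original quadratic coefficients. Your plan to track $\Gamma_i,D_1$ ``exactly as $B,C$ were computed in the cusp case'' misses this mechanism; a second-order elimination alone leaves you with nothing cubic to keep.
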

We shall  split our discussion of his result in several steps in the following subsections.

\subsection{The main difficulty}\label{main-dif}
One way to express the main difficulty associated with the second-order terms present in the FL equations (\ref{deg-sys1}) is  to use the decomposition (\ref{dir-sum}). While the term $(0, bH\rho)^T$ in (\ref{deg-sys1}) belongs to the resonant component $G_2$ of the decomposition  (\ref{dir-sum}), the term $(H^2, 0)^T$ belongs only to $H_2$. This simple observation about the structure of the FL equations reveals that the system  (\ref{deg-sys1}) is \emph{not} in normal form.

One can rewrite the second-order term in Eq.  (\ref{deg-sys1}) in the following form,
\begin{equation}\label{deg-term}
  \left(
  \begin{array}{c}
    aH^2 \\
  bH\rho \\
  \end{array}
\right)
=\frac{6}{3\gamma-2}
\left[(3\gamma+2)
  \left(\begin{array}{cc}
    0\\
    H\rho \\
  \end{array}\right)
+
 \left(\begin{array}{cc}
    H^2\\
   -2 H\rho \\
  \end{array}\right)
\right].
\end{equation}
Since the last term is in the $L_J^{(2)}(H_2)$ part of the decomposition (\ref{dir-sum}), it can be completely eliminated and therefore the FL equation  (\ref{deg-sys1}) is replaced by,
\begin{equation}\label{deg-sys1a-2}
  \left(
  \begin{array}{c}
    H' \\
  \rho' \\
  \end{array}
\right)
=
\left(
  \begin{array}{cc}
    0 & 1\\
    0 & 0 \\
  \end{array}
\right)
\left(
  \begin{array}{c}
    H \\
    \rho \\
  \end{array}
\right)
+
B\left(
  \begin{array}{c}
    0\\
H\rho  \\
  \end{array}
\right),
\end{equation}
where $B$ is as in Eq. (\ref{deg-sys1a1}).

There are two features of  the system (\ref{deg-sys1a-2}) which will play an important role in the following. First, (\ref{deg-sys1a-2}), like the previous form Eq. (\ref{deg-sys1}), is manifestly time-symmetric in the sense of being $(\mathbb{Z}_2 +t)$-equivariant that is invariant under the symmetry (\ref{t-symm}). However, Eq. (\ref{deg-sys1a-2}) is qualitatively \emph{inequivalent} to the Bogdanov-Takens normal form found in Section \ref{cusp}, namely Eq. (\ref{va-tot3}), because the $x^2$ term  is missing presently. This means that the present case is more degenerate than that of the Section \ref{cusp}, in particular it is expected to contain third-order terms.

\subsection{The normal form}
In fact, by a generalization of the standard normal form process leading to second-order terms we may acquire a first taste of the structure of the sought-for normal form containing the appropriate third-order terms. To deal with the above difficulty, we proceed as follows (cf. e.g.,  \cite{perko}, p. 167). In general, one starts from the system (\ref{deg-sys1}) and looks for the normal form to $O(3)$ terms in $X=(H,\rho)^T$, which we write symbolically as $\dot{X}=JX+F_2(X)+F_3(X)+O(|X|^4)$, where $F_i\in H_i,i=2,3$, and $H_i$ is as in Section \ref{elim}, and \cite{cot24a}, Section 7.1.2, respectively, for $i=2,3$. Then by a series of near-identity transformations, we first reduce this system to the form $\dot{X}=JX+\bar{F}_2(X)+\tilde{F}_3(X)+O(|X|^4)$, where $\bar{F}_2(X)\in G_2$ and $\tilde{F}_3(X)\in H_3$, and secondly, by another nonlinear transformation to the final form $\dot{X}=JX+\bar{F}_2(X)+\hat{F}_3(X)+O(|X|^4)$, with $\hat{F}_3(X)\in G_3$.

The first reduction was already performed in the previous subsection, where we arrived at the reduced form (\ref{deg-sys1a-2}). As it turns out, the following smooth change accomplishes the second reduction, namely, under the transformation of the field,
\begin{equation}\label{smoo-cha}
  \left(
  \begin{array}{c}
    H \\
  \rho \\
  \end{array}
\right)
=
  \left(\begin{array}{cc}
    y_1\\
    y_2 \\
  \end{array}\right)
+
 \left(\begin{array}{cc}
    0\\
   -ay_1^2 \\
  \end{array}\right),
\end{equation}
Eq. (\ref{deg-sys1}) finally becomes the FL `cubic'\footnote{We shall use this word to indicate the situation that the FL equations when reduced to a normal form contain a cubic term.},
\begin{equation}\label{nor-form3}
    \begin{split}
    \dot{x}&=y,\\
    \dot{y}&=A x^3 +B xy,
    \end{split}
\end{equation}
so that the Theorem \ref{deg-thm}, Case B follows (for simplicity we have reverted back to the variables $x,y$ in the place of the transformed variables $y_1,y_2$). The important new feature in the reduction (\ref{nor-form3}) is that the third-order vector field $(0,Ax^3)^T$ belongs to $G_3$. Thus we have succeeded in finding the normal form of the FL equations in this case.

Finally, we note that using a similar rescaling as that which was used in Section \ref{rescale}, namely, $x\to\alpha x,y\to\beta y, t\to \delta t, \delta>0$, without loss of generality we may set $A=\pm 1, B>0$ in Eq. (\ref{nor-form3}).

\subsection{Andronov theorem}\label{ant-sect}
Before we proceed further, we make the following observation. In the analysis of linearized stability summarized in Section \ref{lin-stab}, we found that the two equilibria,  de Sitter space and the Einstein static universe, were saddles and sinks respectively, and this appeared as a basic feature of the FL equations.

However, this conclusion appears perhaps somewhat restricted for a number of reasons which will unravel gradually in the next few Sections of this paper. Using two of the results we have developed up to now, namely,  Eqns. (\ref{deg-sys1ab}), (\ref{nor-form3}), we see that the FL equations contain \emph{non-hyperbolic points} instead of the assumed saddle or a sink in linearized studies. This has important consequences because when $\gamma\neq 2/3$ the behaviour of the solutions near both the de Sitter and the ESU depends on whether we assume a time-asymmetric  or a time-symmetric system leading to the equations (\ref{deg-sys1ab}), (\ref{nor-form3}), hence giving a cusp or a elliptic domain respectively,   not a saddle or a sink.

For planar systems with non-hyperbolic points of nilpotent type, there is a general theorem discovered by Andronov and coworkers which characterizes completely all possible dynamical behaviours near such points (cf.  \cite{ant1}, Section 22 for a proof). We shall state Andronov's theorem  in its simplest form in order to give a broad classification of the possible behaviours of solutions to the FL equations (\ref{deg-sys1ab}), (\ref{nor-form3}), before we proceed to the analysis of their bifurcations.

\begin{theorem}[Andronov \emph{et al} 1973]  Part 1: General normal form.

The normal form of a planar dynamical system with nilpotent linear part is given by,
\begin{equation}\label{ant-thm}
    \begin{split}
    \dot{x}&=y,\\
    \dot{y}&=a_kx^k (1+h(x))+b_nx^ny (1+g(x))+y^2 f(x,y),
    \end{split}
\end{equation}
where $h(x),g(x),f(x,y)$ are analytic in a neighborhood of the origin, $h(0)=g(0)=0,$ $k\geq 2,a_k\neq 0$, and $n\geq 1, b_n\neq 0$.
\end{theorem}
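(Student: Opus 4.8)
The plan is to establish the normal form (\ref{ant-thm}) by a two-stage reduction: first collapse the first component of the field to the bare term $\dot{x}=y$, and then organize the resulting scalar equation by powers of $y$. I would start from a general analytic planar field whose linear part is the nilpotent Jordan block $A$ of Eq. (\ref{A}), so that in suitable coordinates the system reads $\dot{x}=y+P(x,y)$, $\dot{y}=Q(x,y)$ with $P,Q=O(2)$. The first move is to introduce the new variable $\bar{y}:=\dot{x}=y+P(x,y)$. Since $\partial\bar{y}/\partial y=1+\partial_y P$ equals $1$ at the origin and is therefore nonzero nearby, the map $(x,y)\mapsto(x,\bar{y})$ is a local analytic diffeomorphism by the implicit function theorem, near-identity, with analytic inverse $y=y(x,\bar{y})=\bar{y}+O(2)$.

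In these coordinates $\dot{x}=\bar{y}$ holds exactly, and differentiating once more gives $\dot{\bar{y}}=\ddot{x}=Q+P_x(y+P)+P_y Q$, which after substituting $y=y(x,\bar{y})$ becomes an analytic function $\Phi(x,\bar{y})$. Because $P,Q=O(2)$ while the remaining factors are $O(1)$, one checks that $\Phi=O(2)$; in particular no linear term is regenerated, and this is precisely where the nilpotency of $A$ is used. Relabelling $\bar{y}\to y$, the system now stands as $\dot{x}=y$, $\dot{y}=\Phi(x,y)$ with $\Phi$ analytic and $\Phi=O(2)$.

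The second stage is purely algebraic: I would Taylor-expand $\Phi$ in the single variable $y$, writing $\Phi(x,y)=\phi(x)+y\,\psi(x)+y^2 f(x,y)$ with $\phi(x):=\Phi(x,0)$, $\psi(x):=\partial_y\Phi(x,0)$, and $f$ the analytic remainder. Grouping by total degree forces $\phi$ and $y\psi(x)$ to vanish to order at least two, so the leading terms are $\phi(x)=a_k x^k+\cdots$ with $k\geq 2$ and $\psi(x)=b_n x^n+\cdots$ with $n\geq 1$. Factoring out the lowest coefficients yields $\phi(x)=a_k x^k(1+h(x))$ and $\psi(x)=b_n x^n(1+g(x))$ with $h(0)=g(0)=0$, analytic, which is exactly the form (\ref{ant-thm}). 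The conditions $a_k\neq 0$, $b_n\neq 0$ merely record that $\phi$ and $\psi$ are not identically zero, so that the orders of vanishing $k,n$ are well defined; the theorem is understood under this generic nondegeneracy.

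I expect the only genuinely delicate step to be the first reduction: one must confirm that $\bar{y}=\dot{x}$ is an admissible analytic near-identity change of coordinates and, crucially, that the new right-hand side $\Phi$ carries no linear part, both of which hinge on the nilpotent structure of the linear part. The remainder is bookkeeping---the expansion in powers of $y$ and the extraction of leading monomials---and is routine. For a complete and careful treatment of this classical reduction I would follow \cite{ant1}, Section 22.
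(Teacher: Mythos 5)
The paper does not actually prove this statement: it quotes Andronov's theorem as a classical result and refers to \cite{ant1}, Section 22, for the proof, so there is no internal argument to compare against, only the standard one in the cited literature. Your sketch is precisely that standard reduction and it is correct: writing the system as $\dot{x}=y+P(x,y)$, $\dot{y}=Q(x,y)$ with $P,Q=O(2)$ (this is where nilpotency enters, since the vanishing second row of the linear part is what forces $Q=O(2)$), the substitution $\bar{y}=\dot{x}=y+P(x,y)$ is an analytic near-identity change of coordinates by the inverse/implicit function theorem, the computation $\dot{\bar{y}}=Q+P_x(y+P)+P_yQ=\Phi(x,\bar{y})=O(2)$ shows no linear term is regenerated, and the expansion of $\Phi$ in powers of $y$ with extraction of the leading monomials of $\Phi(x,0)$ and $\partial_y\Phi(x,0)$ gives exactly the form (\ref{ant-thm}) with $k\geq 2$, $n\geq 1$. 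The only point worth stating more sharply is the status of $a_k\neq 0$, $b_n\neq 0$: these presuppose $\Phi(x,0)\not\equiv 0$ and $\partial_y\Phi(x,0)\not\equiv 0$, and the two failures are not symmetric --- if $\Phi(x,0)\equiv 0$ the origin is not an isolated equilibrium (the whole line $y=0$ consists of equilibria), whereas $\partial_y\Phi(x,0)\equiv 0$ merely means the middle term is absent, a legitimate degenerate subcase which Andronov's classification handles separately (it appears as the option $b_n=0$ in Part 2); you correctly flag this as a nondegeneracy hypothesis rather than something to be proved, so this is a presentational refinement, not a gap.
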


We shall list below all possible non-hyperbolic points only for the simplest, that is lowest-order, values of the exponent $k$. When $k$ is even, $k=2m$, we shall list only the cases with $m=1$, and similarly  when $k$ is odd, $k=2m+1$.

\begin{theorem}[Andronov \emph{et al} 1973] \  Part 2: Lowest exponents.

There are two main quadratic cases, and five cubic ones in total. We have:
\begin{itemize}
    \item\textsc{Quadratic cases, $k=2$.}
\begin{itemize}
\item \underline{Case $A_1$: Cusp.} In this case, $a_{2m}>0$.
 \begin{enumerate}
    \item $b_n=0$.

    Simplest case:  We have the form: \framebox{$a_2x^2$}.
    \item $b_n\neq 0.$

    Simplest case: $n=m=1$, $b_1\neq 0$. We have the form: \framebox{$a_2x^2+b_1 xy$}. This case leads to a codimension-2 Bogdanov-Takens 2nd-order.
 \end{enumerate}
\item \underline{Case $A_2$: Saddle node}: $b_n\neq 0.$ In this case, $a_{2m}<0$.

Simplest case: $n=0<1=m$. We have the form: $b_0 y+a_2x^2.$
\end{itemize}
    \item\textsc{Cubic cases, $k=3$.}

    \begin{itemize}
\item \underline{Case $B_1:$ \framebox{$a_3>0$}: Topological saddle}
\item \underline{Case $B_2: a_3<0$}: Four main cases as per below. Behaviours depend on the expression $\lambda:=b_n^2+4(m+1)a_3.$ We note the two further subcases when $b_n>0$ and $b_n<0$
\begin{enumerate}
\item \underline{a. Focus, or, b. centre}
\begin{enumerate}
\item $b_n= 0$. (No $xy$ term with $n=1$.)
\item $b_n\neq 0$ and $n>m$.

When $m=1$, the simplest case is $n=2$, so we get \framebox{$a_3x^3+b_2x^2y$}: Codimension-2 Bogdanov-Takens 3rd-order.
\item When $n=m$  and $\lambda<0$, simplest case $n=m=1$, we find \framebox{$a_3x^3+b_1xy$}: Codimension-3 versal unfolding.
\end{enumerate}
\item \underline{Topological node}
\begin{enumerate}
\item  $b_n\neq 0$ and $n$ even $<m$, simplest case $m=3,n=2$, we find \framebox{$a_7x^7+b_2x^2y$}.
\item $b_n\neq 0$ and $n=m$ even, so simplest case is $m=2$, and we get \framebox{$a_5x^5+b_2x^2y$}, with $\lambda=b_2^2+12 a_5\geq 0$.
\end{enumerate}
\item \underline{Elliptic point}

We note the two further subcases when $b_n>0$ and $b_n<0$, phase portrait is upside down (reflection in the $x$-axis) when $b_n<0$.
\begin{enumerate}
\item $b_n\neq 0$ and $n$ odd $<m$, simplest case $m=2,n=1$, we find \framebox{$a_5x^5+b_1xy$}.
\item  $b_n\neq 0$ and $n$ even $=m$, simplest case $n=m=1$, we find \\ \framebox{$a_3x^3+b_1xy$}, and so this is a codimension-3 versal unfolding.
\end{enumerate}
\end{enumerate}
\end{itemize}
\end{itemize}
\end{theorem}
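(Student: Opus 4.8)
The plan is to derive the entire classification from the principal quasi-homogeneous part of the normal form (\ref{ant-thm}) by \emph{desingularization}, treating the two parities of $k$ separately since they carry different discrete symmetries. I would first record that, because $a_k\neq 0$ and $b_n\neq 0$, the origin is an isolated singularity whose topological type is finitely determined: the truncated field $\dot{x}=y,\ \dot{y}=a_kx^k+b_nx^ny$ already fixes the local phase portrait, and the analytic factors $1+h$, $1+g$ together with the remainder $y^2f$ only perturb it within its topological class. Establishing this finite-determinacy statement (via a \L{}ojasiewicz inequality for the truncated field, guaranteed by $a_k,b_n\neq 0$) is the first step and reduces everything to a polynomial model.

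For the polynomial model I would introduce quasi-homogeneous weights adapted to the two leading monomials $y$ and $a_kx^k$. Requiring $y\,\pa_x$ and $a_kx^k\,\pa_y$ to have equal weighted degree forces the weight ratio $\mathrm{wt}(y)/\mathrm{wt}(x)=(k+1)/2$, so I take weights $(2,k+1)$ and perform directional blow-ups $x=\pm r^{2},\ y=r^{k+1}\bar{y}$ together with the complementary chart. On the exceptional divisor $r=0$ one obtains, after dividing out the leading power of $r$, a one-dimensional field whose equilibria are the roots of a quadratic in $\bar{y}$ whose discriminant is precisely $\lambda=b_n^2+4(m+1)a_k$ in the balanced case $n=m$ — the coefficient $4(m+1)$ being twice the blow-up weight $k+1=2(m+1)$ of $y$. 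This is the structural reason the statement is organised around the sign of $\lambda$: $\lambda<0$ leaves \emph{no} real equilibria on the divisor (forcing spiralling, hence a focus or centre), while $\lambda\geq 0$ produces real hyperbolic equilibria whose stable/unstable manifolds assemble into sectors. When $n>m$ the $b_n$-term is subdominant and the divisor dynamics is that of the pure $a_kx^k$ field; when $n<m$ it dominates, and the \emph{parity} of $n$ decides whether the two half-divisor points are of the same or opposite character, which is what separates the node case (parabolic sectors only) from the elliptic case (an elliptic sector flanked by hyperbolic ones).

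Next I would reassemble the blown-down picture sector by sector and read off the type. The sign of $a_k$ controls the flow direction along the invariant $x$-axis after resolution: for $k$ odd, $a_k>0$ gives a configuration with four hyperbolic sectors, i.e.\ a topological saddle (Case $B_1$), whereas $a_k<0$ reverses the transverse flow and yields the centre/focus, node, or elliptic alternatives (Case $B_2$) according to the divisor analysis above. For $k$ even the field loses the symmetry under $x\to-x$, so the half-neighbourhoods $x>0$ and $x<0$ behave oppositely; $a_k>0$ then produces the single cusp configuration (Case $A_1$, with the $b_1xy$ term supplying the Bogdanov-Takens second-order refinement), and $a_k<0$ the saddle-node (Case $A_2$). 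As a consistency check at each stage I would compute the Poincar\'e index of the reconstructed portrait ($-1$ for the saddle, $+1$ for node, focus, centre and elliptic point, $0$ for the cusp and saddle-node) from its sectorial decomposition and verify that it matches the index read off directly from the degree of the principal part.

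I expect the genuine obstacle to be twofold. First, the boundary cases $\lambda=0$ are not resolved by a single blow-up: the divisor equilibria coalesce into a non-elementary (semi-hyperbolic or nilpotent) point, so a \emph{second} blow-up and a renewed \L{}ojasiewicz estimate are needed to show the node type persists; this is exactly why the statement files $\lambda=0$ into the $\lambda\geq 0$ node branch rather than the focus branch. Second, and more fundamentally, the blow-up can only certify that the $\lambda<0$ case is \emph{either} a focus \emph{or} a centre and cannot distinguish them on any finite jet — the centre-focus dichotomy requires computing Lyapunov focal quantities to all relevant orders or invoking a time-reversal symmetry. For the FL application this is precisely where the $(\mathbb{Z}_2+t)$-equivariance established earlier becomes decisive: a reversible system forces the centre, so the alternative left open by the abstract theorem is pinned down by the symmetry of the concrete equations rather than by the classification itself.
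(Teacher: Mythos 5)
The paper itself offers no proof of this theorem: it is stated as a classical result, with the proof explicitly deferred to Andronov \emph{et al.} (1973), Section 22 (Ref.\ \cite{ant1}), where it is established by the classical sector-by-sector analysis of normal domains. Your proposal is therefore a genuinely different route --- the modern desingularization proof in the style of Dumortier --- and its core is sound. The quasi-homogeneous weights $(2,k+1)$ are the right ones, and your identification of $\lambda$ checks out: in the balanced case $n=m$, $k=2m+1$, the chart $x=r^2$, $y=r^{k+1}\bar y$ gives, after division by $r^{k-1}$, divisor equilibria solving $\frac{k+1}{2}\bar y^2-b_n\bar y-a_k=0$, whose discriminant is $b_n^2+2(k+1)a_k=b_n^2+4(m+1)a_k$, exactly the $\lambda$ of the statement. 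Your two flagged obstacles are also real: the $\lambda=0$ boundary does require a further blow-up, and the centre--focus dichotomy cannot be decided on any finite jet --- though the latter is harmless here, since the theorem only asserts the disjunction ``focus or centre,'' so no Lyapunov quantities or reversibility argument is actually needed to prove the statement as written.

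The genuine gap is in your treatment of the even case $k=2m$. You claim that $a_k>0$ produces the cusp and $a_k<0$ the saddle-node, justified by the loss of the $x\to-x$ symmetry. This step fails: the transformation $(x,y,t)\to(-x,y,-t)$ preserves $\dot x=y$, sends $a_k\to -a_k$ (for $k$ even) and $b_n\to(-1)^{n+1}b_n$, and both the cusp class and the saddle-node class are invariant under time reversal, so the sign of $a_{2m}$ cannot be a distinguishing topological invariant. What a correct execution of your own blow-up machinery yields in the even case is the standard criterion: a saddle-node if and only if $b_n\neq 0$ and $n<m$ (then the $b_nx^ny$ term, of quasi-degree $2n<2m-1$, dominates the principal part), and a cusp if $b_n=0$ or $n\geq m$, with no condition on the sign of $a_{2m}$ --- as the Bogdanov--Takens case $k=2$, $n=m=1$ confirms, where $\dot y=x^2+xy$ and $\dot y=-x^2+xy$ are carried into one another by the reflection above and are both cusps. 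In fairness, the sign conditions on $a_{2m}$ appear in the paper's own formulation of the theorem, so you have inherited an idiosyncrasy of the statement rather than invented one; but your symmetry argument purporting to derive them is not salvageable, and a careful completion of your proof would expose, rather than reproduce, this discrepancy.
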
\label{ant-thm2}

Andronov's theorems give the normal forms of systems with  linear part given by (\ref{A}). We see that instead of ending up with the linear saddles, nodes, or foci met  the linearized studies, the FL equations admit a great variety of nonlinear, non-hyperbolic points: besides  topological (nonlinear) versions of the saddle, nodes or foci, they admit in addition, the cusps, the saddle-nodes, and the elliptic domain points. We shall see below that in fact these possibilities correspond in turn to only the simplest behaviours to be met in the FL equations. This is  because the full  description of the FL dynamics requires by necessity  taking into account their \emph{bifurcations}.

As an immediate application of  Theorem \ref{ant-thm2} to our problem, we get the following result which justifies the name `FL-cusp' in Section \ref{cusp}.
\begin{corollary}
At the origin of the parameter space $\mu_1,\mu_2=0$ in Theorem \ref{deg-thm-cusp}, the FL dynamics under the assumptions of that theorem is described by the \textbf{cusp} found Theorem \ref{ant-thm2}, Case $A_1$-2.
\end{corollary}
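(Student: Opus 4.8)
The plan is to connect the versal unfolding of Theorem \ref{deg-thm-cusp} at the origin of its parameter space to the classification provided by Andronov's theorem (Theorem \ref{ant-thm2}). First I would set $\mu_1=\mu_2=0$ in the versal family (\ref{va-tot}), which collapses it to
\begin{equation}\label{cusp-origin}
    \begin{split}
    \dot{x}&=y,\\
    \dot{y}&=x^2\pm xy.
    \end{split}
\end{equation}
This is precisely the unperturbed FL normal form of Case A stripped of all unfolding parameters, so the dynamics at the origin of the $(\mu_1,\mu_2)$-plane is governed by this system alone.

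Next I would match (\ref{cusp-origin}) against the general Andronov normal form (\ref{ant-thm}). Comparing term by term, the second equation has $a_k x^k(1+h(x))$ with $k=2$, $a_2=1>0$, $h\equiv 0$, and $b_n x^n y(1+g(x))$ with $n=1$, $b_1=\pm 1\neq 0$, $g\equiv 0$, while the $y^2 f(x,y)$ term is absent. Thus $k=2m$ with $m=1$ and $n=m=1$, which places us squarely in the quadratic regime of Theorem \ref{ant-thm2}. Since $a_{2}>0$, the relevant branch is Case $A_1$ (the cusp), and since $b_1\neq 0$ with the simplest exponents $n=m=1$, we land in subcase $A_1$-2, whose boxed form is $a_2 x^2 + b_1 xy$ — exactly (\ref{cusp-origin}) with $a_2=1$, $b_1=\pm1$. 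Andronov's classification then identifies the origin as a \textbf{cusp} point, which is the assertion of the corollary and simultaneously justifies the terminology `FL-cusp' introduced in Section \ref{cusp}.

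The only genuine content beyond bookkeeping is ensuring the hypotheses of Andronov's theorem are met, namely that the analytic functions $h,g,f$ vanish appropriately and that the nonzero-coefficient conditions $a_2\neq 0$, $b_1\neq 0$ hold; all of these are immediate from the explicit polynomial form (\ref{cusp-origin}). I would therefore not expect a substantial obstacle here: the result is essentially a pattern-matching corollary once the versal family is restricted to its parameter-origin. The main subtlety worth a sentence is that Andronov's theorem classifies the \emph{unperturbed} (parameter-free) system, so one must emphasize that setting $\mu_1=\mu_2=0$ is exactly what makes Theorem \ref{ant-thm2} applicable, whereas for nonzero $(\mu_1,\mu_2)$ one is genuinely in the bifurcation regime rather than at a single classified non-hyperbolic point.
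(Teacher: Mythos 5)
Your proposal is correct and takes essentially the same route as the paper: the paper presents this corollary as an immediate application of Theorem \ref{ant-thm2}, obtained exactly by setting $\mu_1=\mu_2=0$ in the versal family (\ref{va-tot}) (the rescaled normal form with $B=1$, $C=\pm 1$ from Section \ref{rescale}) and matching the resulting system $\dot{x}=y$, $\dot{y}=x^2\pm xy$ to the boxed form $a_2x^2+b_1xy$ of Case $A_1$-2. Your explicit verification of the hypotheses ($a_2=1>0$, $b_1=\pm1\neq 0$, $h=g=f\equiv 0$, $n=m=1$) and the closing remark that Andronov's classification applies only at the parameter origin are exactly the bookkeeping the paper leaves implicit.
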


\subsection{The versal unfolding of the codimension-3  FL-cubic}
We now return to the FL system (\ref{deg-sys1a-2}) and consider the question of what is the topological normal form, or versal unfolding, in this case. We know from Section \ref{main-dif} that the FL equations in this case are more degenerate than their time-asymmetric counterpart Eq. (\ref{deg-thm-cusp}) analysed in Section \ref{cusp}, and we also learnt from the previous subsection that the normal form necessarily contains cubic terms.

In fact, this problem was only solved somewhat recently, in the fundamental work \cite{du1,du2} where it is shown that the topological normal form of Theorem \ref{deg-thm-sfe} is equivalent to the following versal unfolding,
\begin{equation}\label{va-tot5}
    \begin{split}
    \dot{x}&=y,\\
    \dot{y}&=\mu_1+\mu_2 x+\mu_3y \pm x^3 +B xy,
    \end{split}
\end{equation}
and we only need to consider the case $B>0$ (the other case is found by a reflection about the $x$-axis (cf. Theorem \ref{ant-thm2}). Therefore for the current cubic case we arrive at the following result.
\begin{corollary}\label{cor-2}
At the origin of the parameter space $\mu_1,\mu_2,\mu_3=0$ in Eq.  (\ref{va-tot5}), the FL dynamics is described by the following cases:,
\begin{itemize}
\item $(+)$-sign case in (\ref{va-tot5}) corresponds to a \textbf{topological saddle}  (Case  $B_1$ of Theorem \ref{ant-thm2}).
\item $(-)$-sign case in (\ref{va-tot5}) corresponds to either a \textbf{focus} or an \textbf{elliptic domain}   (cases  $B_2 -1c$ and $B_2 -3b$ of Theorem \ref{ant-thm2} respectively.)
\end{itemize}
\end{corollary}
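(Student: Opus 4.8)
The plan is to read off the dynamics at the origin of parameter space directly from Andronov's classification (Theorem \ref{ant-thm2}), since at $\mu_1=\mu_2=\mu_3=0$ the versal unfolding (\ref{va-tot5}) collapses precisely to the normal form (\ref{nor-form3}) to which Andronov's theorem applies. Concretely, setting all unfolding parameters to zero in (\ref{va-tot5}) gives $\dot{x}=y$, $\dot{y}=\pm x^3+Bxy$ with $B>0$, which is exactly the $k=3$ cubic case $\dot{y}=a_3x^3+b_1xy$ in the Andronov normal form (\ref{ant-thm}), with the identifications $a_3=\pm 1$, $n=m=1$, and $b_1=B>0$, and with the higher-order analytic factors $h,g,f$ all vanishing. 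Thus the entire content of the corollary reduces to matching signs of $a_3$ against the branches of Andronov's Part 2.

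First I would dispose of the $(+)$ case. Here $a_3=+1>0$, which is Case $B_1$ of Theorem \ref{ant-thm2}: $a_3>0$ yields a \textbf{topological saddle}, independently of the $b_1xy$ term. This is immediate and requires no computation beyond the sign check.

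The $(-)$ case is the substantive one. Here $a_3=-1<0$, placing us in Case $B_2$, where the behaviour is governed by the discriminant-like quantity $\lambda:=b_n^2+4(m+1)a_3$. With $n=m=1$ and $a_3=-1$ this is $\lambda=B^2-8$. The plan is to separate the two subcases by the sign of $\lambda$: when $\lambda<0$ (that is $0<B<2\sqrt{2}$), we land in Case $B_2$-1c, the focus/centre branch with $n=m=1$ and $\lambda<0$, giving a \textbf{focus}; when we are instead in the elliptic regime, we land in Case $B_2$-3b, the elliptic-domain branch with $n=m$ (again $n=m=1$), giving an \textbf{elliptic domain}. I would state the dichotomy exactly as the two cited cases $B_2$-1c and $B_2$-3b demand, noting that the $b_1>0$ subcase of Andronov fixes the orientation of the phase portrait (the $b_1<0$ portrait being its reflection in the $x$-axis, which is why only $B>0$ need be treated).

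The main obstacle is conceptual rather than computational: one must verify that the reduction is legitimate, i.e.\ that evaluating the \emph{versal family} at the origin of parameter space genuinely recovers the \emph{normal form} whose local dynamics Andronov classifies, and that no unfolding term survives to alter the phase portrait at a non-hyperbolic point. This is precisely the content guaranteed by the construction of (\ref{va-tot5}) as a versal unfolding of (\ref{nor-form3}) established in Theorem \ref{deg-thm-sfe} (via \cite{du1,du2}), so I would invoke that theorem to certify the match and then simply apply Andronov's sign rules. The delicate point to flag is the \emph{boundary} $\lambda=0$ (i.e.\ $B=2\sqrt{2}$), which separates the focus and elliptic regimes and is not resolved by the leading-order analysis alone; I would remark that the generic FL values of $B=6(3\gamma+2)/(3\gamma-2)$ avoid this codimension-one boundary, so that for each admissible $\gamma$ the system falls cleanly into one of the two named cases.
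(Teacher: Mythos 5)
Your proposal is correct and follows essentially the same route as the paper: evaluate the versal family (\ref{va-tot5}) at $\mu_1=\mu_2=\mu_3=0$ to recover the normal form (\ref{nor-form3}) with $A=\pm1$, $B>0$, then read off Case $B_1$ for the $(+)$ sign and, for the $(-)$ sign, split via $\lambda=B^2+8A=B^2-8$ into the focus case ($B<2\sqrt{2}$, Case $B_2$-1c) and the elliptic-domain case ($B>2\sqrt{2}$, Case $B_2$-3b), exactly as the paper does immediately after the corollary. Your extra remarks on the legitimacy of specializing the unfolding to zero parameters and on the codimension-one boundary $B=2\sqrt{2}$ are consistent with, and slightly more explicit than, the paper's treatment.
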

Since by Eq. (\ref{a-b}), $A\gtrless 0$ iff the fluid parameter $\gamma\lessgtr 0$, we conclude by an application of this Corollary that the FL cubic is a topological saddle at the origin for $\gamma<0$ (i.e., the $(+)$-sign case in the Corollary \ref{cor-2}). Using the combination $\lambda=B^2+8A$, we also conclude the the origin is a focus when $B<2\sqrt{2}$, and an elliptic domain when $B>2\sqrt{2}$. Since the form (\ref{va-tot5}) is manifestly time-symmetric our discussion of Theorem \ref{deg-thm-sfe} in the beginning of this Section is thus concluded.

\section{The cases  $\gamma=-2/3$ and $\gamma=2/3$}\label{gam}
In this Section, we complete our discussion of the possible versal unfoldings of the FL equations by treating the remaining two, single $\gamma$-values, cases.
\subsection{$\gamma=-2/3$: the saddle-focus codimension-2 FL cubic}
Let us first treat the Case C in Theorem \ref{deg-thm}, namely, the case $\gamma=-2/3$ for the fluid parameter $\gamma$. Using the method of Section \ref{elim}, we conclude the the quadratic term in the FL equation (\ref{deg-sys2}) of Section \ref{-2/3}, namely, the term
\be
-\frac{3}{2}\left(
  \begin{array}{c}
    H^2\\
-2H\rho \\
  \end{array}
\right),
\ee
is a linear combination of nonresonant vector fields in the space  $L_J^{(2)}(H_2)$ (cf. Eq. (\ref{L-space})) and can thus be eliminated. Hence we must proceed to calculate third-order terms, that is the space $L_J^{(3)}(H_3)$.

This is a problem identical to the one addressed in \cite{cot24a}, Sections 7.1.2, 7.1.3, and we shall present a very brief discussion of the solution leading to the versal unfoldings in this case.

The space $H_3$ of all vector-valued homogeneous polynomials of degree three in  $\mathbb{R}^2$ is given by,
\beq
H_3&=\textrm{span}\left\{
\left(
  \begin{array}{c}
    x^3 \\
    0 \\
  \end{array}
\right),
\left(
\begin{array}{c}
            x^2y \\
            0 \\
          \end{array}
        \right),
        \left(
          \begin{array}{c}
             xy^2 \\
            0 \\
          \end{array}
        \right),
        \left(
          \begin{array}{c}
             y^3 \\
            0 \\
          \end{array}
        \right),
        \left(
          \begin{array}{c}
            0 \\
            x^3 \\
          \end{array}
        \right)\nonumber\right.,\\ &\left.
        \left(
          \begin{array}{c}
            0 \\
            x^2y \\
          \end{array}
        \right),
        \left(
          \begin{array}{c}
            0 \\
            xy^2 \\
          \end{array}
        \right),
        \left(
          \begin{array}{c}
            0 \\
            y^3 \\
          \end{array}
        \right)
   \right\}
\eeq
As in the second-order terms case, this space is the direct sum,
\be\label{dir-sum2}
H_3=L_J^{(3)}(H_3)\oplus G_3,
\ee
where,
\beq
L_J^{(3)}(H_3)&=\textrm{span}\left\{\left(
  \begin{array}{c}
    x^3\\
    -3x^2y \\
  \end{array}
\right),\left(
  \begin{array}{c}
    x^2 y\\
    0 \\
  \end{array}
\right), \left(
  \begin{array}{c}
    xy^2 \\
    0 \\
  \end{array}
\right),\nonumber\right.\\ &\left.\left(
  \begin{array}{c}
    y^3\\
    0 \\
  \end{array}
\right),\left(
  \begin{array}{c}
    0 \\
    y^3 \\
  \end{array}
\right),\left(
  \begin{array}{c}
    0 \\
    xy^2 \\
  \end{array}
\right)
  \right\},
\eeq
while the resonant component is found to be,
\begin{equation}
G_3=\textrm{span}
\left\{
\left(
  \begin{array}{c}
    0 \\
    x^3 \\
  \end{array}
\right),
\left(
  \begin{array}{c}
    0\\
    x^2y \\
  \end{array}
\right)
\right\}.
\end{equation}
Following the discussion in \cite{cot24a}, we find that after rescaling the normal form of the FL equation (\ref{deg-sys2}) is given by,
\begin{equation}\label{nor-form-3rd}
\begin{split}
\dot{x}&=y,\\
\dot{y}&=\pm x^3-x^2y.
\end{split}
\end{equation}
We therefore arrive at the following result for the versal unfolding of the FL equations in this case.
\begin{theorem}\label{deg-thm-3rd}
When  $\gamma=-2/3$, the  FL equations in the form (\ref{deg-sys2a}) versally unfold to
the topological normal form  described by the third-order Bogdanov-Takens versal unfolding, namely,
\begin{equation}\label{va-3rd}
\begin{split}
\dot{x}&=y,\\
\dot{y}&=\mu_1 x+\mu_2 y\pm x^3-x^2y,
\end{split}
\end{equation}
where $\mu_1, \mu_2$ are the two unfolding parameters. This form is also manifestly \emph{time-symmetric}.
\end{theorem}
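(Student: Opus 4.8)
\textbf{Proof proposal for Theorem \ref{deg-thm-3rd}.}

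The plan is to reach the versal unfolding \eqref{va-3rd} in three stages: first reduce \eqref{deg-sys2} to its normal form \eqref{nor-form-3rd} by eliminating all removable nonlinear terms, then append the codimension-two versal extension of the nilpotent linear part, and finally verify both the $\mathbb{Z}_2+t$-equivariance claim and the fact that exactly two unfolding parameters suffice. The first stage is essentially already supplied in the text: since the quadratic term $-\tfrac{3}{2}(H^2,-2H\rho)^T$ lies entirely in the nonresonant space $L_J^{(2)}(H_2)$ of \eqref{L-space}, a near-identity quadratic change of variables removes it completely, pushing the leading nontrivial dynamics to third order. At third order, the decomposition \eqref{dir-sum2} shows that only the resonant component $G_3=\mathrm{span}\{(0,x^3)^T,(0,x^2y)^T\}$ survives, so the normal form must read $\dot{x}=y$, $\dot{y}=a_3 x^3 + b_2 x^2 y + O(4)$. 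After the rescaling $x\to\alpha x,\ y\to\beta y,\ t\to\delta t$ (as in Section \ref{rescale}), I would normalise the coefficients to obtain \eqref{nor-form-3rd} with $a_3=\pm1$ and $b_2=-1$; the sign of $a_3$ is a genuine invariant (it cannot be scaled away because $x^3$ is odd), which is precisely why both signs appear in the statement.

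Second, I would embed \eqref{nor-form-3rd} transversally into its versal family. The organizing singularity here is the nilpotent linear part $J=\bigl(\begin{smallmatrix}0&1\\0&0\end{smallmatrix}\bigr)$ together with the resonant cubic $\pm x^3 - x^2 y$; this is exactly the third-order (cubic) Bogdanov--Takens singularity, whose versal unfolding is classically known to have codimension two. Invoking that classification (the analogue for the cubic case of the construction in \cite{wig}, pp. 413--4, used for the quadratic cusp in Section \ref{cusp}), the minimal transverse family adds the two linear terms $\mu_1 x + \mu_2 y$ to the $\dot{y}$-equation, yielding \eqref{va-3rd} directly. The key point justifying codimension two rather than three is that the $x^3$ coefficient is already pinned to $\pm1$ and the $x^2 y$ coefficient to $-1$ by the normal-form reduction, so only the two missing lower-order monomials $x$ and $y$ need be restored as unfolding parameters; no constant term $\mu_0$ is required because the symmetry forces the origin to remain a fixed branch.

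Third, I would check the equivariance assertion, which is the conceptual heart of the statement. The involution \eqref{t-symm}, $(H,\rho,t)\to(-H,\rho,-t)$, must be tracked through the normal-form and unfolding transformations; since the defining system \eqref{deg-sys2} is $(\mathbb{Z}_2+t)$-equivariant and all the reductions are chosen to respect this symmetry, the resulting family \eqref{va-3rd} inherits invariance under $(x,y,t)\to(-x,-y,-t)$. Concretely, under this map $\dot{x}=y$ is preserved, and in the $\dot{y}$-equation the terms $\mu_1 x$, $\pm x^3$, and $-x^2 y$ are all odd in the combined action while $\mu_2 y$ likewise transforms consistently, so the whole family is time-symmetric exactly as claimed. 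I expect the main obstacle to be precisely this equivariance bookkeeping rather than any single computation: one must confirm that the near-identity transformation \eqref{smoo-cha}-type change and the subsequent rescaling can be taken to commute with the $\mathbb{Z}_2$-action, so that no symmetry-breaking terms (such as an even $x^2 y^2$ contribution or a spurious constant) are reintroduced. Once the transformations are verified to be equivariant, the versality in the symmetric category and the stated codimension follow from the standard cubic Bogdanov--Takens theory cited above, completing the proof.
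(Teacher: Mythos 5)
Your first two stages follow the same route as the paper's own proof: the quadratic term of (\ref{deg-sys2}) lies entirely in the nonresonant space $L_J^{(2)}(H_2)$ of (\ref{L-space}) and is removed by a near-identity change; the decomposition (\ref{dir-sum2}) then leaves only the resonant cubics $G_3=\mathrm{span}\{(0,x^3)^T,(0,x^2y)^T\}$, so the normal form is $\dot x=y$, $\dot y=a_3x^3+b_2x^2y+O(4)$; a rescaling pins the coefficients to $\pm 1$ and $-1$, giving (\ref{nor-form-3rd}); and the codimension-two family (\ref{va-3rd}) is obtained by appending $\mu_1x+\mu_2y$, invoking the classical cubic (equivariant) Bogdanov--Takens classification, exactly as the paper does (the paper outsources the details to its cited earlier work). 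Your remark that the symmetry forbids a constant unfolding term, so that two parameters suffice, is correct and is indeed what separates this case from the codimension-3 cubic of Section \ref{3-cub}.

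The genuine gap is in your stage 3, which you yourself call the conceptual heart. The family (\ref{va-3rd}) is \emph{not} invariant under $(x,y,t)\to(-x,-y,-t)$: under that map $d(-x)/d(-t)=dx/dt$ while $y\to-y$, so $\dot x=y$ transforms into $\dot x=-y$, and the $\dot y$-equation likewise acquires an overall sign; the map sends the vector field to its time reverse, not to itself. Your sentence ``under this map $\dot x=y$ is preserved'' is false. The symmetry that (\ref{va-3rd}) actually possesses is the pure $\mathbb{Z}_2$ action $(x,y)\to(-x,-y)$ with $t$ \emph{unchanged}; the terms $\mu_2y$ and $-x^2y$ are damping terms that break every involution containing $t\to-t$ (the reversible part $\dot x=y$, $\dot y=\mu_1x\pm x^3$ is invariant under $(x,y,t)\to(x,-y,-t)$, and it is precisely $\mu_2y$ and $-x^2y$ that destroy this). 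Consequently your verification of the clause ``manifestly time-symmetric'' does not go through as written: to address it one must track the original symmetry (\ref{t-symm}) of (\ref{deg-sys2}) through the reduction and interpret the claim in the paper's sense for the equivariance class of the unfolding, rather than assert a literal reflection-plus-time-reversal invariance of (\ref{va-3rd}), which fails. Note that the paper itself only asserts this clause without computation, so your stages 1--2 already reproduce everything the paper actually proves; but the check you added, as the proposal's justification of that clause, is wrong and would need to be replaced.
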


\subsection{$\gamma=2/3$: the $\mathbb{Z}_2$-equivariant codimension-2 unfolding}
We now move on to discuss the last case of Theorem \ref{deg-thm}, case D, namely the system (\ref{deg-sys3}) which we reproduce here (also setting $x:=H,y:=\rho$ for uniformity of notation, and rescaling $t\to -t$),
\be \label{deg-sys3-1}
\left(
  \begin{array}{c}
    \dot{x} \\
    \dot{y} \\
  \end{array}
\right)
=\left(
  \begin{array}{c}
    x^2\\
2 xy \\
  \end{array}
\right).
\ee
As we discussed in Section \ref{2/3}, the linear part of this system has also two zero eigenvalues but instead of being nilpotent like in all previous cases, it is the zero matrix. This system  (\ref{deg-sys3-1})  is also invariant under the $\mathbb{Z}_2$-symmetry $x\to x,y\to -y$.

In the literature such systems are called truncated amplitude systems because the may be regarded as reductions of three-dimensional systems having an extra equation that describes a rotation around the $x$-axis.  In gravitation such systems appear as  governing the dynamics of the optical scalars in spacetime, and also as the convergence-vorticity systems associated with vorticity congruences, cf. \cite{cot24a}, Sections 5, 6. It is interesting that exactly the same situation arises in the totally distinct FL, $\gamma=2/3$ case presently.

The versal unfolding of the Eq. (\ref{deg-sys3-1}) is given by \cite{zol84}\footnote{This versal family is equivalent to the one used in \cite{zol84}. As Zholondek proves in his work, that equation represents all non-degenerate (versal) unfoldings. The remaining unfoldings are degenerate, non-transversal families which are described as a finite union of codimension-1 submanifolds.},
\begin{equation}\label{va-2/3}
\begin{split}
\dot{x}&=\mu_1+x^2+sy^2 ,\\
\dot{y}&=\mu_2 y+ 2xy +x^2y,
\end{split}
\end{equation}
where $\mu_1,\mu_2$ are two unfolding parameters, and $s=\pm 1$ is another modular parameter. The properties of this versal family have been discussed in many references and the bifurcation diagrams have been obtained (cf. \cite{cot24a} and refs. therein).

\section{Bifurcation diagrams and FL metamorphoses}\label{bifn-dia}
In this Section, after discussing the role of the fluid parameter $\gamma$, we  present some aspects of the very rich patterns of bifurcations that occur at different $\gamma$ parameter ranges and which describe the amazing variety of dynamical features that stem from the four versal unfoldings  of the FL universes, namely, the versal families (\ref{va-tot}), (\ref{va-tot4}), (\ref{va-3rd}), and (\ref{va-2/3}). We shall also discuss the `rough' nature of the emerging solutions.

A  very important aspect of the versal solutions is their continuous  `metamorphoses' (or `perestroikas' in other terminology \cite{ar94}) into new forms of evolution effected by some of the bifurcations, e.g., the pitchfork. New solutions are typically  created by  saddle-node bifurcations and then revisited by other kinds of transitions. When this happens one observes an old structure transfiguring to another one precisely when the system passes through a bifurcation set in the corresponding parameter space.

A bifurcation diagram is to a versal unfolding what a solution set is to a hyperbolic dynamical system, namely, it contains all possible qualitative information about the behaviour of the solutions.
For a given versal family, the bifurcation diagrams  consist two parts,
\begin{itemize}
\item The parameter diagram.

This is the parameter space (2-, or 3-dimensional in all cases below)  divided into separate connected regions, the \emph{strata}, at each point of which the system has a phase portrait topologically equivalent to that of any other point in the same region. The strata are separated by well-defined \emph{bifurcation sets}, points, curves or higher-dimensional manifolds in parameter space where all qualitative changes in the dynamics take place.
\item The phase portraits.

These are representative of the dynamics in each one of the strata of the parameter diagram.
\end{itemize}
Normally a bifurcation diagram appears as a superposition of phase portraits on the parameter space (one per stratum separated by the bifurcation sets).

Once the bifurcation diagram(s) of a system is known, we are equipped with a complete classification scheme of all possible modes of behaviour of the solutions of the system as well as their bifurcations. That is, we know the possible transitions between solutions upon parameter variation, when a particular type of universe is transfigured into another and how this metamorphosis is exactly done. Given that the versal unfolding of a given equation contains information about \emph{all} possible perturbations of that equation, the bifurcation diagram is in a sense the `theory of everything' of the equation one started with - it contains the answer to any question that may be posed about that equation.
 \begin{figure}
\centering
\includegraphics[width=\textwidth]{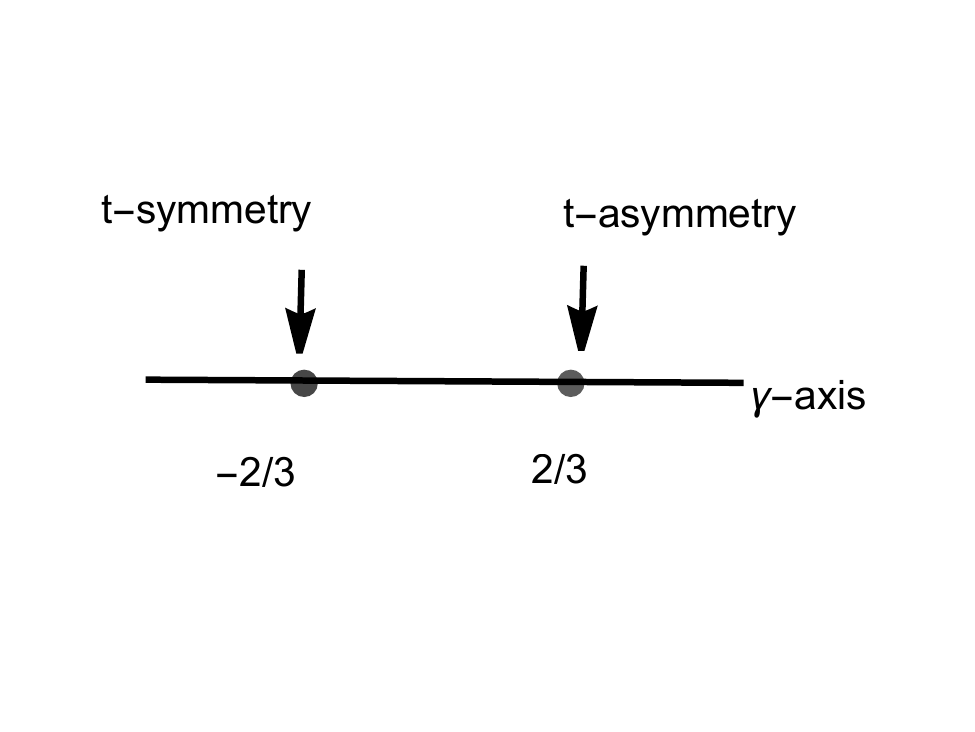}
\caption{The axis of the  parameter $\gamma$. Away from the two shown values of $\gamma$, the system can be either time-symmetric or time-asymmetric. In either case, a passage through the time-(a)symmetric value $-2/3$ (resp. $2/3$) dualizes the behaviour of a time-(a)symmetric that existed for the system when $\gamma\neq\pm 2/3$.}\label{t-gamma}
\end{figure}

\subsection{The role of the fluid parameter $\gamma$}\label{roleg}
All different versal unfoldings of the FL equations emerge in the five possible sets of values of the fluid parameter $\gamma$ and can be classified as to whether or not they  are $(\mathbb{Z}_2 +t)$-equivariant. When they are, we call them time-symmetric, otherwise they are time-asymmetric. This is shown in Fig. \ref{t-gamma}.

To understand what the role of the $\gamma$-parameter is, suppose that the system is time-asymmetric for some value of $\gamma<-2/3$, and we  wish to understand the behaviour of the system as $\gamma$ increases (motion to the right of the $\gamma$-axis in Fig. \ref{t-gamma}). Then on passing through the value $\gamma=-2/3$ the system becomes manifestly time-symmetric and remains so until it crosses the value $\gamma=2/3$ where it changes to time-asymmetry and keeps it for $\gamma>2/3$. The system may be either time-symmetric or asymmetric for any $\gamma\neq \pm 2/3$, but it is the crossing of the two critical values $\pm 2/3$ that changes the type of time-symmetry the system possesses.

In this example, when  the system is in the $\gamma<-2/3$ region of the  $\gamma$-parameter, it behaves like an FL cusp (that is it is time-asymmetric) with bifurcation diagrams described as in Subsection \ref{type-1} below. When $\gamma$ crosses the value $-2/3$, the system dualizes to time-symmetry and the bifurcation diagrams of \ref{type-3}, then in the region $-2/3<\gamma<2/3$ remains time-symmetric with bifurcation diagrams as those described in subsection \ref{type-2}, then when $\gamma=2/3$ the system dualizes again back to time-asymmetry and the bifurcation diagrams are those described in \ref{type-4}, to finally change  to those of subsection \ref{type-2} (retaining the time-asymmetry). All such evolutionary changes are smooth, the universe point moves on  some orbit of a phase portrait in some bifurcation diagram, continues its motion from phase portrait to the next passing through subsequent bifurcation sets, strata, and eventually through whole bifurcation diagrams.

We conclude that the role of $\gamma$ in the evolution of the FL equations is to drive the system between topologically inequivalent bifurcation diagrams available, while those of the unfolding parameters ($\Lambda$ included) are to drive the system through each bifurcation diagram.
Once the system is found itself in one of the five possible `ranges' of $\gamma$-values (the two point values are also regarded as two of these ranges), the consequent behaviour of the system is completely determined by the unfolding parameters $\mu_i,i=1,2$ (and also $\mu_3$ for the codimension-3 unfolding (\ref{va-tot4}))  present in the corresponding versal families and their bifurcation diagrams. (We note again that since in all cases we are unfolding from zero $\lambda$, one of the unfolding parameters must be the cosmological constant $\Lambda$.)

\subsection{Features of the bifurcation diagrams}\label{b-dia}
We describe below some aspects of these diagrams. Overall, one counts nine possible topologically distinct  bifurcation diagrams for the FL equations. We emphasize  that the (normal forms of the) original FL equations are found when ones sets the $\mu$ parameters equal to zero.
\subsubsection{The $\gamma=-2/3$ case}\label{type-3}
The bifurcation diagrams correspond to the cubic Bogdanov-Takens bifurcation (\ref{va-3rd}), namely,
\begin{equation}\label{va-3rd-A}
\begin{split}
\dot{x}&=y,\\
\dot{y}&=\mu_1 x+\mu_2 y\pm x^3-x^2y,
\end{split}
\end{equation}
are shown in Figs. \ref{fig2}, \ref{fig3} for the positive and negative moduli respectively.

\begin{figure}
\centering
\includegraphics[width=\textwidth]{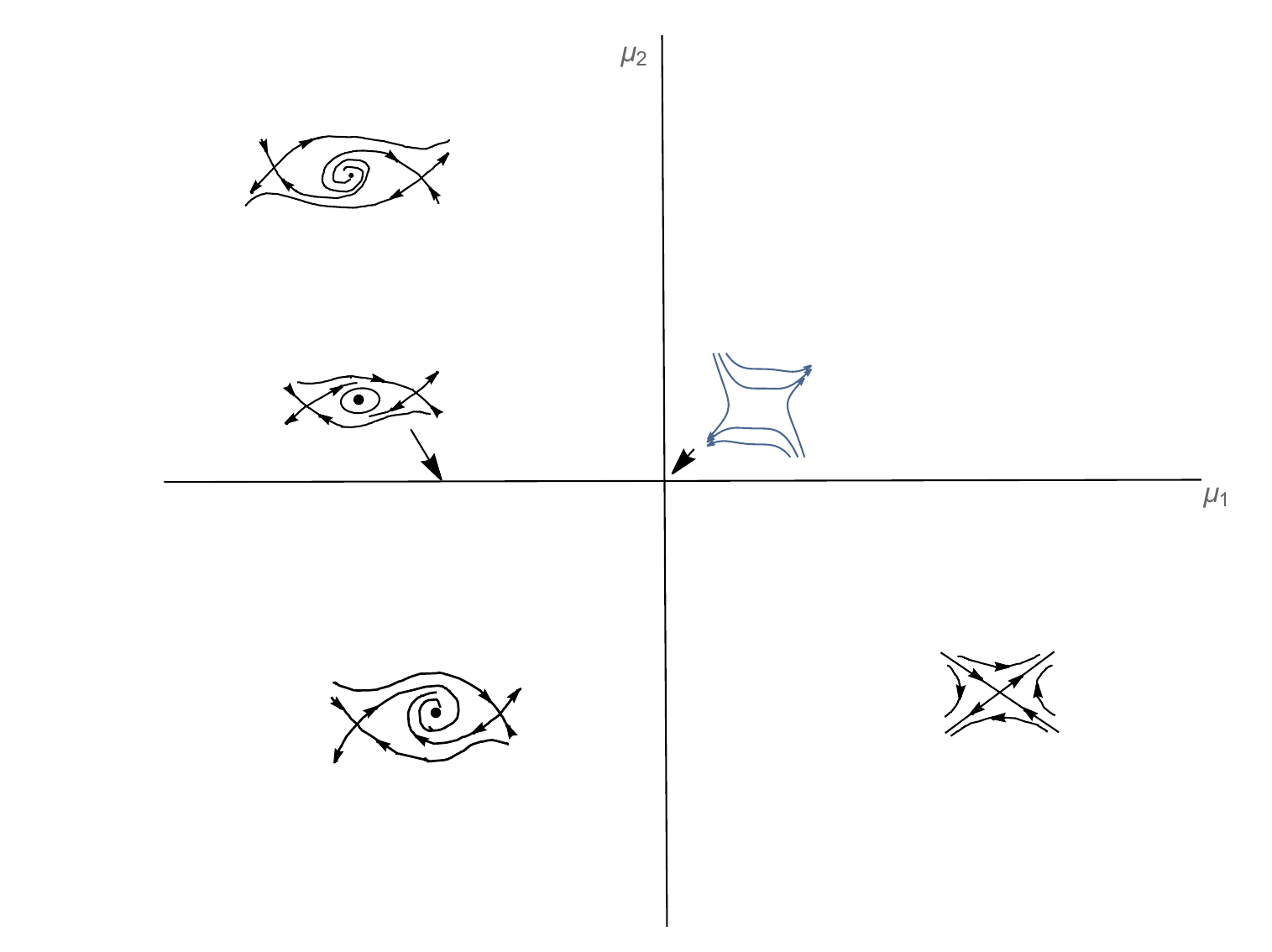}
\caption{The  bifurcation diagram for the $\gamma=-2/3$, positive moduli case. }\label{fig2}
\end{figure}
 \begin{figure}
\centering
\includegraphics[width=\textwidth]{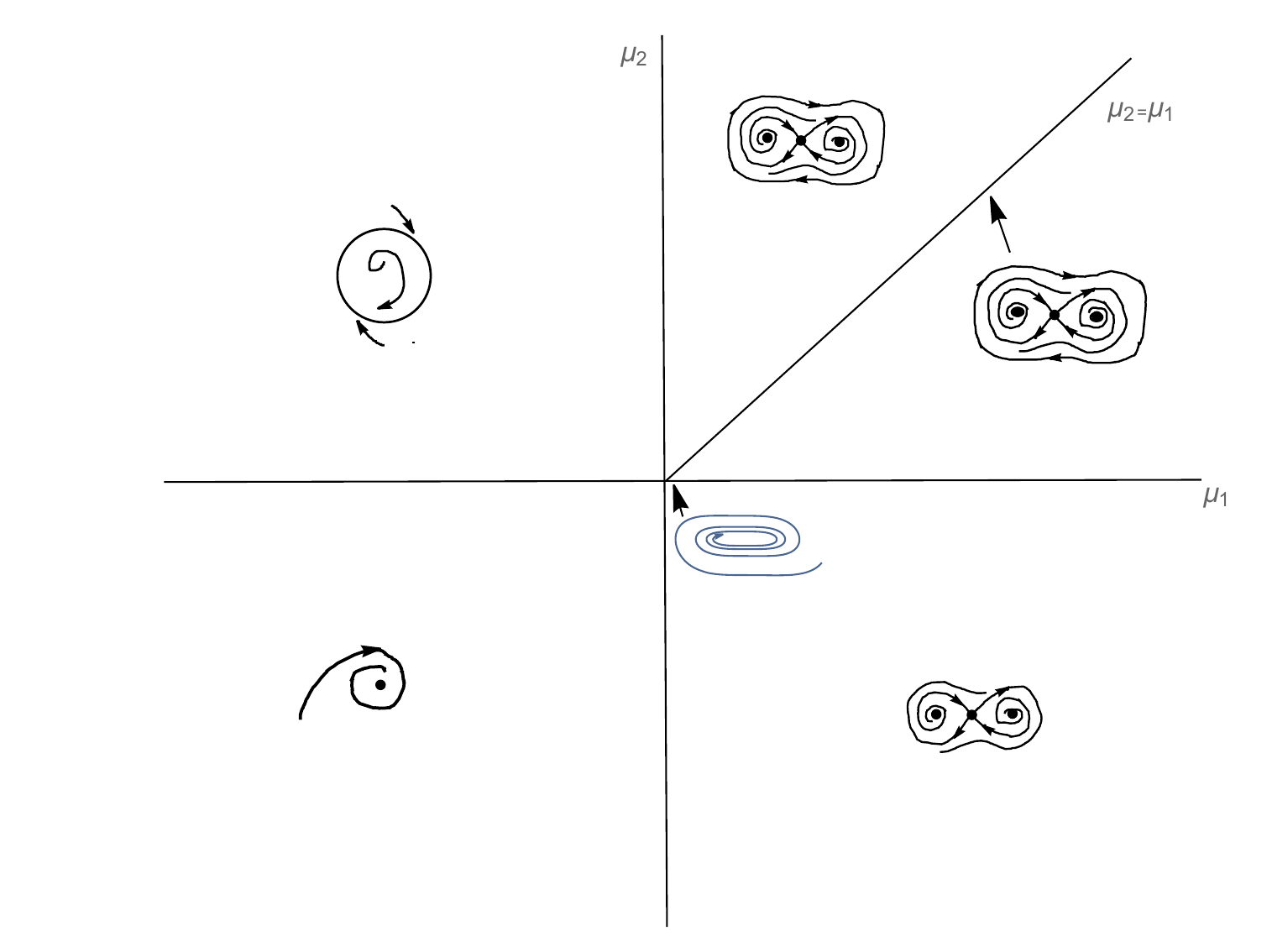}
\caption{The  bifurcation diagram for the $\gamma=-2/3$, negative moduli case. }\label{fig3}
\end{figure}
It is interesting that the same bifurcation dynamics occurs here as in \cite{cot24a}, figs. 13, 14 for the $(+)$ and the  $(-)$ case respectively. We shall only briefly recall here the main features of the $(+)$-case, directing the reader to the Section 7.3 of \cite{cot24a} for the other case.
We summarize in a qualitative way the $(+)$-case following closely \cite{cot24a}, Sect. 7.2, (however, much of the intriguing difficulties in obtaining these results are now lost in the present summary). The origin $\mathcal{E}_{3}=(0,0)$ is a saddle for $\mu_1>0$ and any sign of $\mu_2$, while when $\mu_1<0$, it is  a source for $\mu_2>0$, and a sink for $\mu_2<0$. Apart for the origin, there are two \emph{new} equilibria $\mathcal{E}_{1,2}=(\pm\sqrt{-\mu_1},0)$ which are real provided $\mu_1<0$. They are both saddles.

There are two kinds of local bifurcations:
\begin{itemize}
  \item \emph{A pitchfork bifurcation when $\mu_1=0$.} So in this case the behaviour of the system near $\mu_1$ is supercritical (resp. subscritical) for $\mu_2>0$ and the origin repels all orbits (resp.  $\mu_2<0$ the origin is an attractor to all orbits).
  \item \emph{A Hopf bifurcation when $\mu_2=0$.} When  $\mu_1<0$, this is supercritical (the bifurcating limit cycle is stable at $\mu_2=0$). We note that there are various regions of parameter space with no periodic solutions.
\end{itemize}
In the $(-)$   case, we have again analogous bifurcations but all conclusions are `dualized' (`super'$\to$`sub', etc), while the new equilibria are centers. We also note that the origin in parameter space is now a \emph{focus} (instead of a saddle in the $(+)$-case).

\subsubsection{The $\gamma=2/3$ case}\label{type-4}
The bifurcation diagrams correspond to those of the $\mathbb{Z}_2$-symmetric versal family (\ref{va-2/3}), namely,
\begin{equation}\label{va-2/3-A}
\begin{split}
\dot{x}&=\mu_1+x^2+sy^2 ,\\
\dot{y}&=\mu_2 y+ 2xy +x^2y,
\end{split}
\end{equation}
are shown in Figs. \ref{fig-4}, \ref{fig-5} for  $s=\pm 1$ respectively (interestingly, similar dynamics is found in  \cite{cot24a}, fig. 8 for the $(+)$-case, and fig. 11 for $(-)$-case there). We again only briefly summarize here the main features of the ensuing bifurcations.

For the $(+)$-case (cf. \cite{cot24a}, Sect. 5), at zero parameter the dynamics is characterized by the existence of the straight line separatrices, $y=\pm x$, a key fact for the construction of the phase portrait in this case. All solutions are unstable (and the origin represents only de Sitter space in the present subsection).

\begin{figure}
\centering
\includegraphics[width=\textwidth]{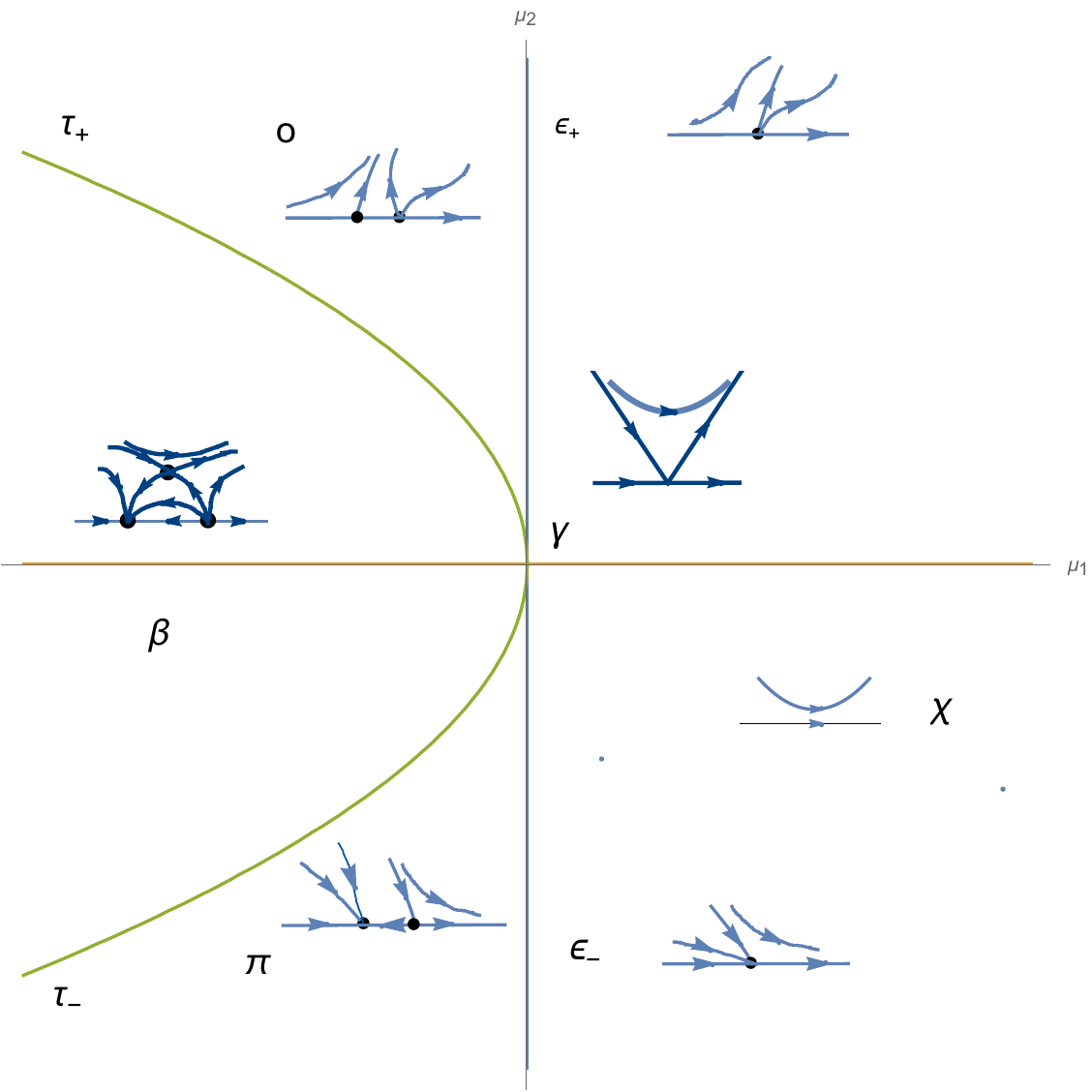}
\caption{The  bifurcation diagram for the $\gamma=2/3$, positive moduli case. }\label{fig-4}
\end{figure}
 \begin{figure}
\centering
\includegraphics[width=\textwidth]{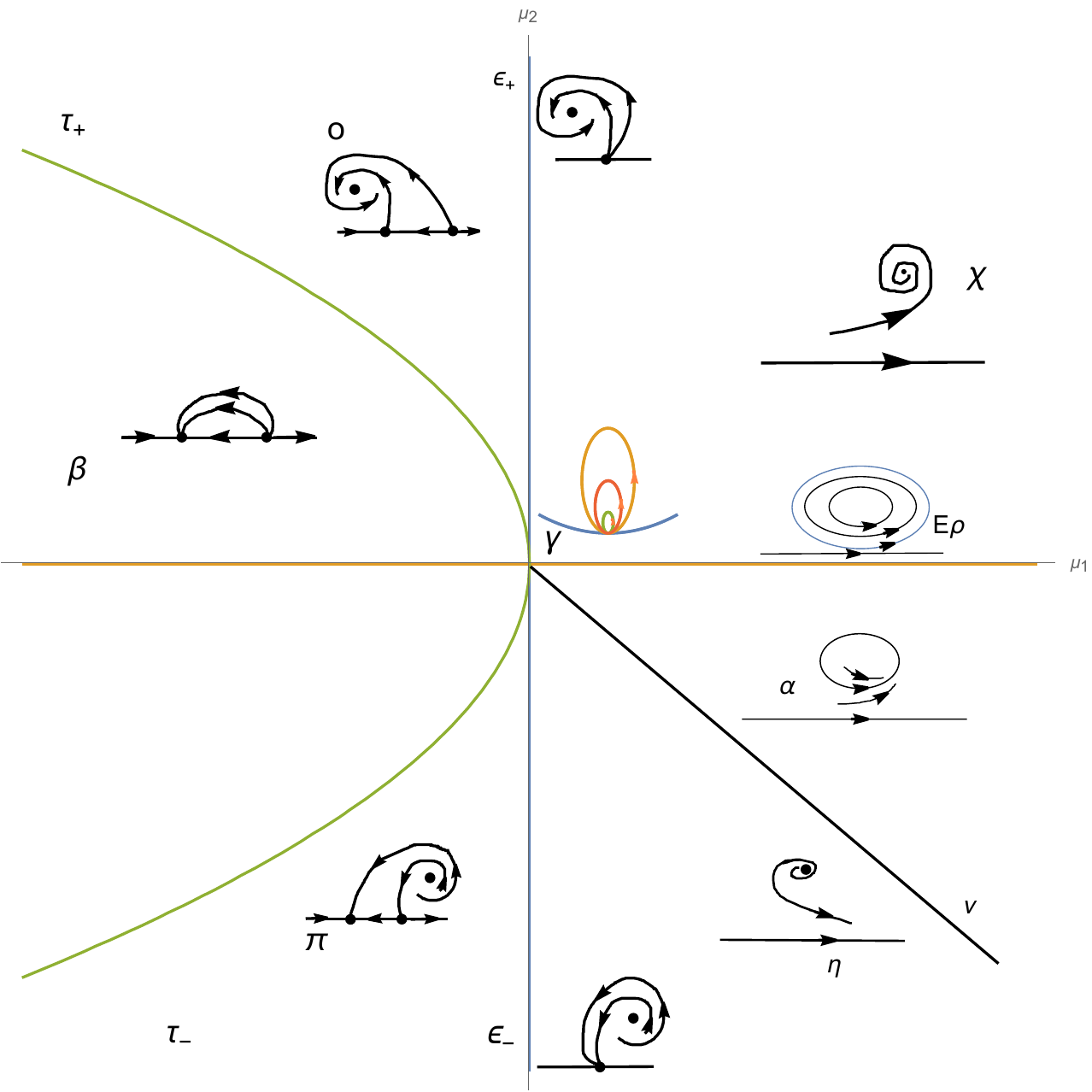}
\caption{The  bifurcation diagram for the $\gamma=2/3$, negative moduli case. }\label{fig-5}
\end{figure}
There are three new equilibria (i.e., `fixed branches') in this case, all parameter-dependent, but none of them exists in the right half of the parameter space $(\mu_1,\mu_2)$:
\begin{enumerate}
  \item The first two are, $\mathcal{E}_{1,2}=(\mp\sqrt{-\mu_1},0)$, $\mu_1<0.$
  \item The third fixed branch is,
\be\label{e3}
\mathcal{E}_{3}=\left(-\frac{\mu_2}{n},\sqrt{-\left(\frac{\mu_2^2}{n^2}+\mu_1 \right)}\right),\quad\mu_1<-\mu_2^2/n^2.
\ee
\end{enumerate}
The stability of the equilibria $\mathcal{E}_{1,2}$ depends on the sign of the combination $\mu_2-2\sqrt{-\mu_1}$, $\mathcal{E}_{1}$ is a saddle or sink, and $\mathcal{E}_{2}$ a source or saddle, when $\mu_2-2\sqrt{-\mu_1}\gtrless 0$ in the two case respectively, while  $\mathcal{E}_{3}$ is a saddle.

The equilibria $\mathcal{E}_{1,2}$ are created in a saddle-node bifurcation, in the sense suggested by the centre manifold dynamics, $\dot{x}=\mu_1+x^2,$ which also implies that the saddle-node is solely $H$- (i.e., $x$)-dominated. On the other hand, the equilibrium $\mathcal{E}_{3}$ is \emph{created} in a supercritical pitchfork bifurcation by the equilibrium $\mathcal{E}_{1}$ crossing the $\tau_+$-branch: $\mu_1=-\mu_2^2/4, \mu_2>0$ transversally, and also in a subcritical pitchfork bifurcation by the transversal crossing of the  $\tau_-$-branch: $\mu_1=-\mu_2^2/4, \mu_2<0$  by the equilibrium $\mathcal{E}_{2}$ (or \emph{destroyed} in the opposite directions of the curve crossings).  We note that the equilibrium $\mathcal{E}_{3}$ does not bifurcate further (i.e., into something else different from  $\mathcal{E}_{1,2}$).

We see the different roles played by the two bifurcations, saddle-node and pitchfork, in creating new solutions and also in transfiguring one into another. In this way, one witnesses the transitions described by the smooth transformations,
\be
\mathcal{E}_3\leftrightsquigarrow\mathcal{E}_1,\quad\textrm{on the}\,\,\tau_+\textrm{-branch},
\ee
and,
\be
\mathcal{E}_3\leftrightsquigarrow\mathcal{E}_2,\quad\textrm{on the}\,\,\tau_-\textrm{-branch},
\ee
(cf. \cite{cot24a}, Sect. 5.4.2,3 for complete proofs of this result).

The $(-)$-case (cf. \cite{cot24a}, Sect. 6) is very different qualitatively than the previous one. At zero parameter the dynamics is characterized by the existence of closed trajectories (only!). There are again three parameter-dependent equilibria which we again denote by $\mathcal{E}_{1,2,3}$. A main difference is that here $\mathcal{E}_{3}$ is a node rather than a saddle of the previous case. There are saddle-node and pitchfork bifurcations of the equilibria $\mathcal{E}_{1,2}$ like before, only that the latter are to a node (not a saddle), which is a sink (resp. source) for $\mu_2>0$ (resp. $<0$).

There is however, one extra possibility for the equilibrium $\mathcal{E}_{3}$, namely, to become a centre acquiring a pair of purely imaginary eigenvalues, and bifurcate in a degenerate Hopf bifurcation thus creating an infinity of closed orbits. It is here that the effects of the cubic term present in the versal unfolding of this case enter and create  a stabilization mechanism that leads to a unique stable linit cycle for the versal dynamics of the FL universes. Upon further parameter variation, one sees that this cycle disappears by becoming invisible, cf. \cite{cot24a} and refs. therein, where a more detailed description of the various bifurcations is given.

\subsubsection{The FL cusp}\label{type-1}
Using similar methods to those in the last two subsections, one is led to the construction of the bifurcation diagrams in this case too and the resulting versal dynamics. The bifurcation diagrams that describe the FL cusp (\ref{va-tot}) correspond to the \emph{quadratic} Bogdanov-Takens versal family,
\begin{equation}\label{va-tot-A}
    \begin{split}
    \dot{x}&=y,\\
    \dot{y}&=\mu_1+\mu_2 y+x^2\pm xy,
    \end{split}
\end{equation}
are shown in Fig. \ref{fig-6}, \ref{fig-7},  and may be found in various sources, for instance, \cite{gh83}, p. 366, \cite{wig}, pp. 444-5, or \cite{ar94}, pp. 24-5, for the $(+)$-case (the  $(-)$-case is easily obtained by symmetry).

\begin{figure}[htp]
\centering
\includegraphics[width=\textwidth]{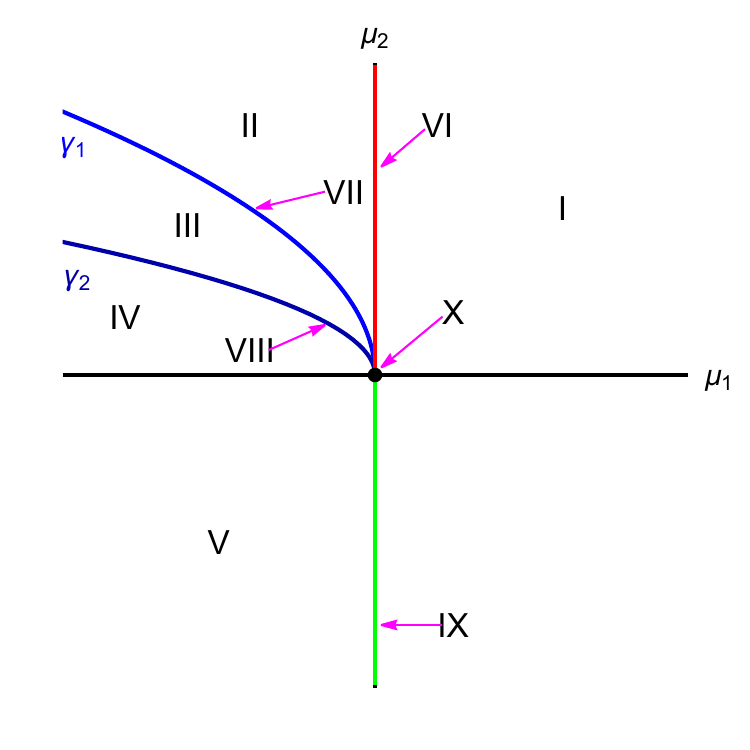}
\caption{The stratified parameter space, part of the bifurcation diagram of FL cusp, positive moduli case. }\label{fig-6}
\end{figure}
\begin{figure}[htp]
\centering

  \includegraphics[width=0.32\textwidth]{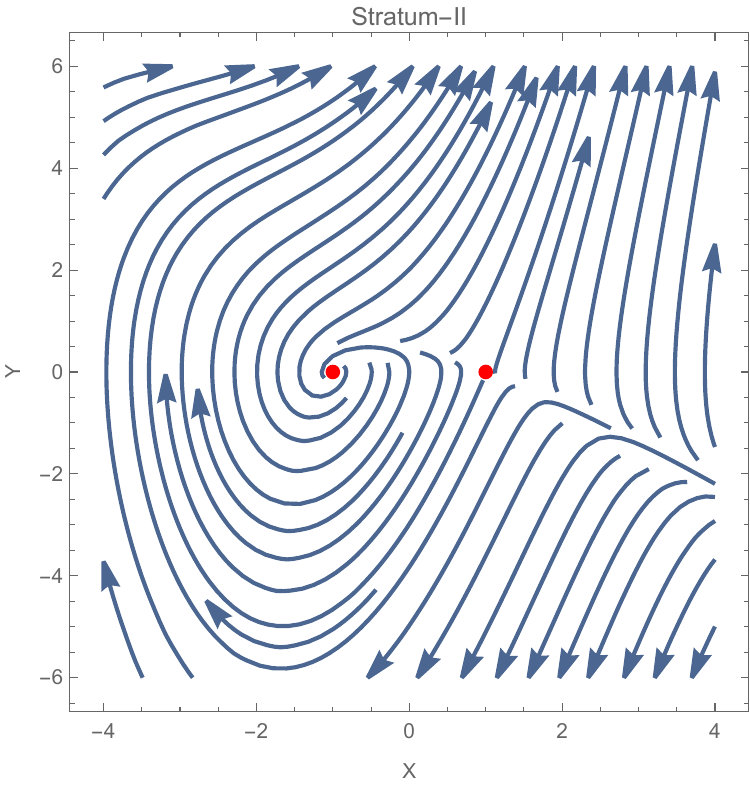}\hfill
  \includegraphics[width=0.32\textwidth]{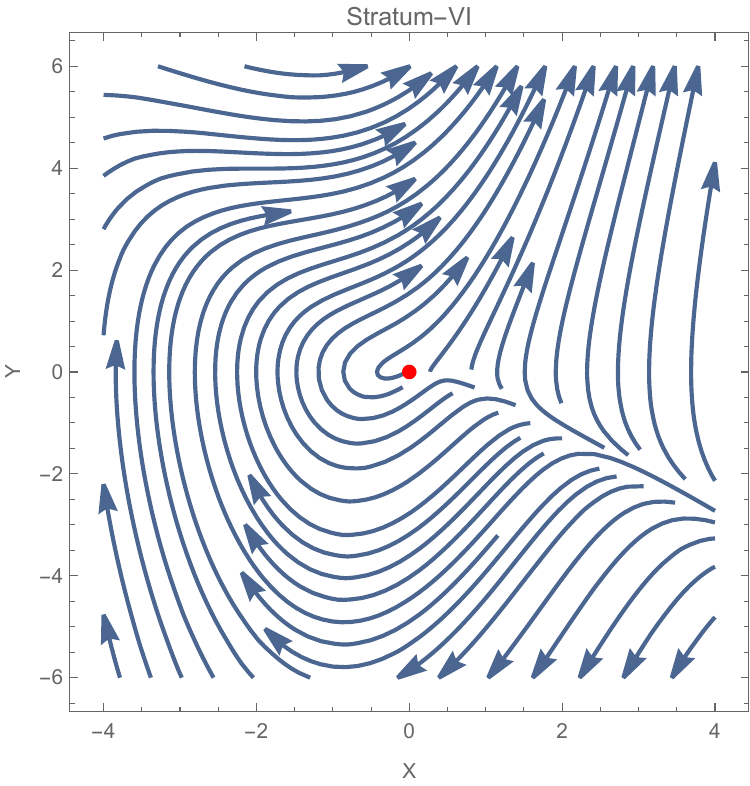}\hfill
  \includegraphics[width=0.32\textwidth]{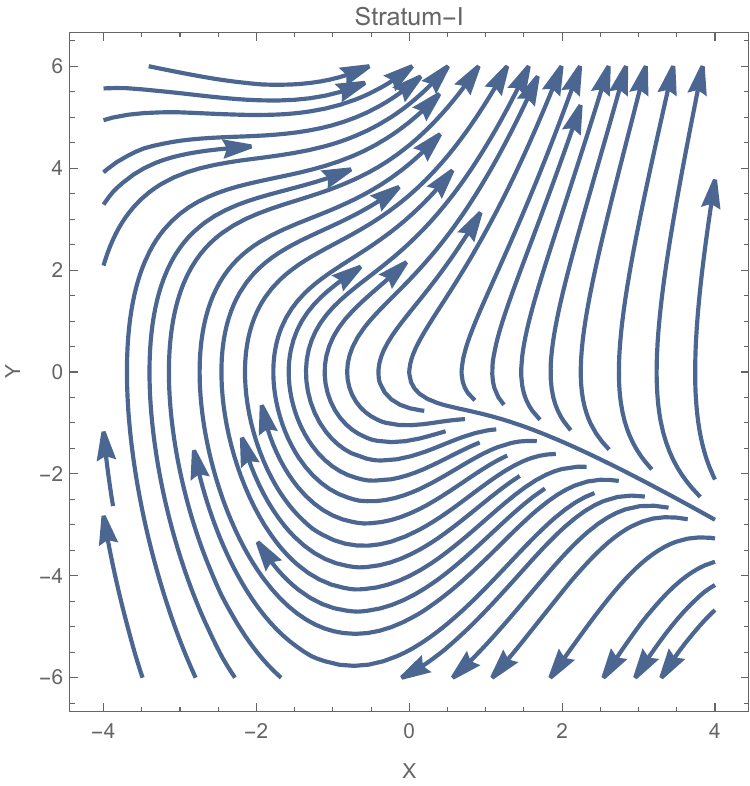}

  \medskip

  \includegraphics[width=0.32\textwidth]{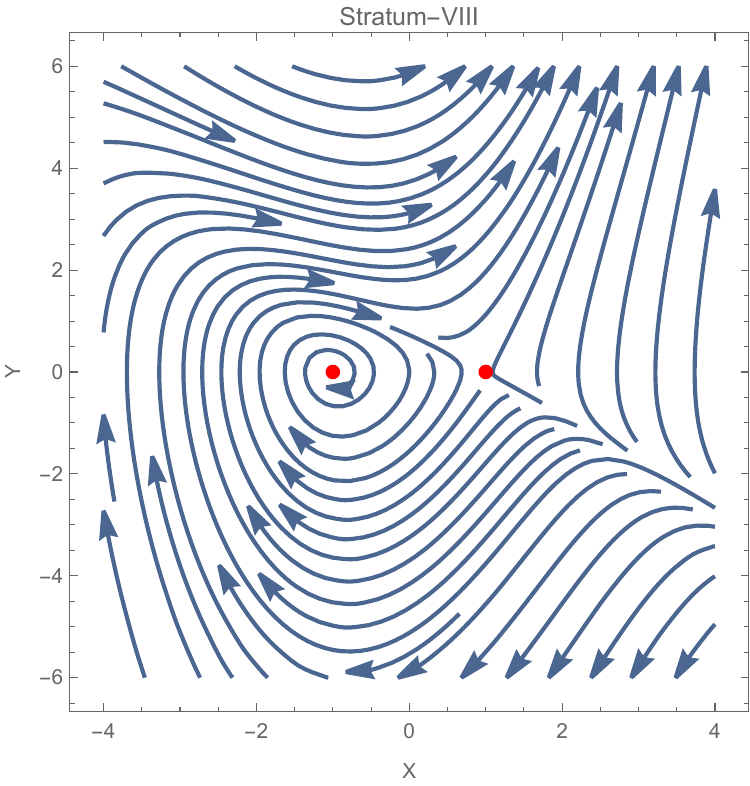}\hfill
  \includegraphics[width=0.32\textwidth]{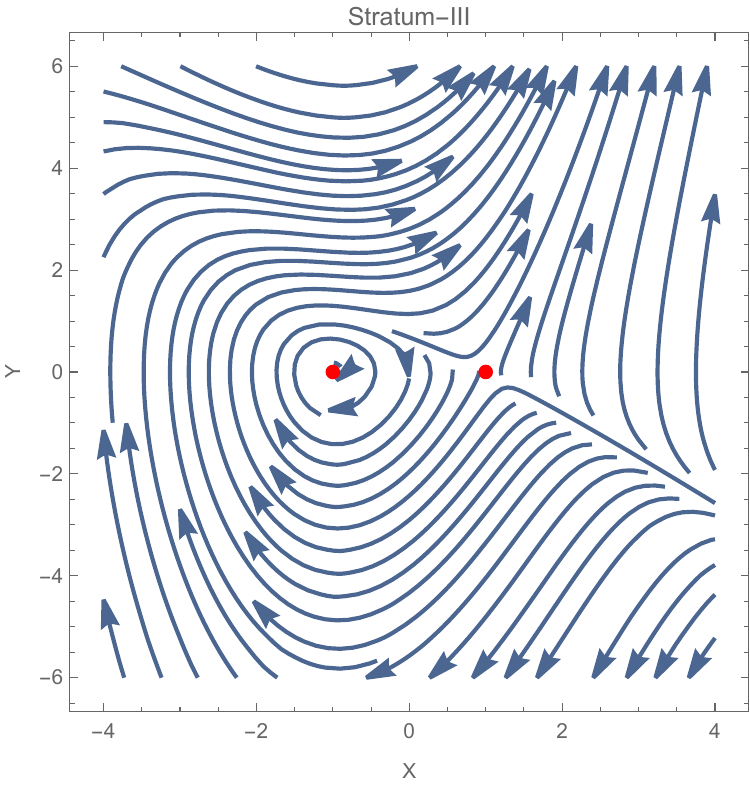}\hfill
  \includegraphics[width=0.32\textwidth]{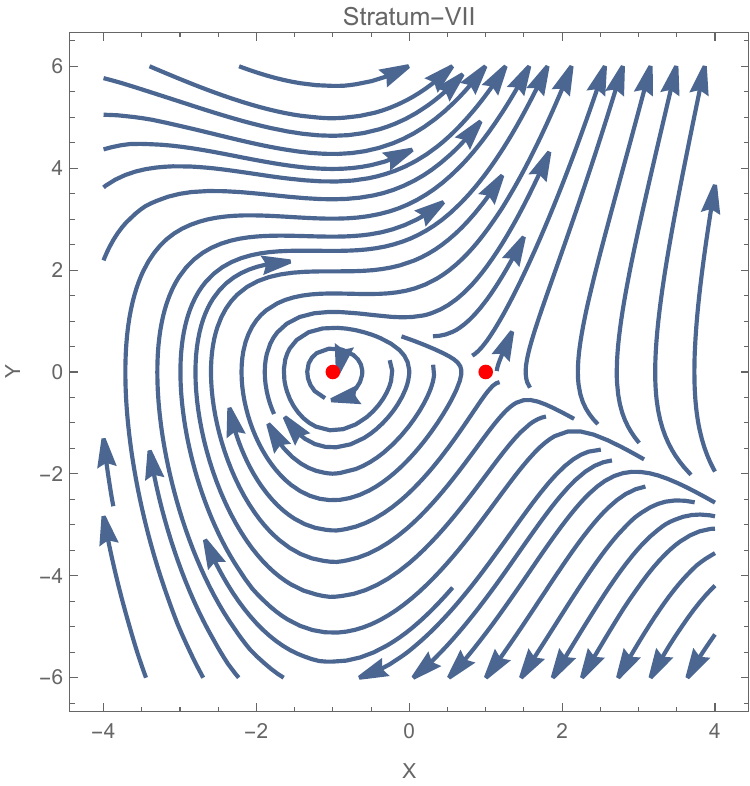}

  \medskip

  \includegraphics[width=0.32\textwidth]{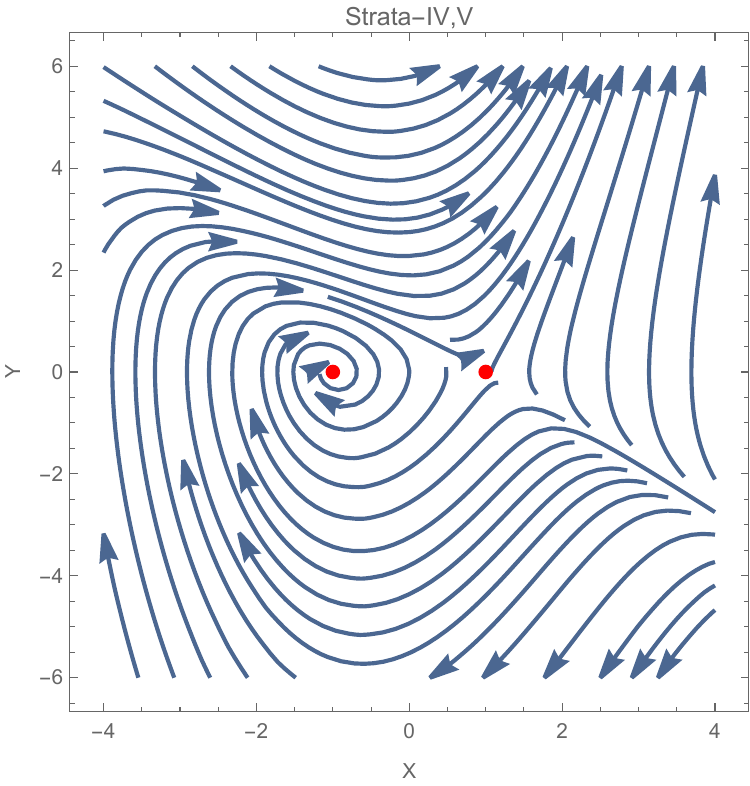}\hfill
  \includegraphics[width=0.32\textwidth]{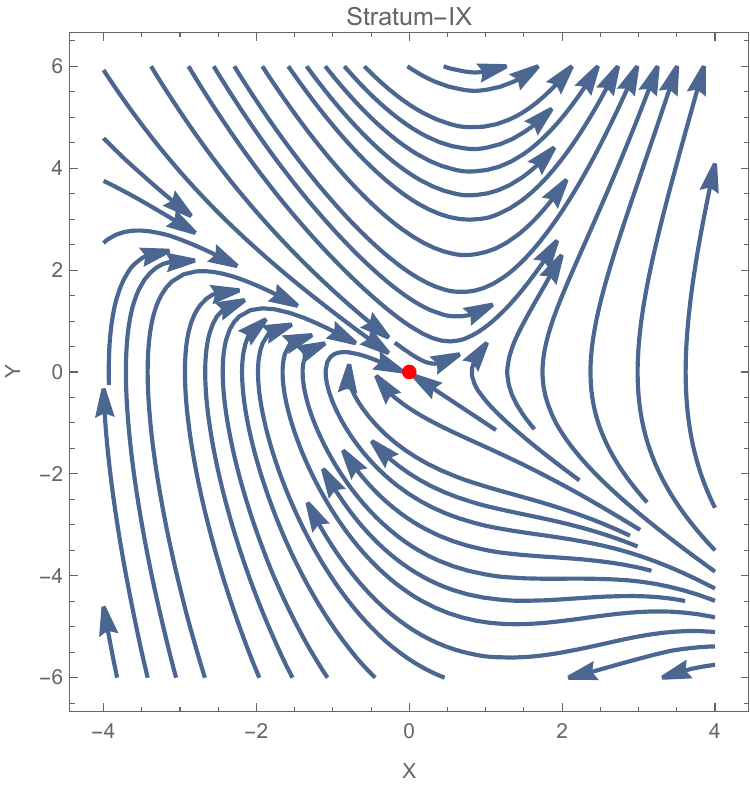}\hfill
  \includegraphics[width=0.32\textwidth]{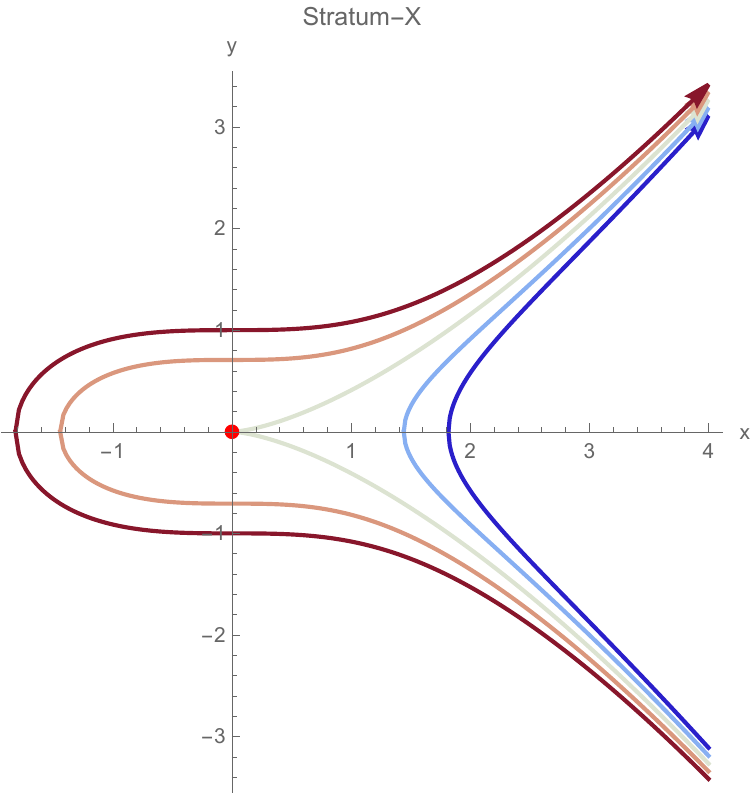}

  \caption{All phase diagrams of the nine strata of Fig. \ref{fig-6}.}
  \label{fig-7}
\end{figure}

There are three important bifurcations for the FL-cusp (i.e., quadratic Bogdanov-Takens) problem (\ref{va-tot-A}) in this case, namely,
\begin{itemize}
  \item A saddle-node bifurcation along the $\mu_2$-axis. This leads to the creation of new parameter-dependent equilibria (when $\mu_1>0$, there are no such equilibria) apart from the origin.
  \item A Hopf bifurcation along the curve $H$ (no relation to the Hubble parameter) given by $\mu_1=-\mu_2^2$. This leads to the appearance of a limit cycle.
  \item A global bifurcation associated to an orbit homoclinic to one of the new equilibria (the saddle) leading to the destruction of the limit cycle.
\end{itemize}

All these features of the bifurcation diagram are insensitive to the inclusion of higher-order terms in the unfolding. We further note that the limit cycle is unstable in the $(+)$-case, stable in the $(-)$-case. The parameter space curve $\gamma_2$, responsible for the creation of the limit cycle in the central diagram of row-2 in Fig. \ref{fig-7}, starts at the origin (i.e., at de Sitter space or the ESU), and is given by $\mu_1=-(49/25)\mu_2^2$. As we trace this curve towards the origin, the homoclinic loop orbit shrinks and disappears, a reason why such a phenomenon in absent in linearized stability studies. Also no  limit cycles exist outside the region III between the two curves $\gamma_1,\gamma_2$. Note also the cusp with `apex' at the fixed point in the stratum X in Fig. \ref{fig-7}, with the two separatrices given by Andronov's theorem 5.3, Case $A_1$(2).

\subsubsection{Codimension-3}\label{type-2}
As it is perhaps expected from Andronov's theorem, Sect. \ref{ant-sect}, there are three topologically inequivalent cases associated with the family (\ref{va-tot4-A}), the \emph{saddle case}  corresponding to the $(+)$-case in Fig. \ref{fig-8} (left), the \emph{elliptic (domain) case} corresponding to the $(-)$-case with $B\geq 2\sqrt{2}$ in Fig. \ref{fig-8} (right), and the \emph{focus case} for the $(-)$-case with $0<B<2\sqrt{2}$ Fig. \ref{fig-9}.
The bifurcation diagrams for the codimension-3 FL cubic, namely (\ref{va-tot5}),
\begin{equation}\label{va-tot4-A}
    \begin{split}
    \dot{x}&=y,\\
    \dot{y}&=\mu_1+\mu_2 x+\mu_3y \pm x^3 +B xy,
    \end{split}
\end{equation}
can be found  in \cite{du1}, figs. 3, 4 for the saddle case, figs. 7, 8 for the focus case, and figs. 13, 14 for the elliptic case (we note that the parameter space in all three cases has a conic structure and the shown diagrams are projections on different faces of a cube centered at the origin of the parameter space).  This is the dynamically richest case among all versally unfolded FL cosmologies, and we conclude this Section by giving a few hints about the dynamics involved in this case, closely following Refs. \cite{du1,du2}.
\begin{figure}[htp]
\centering
  \includegraphics[width=0.5\textwidth]{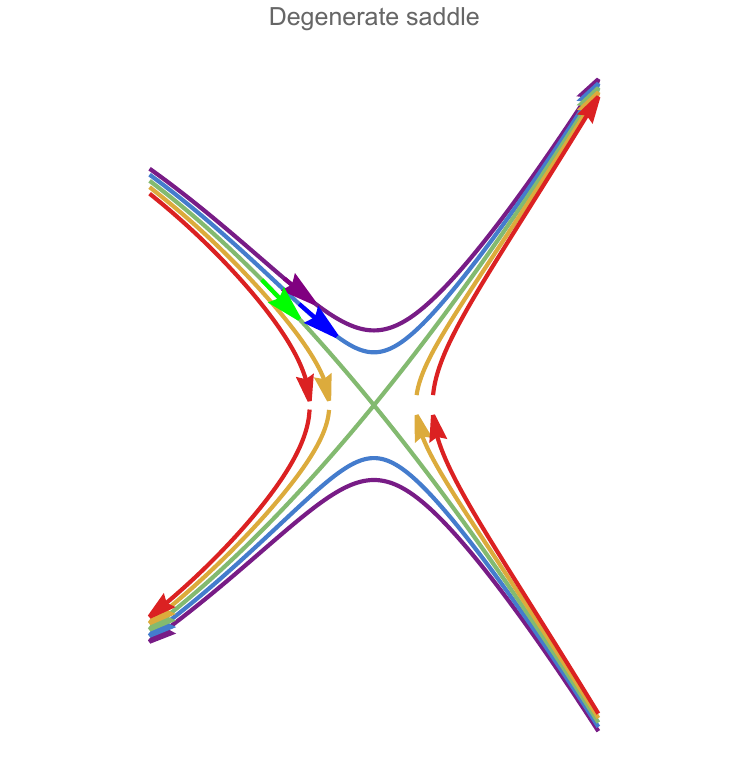}\hfill
  \includegraphics[width=0.5\textwidth]{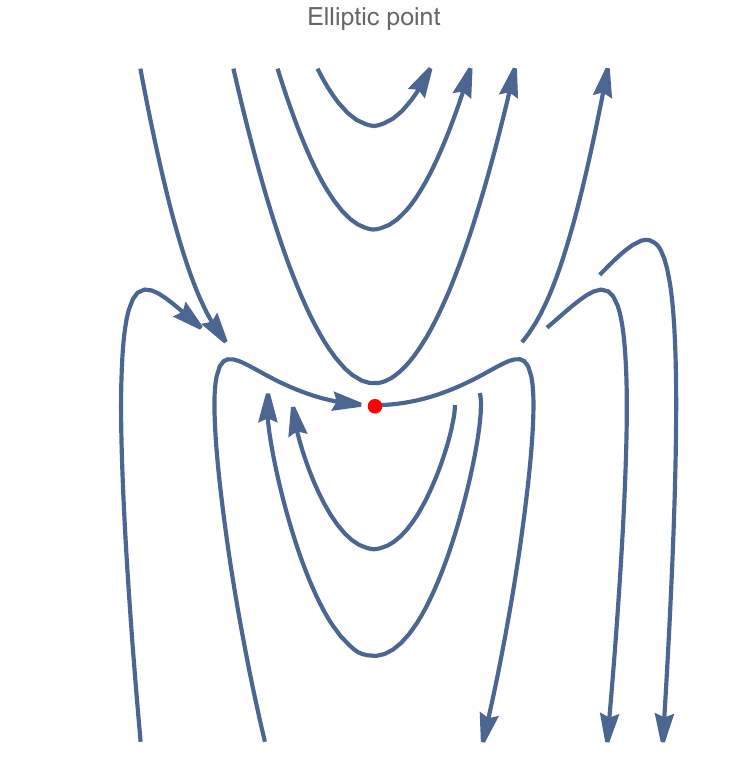}
\caption{The degenerate saddle and elliptic cases}\label{fig-8}
\end{figure}

\begin{figure}
\centering
  \includegraphics[width=0.5\textwidth]{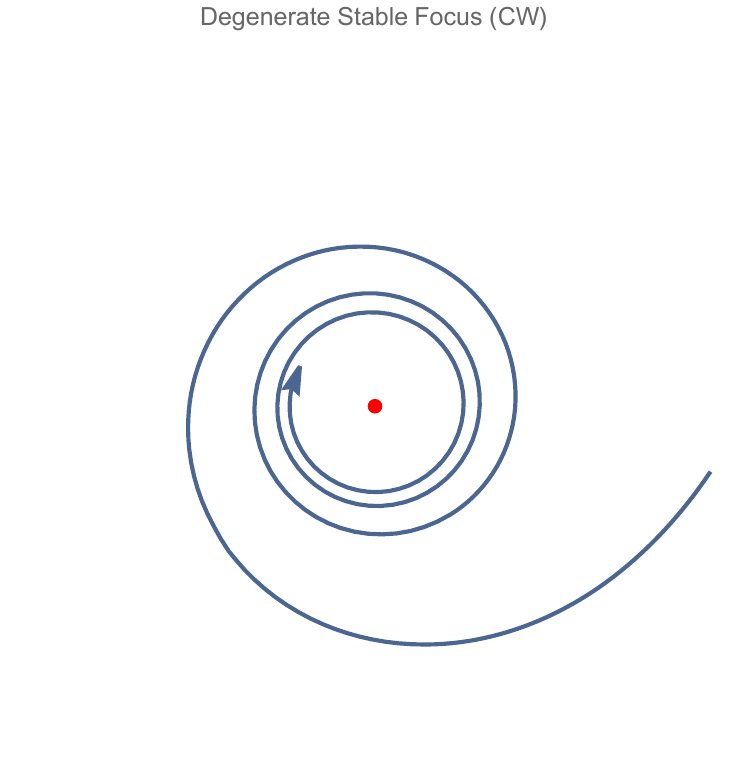}\hfill
  \includegraphics[width=0.5\textwidth]{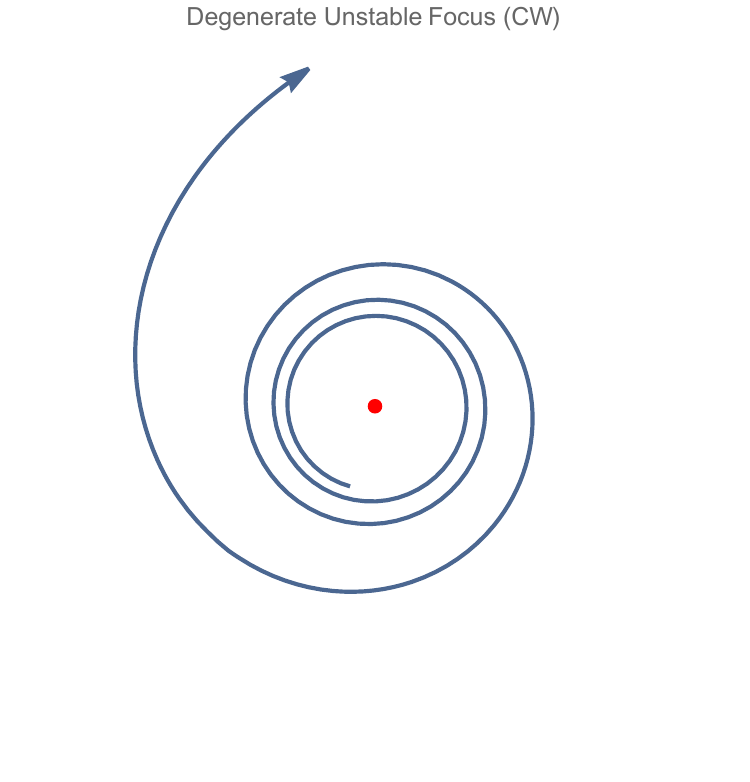}
\caption{The degenerate focus case in the clockwise (CW) direction. }\label{fig-9}
\end{figure}
\begin{figure}
\centering
  \includegraphics[width=0.32\textwidth]{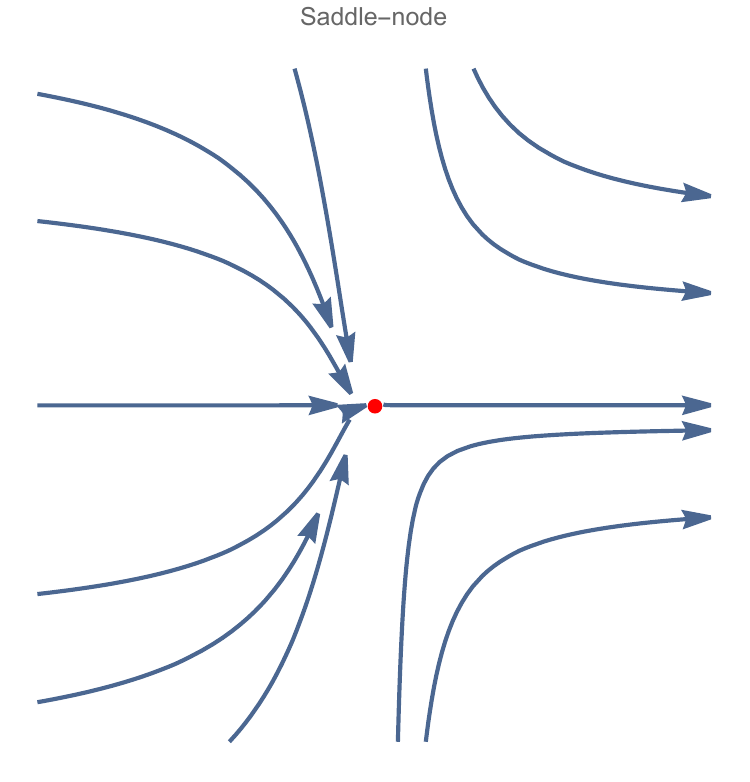}\hfill
  \includegraphics[width=0.32\textwidth]{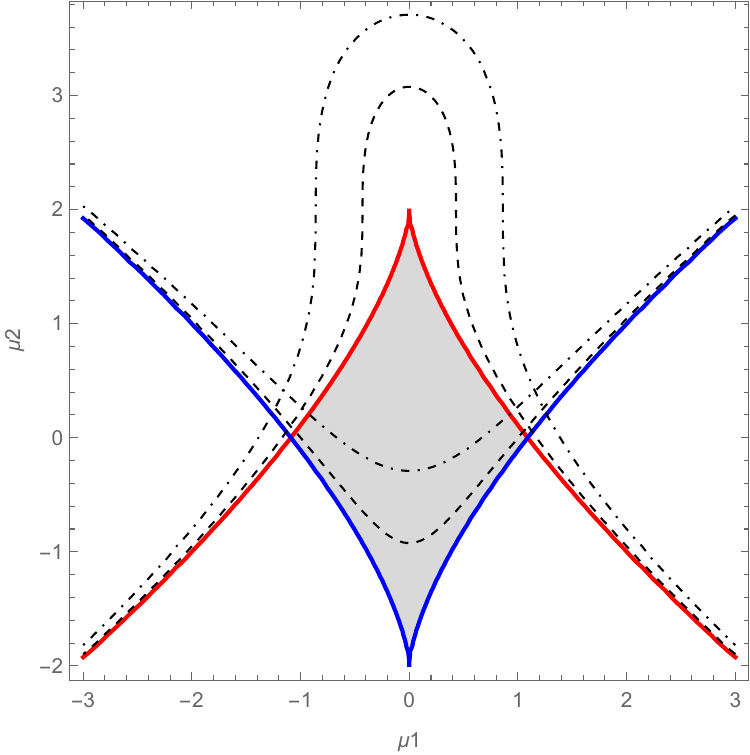}
  \includegraphics[width=0.32\textwidth]{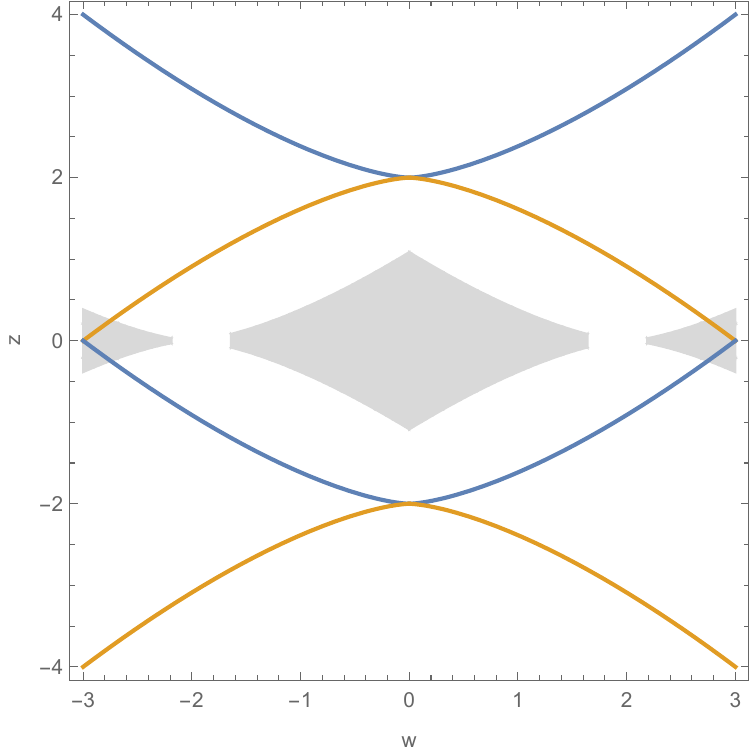}
\caption{A saddle-node-point structure (left), the two cuspidal `vertical lips' enclosing the rhombic region (middle), and the resulting `parallel' saddle-node lines separating the bifurcation diagrams of the FL  cubic into an interior and an exterior region (right).}\label{fig-10}
\end{figure}

The bifurcation diagrams in all cases  are very complex and dependent on specific features that appear in each of the  three cases of the problem. But there is a common theme in all three cases, namely, the existence of the cuspidal `saddle-node' curves shown in \ref{fig-10}, middle and right. The condition  for the equilibria of the system (\ref{va-tot4-A}) is the cubic equation,  $\mu_1+\mu_2 x \pm x^3=0$. For the $(+)$-case, the discriminant is  $\Delta_+=-4\mu_2^3 -27\mu_1^2$, while $\Delta_-=4\mu_2^3 -27\mu_1^2$ for the $(-)$-case in the cubic equation. The two bifurcation sets, $\Delta_\pm=0$, for the two pairs of cusps, stratify the parameter space into an \emph{internal region} where there are three reals roots to \emph{both} cubic  equations (and so in all three cases: saddle, focus, and elliptic),  and an \emph{external region} otherwise.

The two inequalities carve out the  cusp-shaped, `rhombic',  region in parameter space shown in Fig. \ref{fig-10} middle. While the two bifurcation sets given by the cusps meet at the origin of the parameter space $(\mu_1,\mu_2)$, the thought strikes one that they can be arranged so that to show more clearly their common region with the three real roots. This can be done in the following way, cf. Fig. \ref{fig-10} middle: We `lift' the two cusps apart by a vertical shift $\Delta$, so that $\pm x^3$-cubic solves the equation, $\Delta_\pm(\mu_1,\mu_2\mp\Delta)=0$,  that is we define  so that the upper red cusp (from the $(+)$-case) sits at $\mu_2=+\Delta$, the lower blue cusp (from the $(-)$-case) sits at $\mu_2=-\Delta$. The gray `vertical-lips'\footnote{This terminology comes from \cite{du2}.} region in between is exactly the set of $(\mu_1,\mu_2)$ where both cubics have three real roots (i.e. $\Delta_+\leq 0$ and $\Delta_-\geq 0$). In this way,  the two $\Delta=0$-cusps no longer meet at $\mu_2=0$ but at $\mu_2=+\Delta$ (red) and $\mu_2=-\Delta$ (blue).  Also shown are  some parallel iso-discriminant (level-)curves $\{(\mu_1,\mu_2)|\Delta(\mu_1,\mu_2)=C\}$, i.e., contours of constant discriminant of the  $\pm$-cubics.

Our construction of the `lens' (or `vertical lips') region of Fig. \ref{fig-10} middle is completely equivalent to that followed in Refs. \cite{du1,du2}. In these works, the authors choose the parabolic  parametrization, drawing instead the `semi-cubic parabolas' $\displaystyle \mu_1 = \pm \sqrt{\frac{4}{27}}\;\bigl|\mu_2 \mp \Delta\bigr|^{3/2}$ directly, whereas in our `rhombic approach' we shade the implicit discriminant inequalities $\displaystyle
\Delta_+(\mu_1,\mu_2 - \Delta) = 4(\mu_2 - \Delta)^3 + 27\,\mu_1^2 \;\le\; 0,\,\,
\Delta_-(\mu_1,\mu_2 + \Delta) = 4(\mu_2 + \Delta)^3 - 27\,\mu_1^2 \;\ge\; 0
$. The comparison of the two approaches, and their complete equivalence, is shown in Fig. \ref{fig-10} right.

The following further aspects of the bifurcational dynamics of (\ref{va-tot4-A}) proved in Refs. \cite{du1,du2} are very important:
\begin{itemize}
  \item The versal unfolding (\ref{va-tot4-A}) is shown to be topologically equivalent to the general topological normal forms considered in Ref. \cite{du2} by the method of \emph{principal rescaling}, cf. \cite{du1,du2}.
  \item There are further bifurcation curves in the internal and external regions apart from the saddle-node curves discussed above. These include Hopf bifurcation curves, Bogdanov-Takens points,  saddle connections (homoclinic or heteroclinic), and various loops and tangencies of more complicated types.  There are codim-1 strata (surfaces) and codim-2 strata (curves) transversal to the spheres $S:\mu_1^2+\mu_2^2+\mu_3^2=\varepsilon^2$, for $\varepsilon$ small. The bifurcation sets intersect such spheres $S$.
  \item The system has \emph{at most one limit cycle} in the elliptic case, and in suitable parameter regions in the other two cases.

  \item There are further structures \emph{at infinity}, for example, limit cycles surrounding the three singular points in the focus case, and a separatrix from the saddle point to infinity in the elliptic case.
\end{itemize}
The reduction for the FL equations effected above leading to  the versal unfolding (\ref{va-tot4-A}) together with the results just mentioned lead to a novel cosmological dynamics having little in common not only with the known cosmological solutions  obtained by an analysis of the original FL equations, but also with all the previous three bifurcation cases considered  in this paper. The main novelties here  are related to  new equilibria and cycles, their different  creation and annihilation properties, as well as the several possible distinct ways for their transfigurations.

\subsection{Rough solutions}\label{rough}
Let us finally discuss the meaning of the Friedmann equation (\ref{ds3}) in connection with the  versal  unfoldings (\ref{va-tot}), (\ref{va-tot4}), (\ref{va-3rd}), and (\ref{va-2/3}). As we showed in previous Sections, in any of the versal unfoldings, all relations between the unfolding parameters and the phase portrait variables $x,y$ are always smooth, even analytic. That is the versal families are smooth functions of their arguments. This is evident from the constructions and final forms of all versal families we obtained.

The new coordinates $x,y$, however, are,  as \emph{solutions} of the versal equations,  \emph{non-smooth} functions of  the distinguished parameters $\gamma, \Lambda$, and the  auxiliary parameter $\mu$ being a 2-, or 3-dimensional vector with coordinates $\mu_1, \mu_2$, and $\mu_3$\footnote{The reader should not confuse this with the smoothness of each phase portrait or of their metamorphoses.}. Therefore if we write the right-hand-sides of the versal unfolding equations (\ref{va-tot}), (\ref{va-tot4}), (\ref{va-3rd}), and (\ref{va-2/3}) in the symbolic form $(f,g)$, where,
\be\label{hr}
f,\,\,g=f(H,\rho,\gamma,\Lambda, \mu_1, \mu_2, \mu_3),\,\,g(H,\rho,\gamma,\Lambda, \mu_1, \mu_2, \mu_3),
\ee
to signify their dependence on the variables $H,\rho$ and the various parameters, we can comment on the smoothness properties of the final solutions in terms of the original $H,\rho$ variables. With regard to the equations,
\be \label{rou1}
f=g=0,
\ee
in the variables (\ref{hr}), it is similarly rarely possible to \emph{further solve} for $H$ or $\rho$ - let alone the scale factor $a$ - to express them in \emph{a smooth or even a continuous manner} in terms of the parameters $\gamma,\Lambda, \mu_1, \mu_2, \mu_3$. This is so because of the powers present there, some kind of non-smooth, or discontinuous  behaviour will always be present in the functions $H=H(\gamma,\Lambda, \mu_1, \mu_2, \mu_3)$, and $\rho=\rho(\gamma,\Lambda, \mu_1, \mu_2, \mu_3)$, or in their derivatives\footnote{We note here the subtle fact that all \emph{standard} solutions of the FL equations are \emph{smooth as functions of the} $\gamma,\Lambda$, as it clearly follows from the Friedmann equation (\ref{ds3}) and by inspecting  the  solutions (\ref{ds-eq}), (\ref{esu-eq}), the $\gamma,\Lambda$ being constants.}.

Hence, we arrive at the important conclusion that typically our solutions obtained from the versal unfoldings (\ref{va-tot}), (\ref{va-tot4}), (\ref{va-3rd}), and (\ref{va-2/3}) will be non-smooth. In addition, this will also lead to behaviours which are typically inhomogeneous  in the functions $H, \rho$ and the scale factor $a$ because of the following effect. In general, derivatives of these functions with respect to the unfolding parameters are of a `non-local' character in terms of the phase points because they relate phase portraits at two different $\mu$-values, which are typically topologically inequivalent due to the bifurcations. For instance, this will be the case in parameter space when passing from a stratum corresponding  to  one equilibrium to another corresponding to two equilibria, for example in a saddle-node bifurcation.

It is typical in any bifurcation to expect a jump, a discontinuity, or in general non-smooth behaviour due to the change in the number of solutions upon parameter change. This effect therefore leads to expect a `rough' character for all our solutions, in the sense that although we work with smooth, even analytic,  relationships between the variables as it is clearly evident in the constructions of all versal unfoldings, one eventually discovers a lot of non-smoothness and the resulting inhomogeneous behaviour at the end parameter-dependent solutions.

\section{Discussion and outlook}\label{dis}

In this paper, we have taken into account two somewhat subtler aspects of the mathematical structure of the FL equations and based on these we began to deploy  some of their consequences for the nature and the evolution of the unfolded solutions. The first aspect of the structure of the FL equations we exploited here is the fact that $\gamma$ and  $\Lambda$ are two distinguished parameters of the problem which play totally different roles in the evolution of the universe. Secondly,  the FL equations are a degenerate system whose linear part contains two zero eigenvalues (and is therefore either the nilpotent or else the zero matrix). We have shown that the nonlinear terms in the FL equations determine precisely four main degenerate cases to which the original dynamics precisely  reduces. A common characteristic of the unfolded evolution in all four cases is the presence of a number  of novel equilibria, totally distinct from the standard ones (i.e., the Einstein static universe and de Sitter space). These equilibria are parameter-dependent and totally absent  when the parameters are zero.

The presence of symmetry in the FL equations, in particular the $\mathbb{Z}_2$ symmetry $x\to\epsilon x,y\to\delta y, \epsilon,\delta=\pm 1$, and the time symmetry $t\to-t$ also play  important roles in this respect. In fact, the whole dynamics is initially determined by the combined effects of $\gamma$ and the presence or absence of the time symmetry, and the question arises as to how to study these systems without spoiling the degeneracies completely. If the FL equations were an isolated `hyperbolic' system then such degenerate, nongeneric  cases could have been removed by a small perturbation moving the system  into the set of generic cases.

However, in the case of the FL equations a small perturbation of the parametric family (the latter represented say by a curve parameterized by the $\Lambda$) intersects the surface containing the $\Lambda =0$ system transversally (i.e., at nonzero angle) at a nearby point of the degenerate surface. So although every member of the family can be made generic by a small perturbation, it is not possible to achieve this for the whole family at once, as a nongeneric case is avoided at some parameter value only to recur at another.

Therefore in this paper we have striven to calculate the corresponding  codimensions of the degenerate FL systems. We found them equal to 2 or 3 depending on the type of degeneracy, and we have also proceeded to find their bifurcations (both steps are necessary, cf. \cite{ar83}, Section 29 for additional motivation on this process). This in turn leads to a very rich set of bifurcation diagrams for the versal unfoldings of the FL equations. These unfoldings describe and exhaust all possible perturbations of the original FL equations and their solutions sets - the bifurcation diagrams - describe the possible topologically inequivalent kinds of evolution completely.  The resulting solutions are typically non-smooth and this implies  global inhomogeneities present in the character and the  physics of the versal unfoldings.

We have shown the presence of bifurcations that lead to the creation or annihilation of new equilibria other than the de Sitter space or the ESU cases. There are other bifurcations which transfigure solutions to others in a very precise manner. There are also isolated closed orbits (i.e., limit cycles) in the phase spaces associated with these new structures implied by the FL equations. The cycles bifurcate differently than  equilibria depending on codimension. There are also other kinds of connecting orbits between different equilibria and cycles, for example, saddle connections and saddle-node loops which are phenomena of codimension 2, etc. We shall return to this problem in the future with a more detailed description of some of these effects.

Another problem which may lead to interesting results is the extension of the present approach to other contexts (cf. Ref. \cite{cot25a}, for a more thorough discussion of this issue and gravitational bifurcation methods in general). We shall restrict presently to only a few comments about one of the possible ways to study this problem, namely in the context of modified gravity. Suppose one is interested in developing versally unfolded FL cosmology in a modified gravity context, say for some $f(R)$ theory or in a Brans-Dicke context with a fluid with parameter $\gamma$ and a cosmological constant $\Lambda$. We assume that one is able to find a smooth series of changes of the variables in the original equations in that context recasting them near some equilibrium solution to the form of a nonlinear system  for some new variables $X=(x,y)$ of the form,
\be \label{deg-sys-mod}
\dot{X}
=
\left(
  \begin{array}{cc}
    0 & 1\\
    0 & 0 \\
  \end{array}
\right)
X
+
\cdots ,
\ee
where the dots mean \emph{any} nonlinear terms in $X$ (obtained say from an expansion of the vector field $f(x,y)$ defining the problem around the origin). In this case, identical conclusions to those found here in Sections 3-7 will apply in the modified gravity context, and the whole problem there will be precisely reduced to the one discussed here. However, the present results are expected to be generically violated in such contexts, either in the Jordan or Einstein frame formulations of a theory in the context we are now discussing. The main issue is related to the conformal equivalence of these theories to Einstein gravity \emph{with a self-interacting scalar field}. The present of such a scalar field  is expected to disrupt the present conclusions, because the dimensionality of the system  will be increased to become at least three and then one expects the influence of new interactions between the various modes of bifurcation to become important. This has not been done in a gravity context even for the simplest genre of self-interaction. While one may develop gravitational bifurcations independently of frame, another issue arises as to the behaviour of bifurcations under a conformal transformation taking those of one conformal frame to those of another. Of course, one does not  expect to find  a uniform kind of bifurcational behaviour along these lines  \emph{across} the space of modified theory lagrangians. We believe that these remarks will be useful in a wider context of investigations for gravitational bifurcation theory than in the problem developed presently.

\addcontentsline{toc}{section}{Acknowledgments}
\section*{Acknowledgments}
The author is grateful to two anonymous referees for their constructive comments which led to an improvement of this work. This research  was funded by RUDN University,  scientific project number FSSF-2023-0003.

\addcontentsline{toc}{section}{References}

\end{document}